\newcommand{\rev}[1]{{\color{Black} #1}}
\newcommand{\revv}[1]{{\color{Black} #1}}
\begin{document}
\title{Selective Inference for Effect Modification via the Lasso}
\author{Qingyuan Zhao, Dylan S. Small, Ashkan Ertefaie}
\address{University of Cambridge, University of
  Pennsylvania and University of Rochester}
\email{qyzhao@statslab.cam.ac.uk}
\date{\today}
\thanks{We than Jonathan Taylor and
  Xiaoying T.\ Harris for helpful discussion.}
\maketitle

\begin{abstract}
  Effect modification occurs when the effect of the treatment
on an outcome varies according to the level of other covariates and often has
important implications in decision making. When there
are tens or hundreds of covariates, it becomes necessary to use the
observed data to select a simpler model for effect modification and
then make valid statistical inference. We propose a two stage
procedure to solve this problem. First, we use Robinson's
transformation to decouple the nuisance parameters from the treatment
effect of interest and use machine learning algorithms to estimate the nuisance parameters. Next, after plugging in the estimates of the
nuisance parameters, we use the Lasso to choose a low-complexity model for
effect modification. Compared to a full model consisting of all the
covariates, the selected model is much more interpretable. Compared
to the univariate subgroup analyses, the selected model greatly reduces
the number of false discoveries. We show that the conditional
selective inference for the selected model is asymptotically valid given
the rate assumptions in classical semiparametric regression. Extensive
simulation studies are conducted to verify the asymptotic results and
an epidemiological application is used to demonstrate the method.

%%% Local Variables:
%%% mode: latex
%%% TeX-master: t
%%% End:

\end{abstract}

\section{Introduction}
\label{sec:introduction}

When analyzing the causal effect of an intervention, effect
modification (or treatment effect heterogeneity) occurs when the magnitude of the causal effect varies as
a function of other observed covariates. Much of the causal inference
literature focuses on statistical inference about the average causal
effect in a population of interest. However, in many applications, it
is also important to study how the causal effect varies in different
subpopulations, i.e.\ effect modification, for reasons
including personalizing treatments in precision medicine
\citep{murphy2003optimal,ashley2015precision},
generalizing the causal
finding to different populations \citep{stuart2011use}, better
understanding of the causal mechanism \citep[page
96]{grobbee2009clinical}, and making inference less sensitive to
unmeasured confounding bias
\citep{hsu2013effect,ertefaie2017quantitative}.

A natural way of identifying effect modification is \emph{subgroup analysis},
in which observations are stratified based on the
covariates. More generally, one can study effect modification by
including interaction terms between the treatment and the covariates
in an outcome regression. Depending on whether the subgroups or
interactions are chosen before or after any examination of the data,
the statistical analysis can be \emph{a priori} or \emph{post hoc}
\citep{wang2007statistics}.
A priori subgroup analyses are free of selection bias and are
frequently used in clinical trials and other observational
studies. They do discover some effect modification, often convincingly, from
the data, but since the potential effect modifiers are determined a
priori rather than using the data, many real effect modifiers may
remain undetected.

With the enormous amount of data and covariates
being collected nowadays, discovering effect modification by post hoc
analyses has become a common interest in several applied fields,
including precision medicine \citep{lee2016discovering,pickkers2017towards,jiang2018genome},
education \citep{schochet2014understanding}, political science
\citep{imai2013estimating,grimmer2017estimating}, economics
\citep{angrist2004treatment}, and online digital
experiments \citep{taddy2016nonparametric}. Post hoc analysis was
originally treated as a multiple comparisons problem in the works of
\citet{tukey1949comparing} and \citet{scheffe1953method}, where a
single categorical effect modifier is considered. However, modern
applications can
easily have tens or hundreds of potential effect modifiers. In this
case, it is impractical to
consider the subgroups exhaustively (for example, there are $2^{30} >
10^{9}$ distinct subgroups with just $30$ binary
covariates). Therefore, there has been an increasing demand for more
sophisticated post hoc methods that can discover effect modification from
the data.

\citet{tukey1991philosophy} discussed thoroughly about the difference
between a priori and post hoc analyses and why they are both
important. In page 103, he wrote:
\begin{quote}
Exploration---which some might term ``data dredging''---is quite
different from ``exogenous selection of a few comparisons''. Both have
their place. We need to be prepared to deal with either.
\end{quote}
For Tukey, data dredging (post hoc subgroup analyses in our
context) seems to mean a \emph{full} exploration of the data, so it
can only give interesting hints rather than confident conclusions. In
this paper we will present an effort based on the recently developed
selective inference framework \citep{taylor2015statistical} to
\emph{partially} explore the data and
then make confident conclusions about the interesting hints. Such an
approach was perhaps not
envisioned by Tukey, but he did point out that the degree of
exploration can impact the inference, which is a fundamental idea for
selective inference. In fact, confidence intervals using
his studentized range \citep{tukey1949comparing} are tighter than
Scheff\'{e}'s method because fewer (but more important, as he argued)
comparisons are made.

% Therefore it is becoming increasingly important to select
% effect modifiers using the data and then make valid statistical
% inference.

\subsection{Defining the effect modification problem}
\label{sec:prev-appr-effect}

% When examining the massive literature on effect modification, we find
% a somewhat unsettling disconnection between the statistical
% methodologies and scientific practices.

Depending on the goal of the particular application, there are several ways to
set up the inferential problem of effect modification. One way to
phrase this problem is that we would like an optimal or near-optimal
treatment assignment rule for future experimental objects (e.g.\
patients). Thus the statistical problem is to estimate the probability
that a certain treatment outperforms other treatment(s) given
characteristics of the objects and quantify its uncertainty. This
\emph{optimal treatment regimes} problem has been studied extensively by
\citet{murphy2003optimal,robins2004optimal,manski2004statistical,hirano2009asymptotics,zhao2012estimating,zhang2012robust,ertefaie2014constructing,luedtke2016super,athey2017efficient}
among many others. A closely related but less direct formulation is to
estimate the \emph{conditional average treatment effect} and provide
confidence/credible intervals, see
\citet{robins2008higher,hill2011bayesian,balzer2016targeted,athey2016generalize}
for some recent references.

A third way to define the effect modification problem, which is the
formulation we will take in this paper, is to post questions like: Is
covariate A an effect
modifier? What are the other potential effect modifiers and how
certain are we about them? These questions are frequently raised
in practice to test existing scientific hypotheses, generate new
hypotheses, and gather information for intelligent decision
making. Unfortunately, the previous two approaches are not designed to
answer these questions directly. Although there exist several
proposals of discovering (in Tukey's language, interesting hints about) effect modifying covariates or subgroups
\citep{imai2013estimating,tian2014simple,athey2016recursive}, little
attention has been given to providing inference such as confidence interval
or statistical significance for the selected covariates or subgroups
(in Tukey's language, making confident conclusions).

When giving inference after model selection, the
naive inference ignoring the data dredging is generally
over-confident, but not all applied researchers
are mindful to this danger. For example, in a book on testing and
interpreting interactions in social sciences, \citet[page
105]{aiken1991multiple} recommended to drop
insignificant interactions from the regression model especially if
they are not expected by the investigator, but the authors did not
mention that the subsequent statistical inference would be
corrupted. Such a suggestion can also be found in another highly cited
book by \citet[page 361]{cohen2003applied}:
\begin{quote}
We then use the products of these main
effects collectively to test the significance of the interaction,
\ldots. If the interaction term turns out to be
significant, then the regression coefficients from the full model
including the interaction should be reported.
\end{quote}
Empirical studies that ignore the danger of cherry-picking the
interaction model can be found even in top medical journals
\citep[e.g.][]{sumithran2011long,zatzick2013randomized}. Other books
such as \citet[Section 10.3.1]{weisberg2005applied} and \citet[Section
5.3.3]{vittinghoff2011regression} warned that ``significance (in the
selected model by the naive inference) is overstated'' and ``exploratory
analyses are susceptible to false-positive findings'', but no
practical solution was given. A temporary fix used in practice is to
test each potential
interaction separately with all the main effects
\citep[e.g.][]{ohman2017clinically}, but this method may find too many
inconsequential covariates which are merely correlated with the actual
effect modifiers.

To fill this gap, we propose
a method that combines classical semiparametric regression
\citep[c.f.][]{robinson1988root,van2000asymptotic,li2007nonparametric}
with recent advances in post-selection inference for high-dimensional regression
\citep[c.f.][]{berk2013valid,lee2013exact,fithian2014optimal,tian2015selective,rinaldo2016bootstrapping}.
The basic idea is that the semiparametric regression is used to remove
confounding and increase the power of discovering treatment effect variation,
and the post-selection inferential tools are then used to ascertain
the discovered effect modifiers. Although this idea is generally
applicable to other selective inference methods, in this paper we will
focus on the selective inference framework developed in \citet{lee2013exact}
where the effect modification model is selected by the lasso
\citep{tibshirani1996regression}.

Our method can also be used in non-causal regression problems when the
interest is on the interaction terms between some low-dimensional
primary variables (not necessarily a treatment) and some high-dimensional
covariates. Examples include gene-environment interaction
\citep{crossa2010prediction,biswas2014detecting}, interaction between species
traits and environment in ecology \citep{brown2014fourth},
and subgroup-specific reactions to political issues when modeling electoral
choices \citep{mauerer2015modeling}. In these examples, the applied
researchers often want to make statistical inference for a selected
interaction model.

In the rest of the Introduction, we shall give a brief overview of
our proposal with minimal mathematical details. Since our
proposal of combining Robinson's transformation and selective
inference is very straightforward, a practitioner should be able to
implement this procedure after reading the Introduction and may
skip the theoretical details in the next few Sections until the
applied example in \Cref{sec:appl-obes-elev}.

\subsection{Causal model and the tradeoff between accuracy and interpretability}
\label{sec:causal-model-effect}

We first describe the causal model used in this paper. Suppose we observe
i.i.d.\ variables $\{\mathbf{X}_i,T_i,Y_i\}_{i=1}^n$ where the vector
$\mathbf{X}_i \in \mathcal{X} \subset \mathbb{R}^p$ contains
covariates measured before the treatment, $T_i \in \mathcal{T}$ is the
treatment assignment, and $Y_i \in \mathbb{R}$
is the observed continuous response. Let $Y_i(t)$ be the potential
outcome (or counterfactual) of $Y_i$ if
the treatment is set to $t \in \mathcal{T}$. Throughout this paper we
shall assume $Y_i = Y_i(T_i)$ (consistency of the observed outcome) and the usual unconfoundedness and positivity assumptions in
causal inference; see \Cref{sec:infer-fixed-model} for more
detail. We allow our dataset to come from a randomized
experiment (where the distribution $T_i|\mathbf{X}_i = \mathbf{x}$ is known)
or an observational study (where the distribution $T_i|\mathbf{X}_i =
\mathbf{x}$ must be estimated from the data).

We assume a nonparametric model for the potential outcomes,
\begin{equation}
  \label{eq:causal-model}
  Y_i(t) = \eta(\bm{X}_i) + t \cdot \Delta(\bm{X}_i) + \epsilon_i(t),~i=1,\dotsc,n.
\end{equation}
Here $\eta$ and $\Delta$ are functions defined on $\mathcal{X}$ and
$\mathrm{E}[\epsilon_i(t)|\mathbf{X}_i] = 0$. Our model
\eqref{eq:causal-model} is very general. It is in fact saturated if
the treatment $T_i$ is binary, i.e.\ $\mathcal{T} = \{0, 1\}$. In this
case, $\Delta(\mathbf{x}) = \mathrm{E}[Y_i(1) - Y_i(0) |
\mathbf{X}_i = \mathbf{x}]$ is commonly referred to as the conditional average
treatment effect (CATE). When the treatment is continuous, i.e.\ $\mathcal{T}
\subseteq \mathbb{R}$ (for example
dosage), model \eqref{eq:causal-model} assumes the interactions between the
treatment and the covariates are
linear in the treatment but possibly nonlinear in the covariates.
In causal inference, $\Delta(\mathbf{x})$ is the parameter of interest
(for example, $\mathrm{E}[\Delta(\mathbf{X})]$ is the average treatment
effect), whereas $\eta(\mathbf{x})$ is regarded as an
infinite-dimensional nuisance parameter. We say there is \emph{effect
  modification} if the
function $\Delta(\mathbf{x})$ is not a constant.

When the dimension of $\mathbf{X}$ is high, there is a fundamental tradeoff
between estimation accuracy and interpretability of
$\Delta(\bm{x})$. On one extreme (the rightmost column in
\Cref{tab:tradeoff}), one could use flexible
machine learning methods to estimate $\Delta(\bm{x})$, which is
important when the goal is accurate prediction (e.g.\ for the purpose of
personalizing the treatment). However, such models are generally very difficult to
interpret \citep[see e.g.][]{zhao2017causal}. For example, in random
forests it is challenging to even define a good notion of
variable importance \citep{strobl2007bias}. The full linear
model $\Delta(\bm{x}) \approx \alpha+\bm{x}^T\bm{\beta}$ suffers
from the same interpretability problem when the dimension of $\bm{x}$
is more than just a few. % The $j$-th entry of $\bm{\beta}$ is how much
% the treatment effect changes when $j$-th covariate moves up $1$ unit
% and all other covariates are held fixed, but in reality the covariates are
% almost always correlated.
Moreover, important effect modifiers may be
masked by noise covariates.

On the other extreme (the leftmost column in \Cref{tab:tradeoff}), one
could run marginal regressions to test if (estimated) $\Delta(\bm{x})$ is
correlated with each covariate. However, this method usually discovers too
many false positive covariates in the sense that they are no longer
associated with $\Delta(\bm{x})$ after conditioning on the actual
effect modifiers. For example, in our example in
\Cref{sec:appl-obes-elev}, the most probable effect
modifiers are gender and age. However, univariate regressions also
find several other covariates significant, including marital
status and whether the person had arthritis, heart attack, stroke, and
gout. This is most likely due to their strong correlation with age.

Following the ideas in \citet{berk2013valid} and \citet{lee2013exact}, in this
paper we propose to use a linear submodel to
approximate the treatment effect, $\Delta(\bm{x}) \approx
\alpha_{\hat{\mathcal{M}}}+\bm{x}_{\hat{\mathcal{M}}}^T\bm{\beta}_{\hat{\mathcal{M}}}$,
where the subset $\hat{\mathcal{M}} \subseteq \{1,\dotsc,p\}$ is
selected using the data. We argue that a low-dimensional linear model
often yields the desirable tradeoff between accuracy and
interpretability by selecting important effect modifiers,
especially if the goal is to discover a few effect modifiers for
further investigation or to select covariates to personalize
the treatment. In our applied example \Cref{sec:appl-obes-elev} that
considers the effect of being overweight on systematic inflammation, our method
selected the linear submodel with gender, age, stroke, and gout to
approximate $\Delta(\bm{x})$. After adjusting for model selection,
stroke and gout become non-significant in the submodel. \Cref{tab:tradeoff}
gives a comparison of the
strengths and weaknesses of using different statistical models to approximate
$\Delta(\bm x)$.

\begin{table}[t]
  \centering
  % \begin{tabular}{c|c|c|c|c}
  %   & Model of $\Delta(\bm{x})$ & Accuracy & Interpretability &
  %   Inference of effect modification \\
  %   \hline
  %   Univariate model & $\alpha_j + x_j^T \beta_j$ & Poor &
  %   Very good & Easy, but too many false positives \\
  %   \hline
  %   Selected submodel &
  %   $\alpha_{\hat{\mathcal{M}}}+\bm{x}_{\hat{\mathcal{M}}}^T\bm{\beta}_{\hat{\mathcal{M}}}$
  %   & Good & Good & Need to consider model selection \\
  %   \hline
  %   Full linear model & $\alpha+\bm{x}^T\bm{\beta}$ & Good & Poor &
  %   Classical semiparametric theory \\
  %   \hline
  %   Machine learning & e.g.\ additive trees & Very good & Very poor
  %   & No clear objective
  % \end{tabular}
  \renewcommand{\arraystretch}{1.2}
  \begin{tabular}{|c|c|c|c|c|}
    \hline
    & Univariate model & {\bf Selected submodel} & Full linear model &
    Machine learning \\
    \hline
    Model of $\Delta(\bm{X}_i)$ & $\alpha_j + X_{ij}^T \beta_j$ &
    $\alpha_{\hat{\mathcal{M}}}+\bm{X}_{i,\hat{\mathcal{M}}}^T\bm{\beta}_{\hat{\mathcal{M}}}$
    & $\alpha+\bm{X}_i^T\bm{\beta}$ & e.g.\ additive trees \\
    \hline
    Accuracy & Poor & {\bf Good} & Good & Very good \\
    \hline
    Interpretability & Very good & {\bf Good} & Poor & Very poor \\
    \hline
    \multirow{2}{*}{Inference} & Easy, but many & {\bf Need to consider}
    & Semiparametric & \multirow{2}{*}{No clear objective} \\
    &  false positives & {\bf model selection} & theory & \\
    \hline
  \end{tabular}
  \caption{Tradeoff of accuracy and interpretability of different
    models of effect modification. In the case of high-dimensional
    covariates, machine learning and full linear model approximate
    $\Delta(\mathbf{x})$ more accurately but are difficult to
    interpret. Univariate regressions find the covariates correlated
    with $\Delta(\mathbf{x})$ but may end up with false positives that
  are no longer correlated with $\Delta(\mathbf{x})$ after
  conditioning on other covariates. The selected submodel approach
  proposed in this paper is an attractive trade-off between accuracy
  and interpretability. See \Cref{sec:when-should-select} for more discussion.}
  \label{tab:tradeoff}
\end{table}

\subsection{Our proposal}
\label{sec:our-proposal}

In this paper we will select the linear submodel of effect modification
using the lasso \citep{tibshirani1996regression}, which has been
shown to be very effective at selecting relevant variables in
high-dimensional regression
\citep{zhao2006model,hastie2009elements,buhlmann2011statistics}. To
illustrate the proposal, for a moment let's
assume $\eta(\bm{x}) = 0$ for all $\bm x \in \mathcal{X}$ and $T_i
\equiv 1$, so \eqref{eq:causal-model} becomes a conditional
mean model:
\begin{equation} \label{eq:illustration-model}
Y_i = \Delta(\bm{X}_i) + \epsilon_i,~i=1,\dotsc,n.
\end{equation}
We can select a small linear submodel by running the following lasso
regression with a prespecified regularization parameter $\lambda$,
\begin{equation} \label{eq:illustration-lasso}
\underset{\alpha,\bm{\beta}}{\mathrm{minimize}}~ \sum_{i=1}^n (Y_i -
\alpha - \mathbf{X}_i^T \bm{\beta})^2 + \lambda \|\bm{\beta}\|_1.
\end{equation}
Let the selected model $\hat{\mathcal{M}} \subset
\{1,\dotsc,p\}$ be the positions of non-zero entries in the solution
to the above problem.
\citet{lee2013exact} derived an exact inference of the regression parameter
$\bm{\beta}^{*}_{\hat{\mathcal{M}}}$ where
$\alpha^{*}_{\hat{\mathcal{M}}}+\bm{X}_i^T\bm{\beta}^{*}_{\hat{\mathcal{M}}}$
  is the ``best submodel approximation'' of $\Delta(\bm{X}_i)$ in
  Euclidean distance\rev{, assuming the design matrix is fixed}. Thus \rev{
  the parameter $\bm{\beta}^{*}_{\hat{\mathcal{M}}}$ here depends on
  the randomness in $\bm X$ in a random design setting. Furthermore,}
$\bm{\beta}^{*}_{\hat{\mathcal{M}}}$ also depends on the selected
model $\hat{\mathcal{M}}$ \rev{and thus inherits the randomness in the
  model selection procedure}\citep{berk2013valid}.

  Based on a pivotal statistic
  obtained by \citet{lee2013exact} and by conditioning on the
  selection event, it is possible to form valid confidence intervals
  for the entries of $\bm{\beta}^{*}_{\hat{\mathcal{M}}}$ adjusting
  for the fact that the submodel
$\hat{\mathcal{M}}$ is selected using the data. For example,
we can find a selective confidence interval $[D_j^{-},D_j^{+}]$ for
the $j$-th element of $\bm{\beta}^{*}_{\hat{\mathcal{M}}}$ that satisfies
\begin{equation} \label{eq:stie}
  \mathrm{P}\Big(\big(\bm{\beta}^{*}_{\mathcal{M}}\big)_j \in
  [D_j^{-},D_j^{+}]\,\Big|\,\hat{\mathcal{M}} = \mathcal{M}\Big) = 1 - q.
\end{equation}
An important consequence is that this guarantees the
control of false coverage rate (FCR), that is
\begin{equation} \label{eq:fcr}
\mathrm{E}\bigg[\frac{\#\big\{1 \le j \le |\hat{\mathcal{M}}|:
    \big(\bm{\beta}^{*}_{\hat{\mathcal{M}}}\big)_j \not\in
    [D^{-}_j,D^{+}_j]\big\}}{\max(|\hat{\mathcal{M}}|,1)} \bigg] \le q.
\end{equation}
FCR is the average proportion of non-covering
confidence intervals and extends the concept of false discovery rate
to estimation \citep{benjamini2005false}. Notice that \eqref{eq:stie}
directly implies \eqref{eq:fcr} given
$\hat{\mathcal{M}} = \mathcal{M}$, and \eqref{eq:fcr} can then be proven by
marginalizing over $\hat{\mathcal{M}}$. See \citet[Lemma
2.1]{lee2013exact} and \citet{fithian2014optimal}.

The main challenge to directly applying selective inference
to the effect modification problem is the
nuisance parameter $\eta(\bm{x})$. In this paper we propose to use the
technique in \citet{robinson1988root} to eliminate this nuisance
parameter. Our proposal is a two-stage procedure. In the first stage,
we introduce two nuisance parameters:
$\mu_y(\mathbf{x}) = \mathrm{E}[Y_i|\mathbf{X}_i = \mathbf{x}]$ and
$\mu_t(\mathbf{x}) = \mathrm{E}[T_i|\mathbf{X}_i =
\mathbf{x}]$, so $\mu_y(\mathbf{x}) = \eta(\mathbf{x}) +
\mu_t(\mathbf{x}) \Delta(\mathbf{x})$ by
\eqref{eq:causal-model} and unconfoundedness. The nonparametric model
\eqref{eq:causal-model} can be rewritten as
\begin{equation}
  \label{eq:transformed-model}
  Y_i - \mu_y(\mathbf{X}_i) = \big[T_i - \mu_t(\mathbf{X}_i)\big] \cdot
  \Delta(\mathbf{X}_i) + \epsilon_i,~i=1,\dotsc,n.
\end{equation}
We have eliminated $\eta(\bm{x})$ from the model but introduced two
more nuisance parameters, $\mu_t(\bm{x})$ and
$\mu_y(\bm{x})$. Fortunately, these two nuisance functions can be directly
estimated by regression using the pooled data, preferably using some
machine learning methods with good prediction performance as advocated
by \citet{vanderlaan2011} and \citet{chernozhukov2016double} (see
\Cref{sec:assumptions-paper} for more discussion). In the numerical
examples, we will use the random forests \citep{breiman2001random} as they
usually have very competitive prediction accuracy and there is great
off-the-shelf software
\citep{R-random-forest,athey2016generalize}. A common technique to
control the remainder terms in semiparametric estimation is
cross-fitting (i.e.\ sample splitting), where the predicted value (of
the nuisance functions) for the $i$-th data point is fitted using a
subsample of the data that does not contain the $i$-th data point
\citep{schick1986asymptotically,chernozhukov2016double,athey2017efficient,newey2018cross}. For
example, we can split the data into two halves and for the $i$-th data point,
$\mu_y(\bm X_i)$ and $\mu_t(\bm X_i)$ will be estimated using the
other half of the data. This
techinque will be used in our theoretical investigation. Let the
estimated nuisance functions at the $i$-th data point be
$\hat{\mu}_y^{(-i)}(\bm{X}_i)$ and $\hat{\mu}_t^{(-i)}(\bm{X}_i)$. For notational
simplicity, we suppress the dependence on the subsample used to train these models and simply
denote them as $\hat{\mu}_y(\bm{X}_i)$ and $\hat{\mu}_t(\bm X_i)$ below.

In the second stage,
we plug in these estimates in \eqref{eq:transformed-model} and select
a model for effect modification by solving
\begin{equation}
  \label{eq:lasso}
  \hat{\bm{\beta}}_{\mathcal{M}}(\lambda) = \argmin_{\alpha,\,\bm{\beta}_{\mathcal{M}}} \sum_{i=1}^n \Big\{\big[Y_i
  - \hat{\mu}_y(\mathbf{X}_i)\big] - \big[T_i - \hat{\mu}_t(\mathbf{X}_i)\big] \cdot
  (\alpha + \mathbf{X}_{i,\mathcal{M}}^T \bm{\beta}_{\mathcal{M}})\Big\}^2 +
  \lambda \|\bm{\beta}_{\mathcal{M}}\|_1
\end{equation}
with $\mathcal{M} = \{1,\dotsc,p\}$ being the full model in this step. Let
the selected model $\hat{\mathcal{M}}$ be the nonzero entries of
$\hat{\bm{\beta}}_{\{1,\dotsc,p\}}(\lambda)$. Consider the unpenalized least
squares solution $\hat{\bm{\beta}}_{\hat{\mathcal{M}}} =
\hat{\bm{\beta}}_{\hat{\mathcal{M}}}(0)$ using the selected model
$\hat{\mathcal{M}}$, which belongs to the ``relaxed lasso'' class of estimator
$\hat{\bm{\beta}}_{\hat{\mathcal{M}}}(\lambda)$ indexed by $\lambda$
\citep{meinshausen2007relaxed}. Naturally, $\hat{\bm{\beta}}$
estimates the following (weighted) projection of $\Delta(\mathbf{x})$
to the submodel spanned by $\mathbf{X}_{\cdot,\hat{\mathcal{M}}}$,
\begin{equation} \label{eq:beta-star-causal}
\bm{\beta}_{\hat{\mathcal{M}}}^{*} =
\bm{\beta}_{\hat{\mathcal{M}}}^{*}(\mathbf{T},\mathbf{X}) =
\argmin_{\alpha,\,\bm{\beta}_{\hat{\mathcal{M}}}} \sum_{i=1}^n
\big[T_i - \mu_t(\mathbf{X}_i)\big]^2\big[\Delta(\mathbf{X}_i) - \alpha -
\mathbf{X}_{i,\hat{\mathcal{M}}}^T \bm{\beta}_{\hat{\mathcal{M}}}\big]^2.
\end{equation}
\rev{This can be interpreted as the best linear approximation of
  $\Delta(\mathbf{x})$ in the observed sample if observation $i$ is
  weighted by $\big[T_i - \mu_t(\mathbf{X}_i)\big]^2$.}
However, since the submodel $\hat{\mathcal{M}}$ is selected using the data, we must
adjust for this fact to obtain the selective distribution of
$\hat{\bm{\beta}}_{\hat{\mathcal{M}}}$. Our main theoretical
contribution in this paper is to show that the pivotal statistic obtained by
\citet{lee2013exact} is asymptotically valid under the standard rate
assumptions in semiparametric regression
\citep[e.g.][]{robinson1988root}. The main
challenge is that the estimation error in $\mu_y(\mathbf x)$ and
$\mu_t(\mathbf x)$ further complicates the model selection
event and the pivotal statistics.

For practitioners, our proposal can
be easily implemented by using $Y_i - \hat{\mu}_y(\bm X_i)$ as the
response and $(T_i - \hat{\mu}_t(\bm X_i)) \cdot \bm X_i$ as the
regressors in existing softwares for selective inference
\citep{tibshirani2017selectiveinference}.
Notice that it is not necessary to use the same set of covariates to
remove confounding (estimate $\mu_t(\bm x)$ and $\mu_y(\bm x)$) and
make selective inference for effect modification. Thus the
practitioner can simply run selective inference
using $Y_i - \hat{\mu}_y(\bm X_i)$ as the response and $(T_i -
\hat{\mu}_t(\bm X_i)) \cdot \bm Z_i$ as the regressors, where $\bm
Z_i$ are user-specified potential effect modifiers. For notational
simplicity, we will assume $\bm Z_i = \bm X_i$ from this point forward.

The rest of this paper is organized as
follows. \Cref{sec:select-infer-line} reviews the selective inference
in the linear model
\eqref{eq:illustration-model} and \Cref{sec:infer-fixed-model} reviews
the asymptotics of the semiparametric regression estimator
$\hat{\bm{\beta}}_{\mathcal{M}}(0)$ with fixed model $\mathcal{M}$ and
no regularization. \Cref{sec:select-infer-effect1} presents our main
result. \Cref{sec:simulation} verifies
the asymptotic results through simulations and studies the performance
of the selective confidence intervals in finite
sample and high dimensional settings. Readers who are not interested
in the technical details can
skip these Sections and directly go to \Cref{sec:appl-obes-elev},
where we discuss an application of the proposed method to an
epidemiological study. \Cref{sec:discussion} concludes
the paper with some further discussion.

% If $\eta(\bm{x})$ is
% optimized jointly with $\bm{\beta}$, the selection event
% $\{\hat{\mathcal{M}} = \mathcal{M}\}$ becomes intractable.

%%% Local Variables:
%%% mode: latex
%%% TeX-master: t
%%% End:

\section{\rev{Review of} selective inference in linear models}
\label{sec:select-infer-line}

We briefly review selective inference for linear models using the
lasso in \citet{lee2013exact}. \rev{This corresponds to the simplifed
  version \eqref{eq:illustration-model} of our problem} where $\eta(\bm{x}) =
0$ for all $\bm{x} \in \mathcal{X}$ and $T \equiv 1$. First we
\rev{follow \citet{berk2013valid} and \citet{lee2013exact} and}
define the inferential target rigorously. For simplicity, we assume
$\mathbf{Y}$ and every column of $\mathbf{X}$ are centered so their
sample mean is $0$. For any submodel
$\mathcal{M} \subseteq \{1,\dotsc,p\}$, we are interested in the
parameter $\bm{\beta}^{*}_{\mathcal{M}}$ such that
$\mathbf{X}_{i,\mathcal{M}}^T\bm{\beta}_{\mathcal{M}}^{*}$ is the overall best
approximation to the true mean of $Y_i$, $\Delta(\mathbf{X}_i)$, in
the sense that
\begin{equation} \label{eq:beta-star-linear}
  \bm{\beta}_{\mathcal{M}}^{*} = \argmin_{\bm{\beta}_{\mathcal{M}} \in
    \mathbb{R}^{|\mathcal{M}|}} \sum_{i=1}^n \Big(\Delta(\mathbf{X}_i) -
  \mathbf{X}_{i,\mathcal{M}}^T\bm{\beta}_{\mathcal{M}} \Big)^2.
\end{equation}
We do not need to consider the intercept term because the data are
centered. Let \[
\mathbf{X}_{\cdot,\mathcal{M}}^{\dagger} =
(\mathbf{X}_{\cdot,\mathcal{M}}^T \mathbf{X}_{\cdot,\mathcal{M}})^{-1}
\mathbf{X}_{\cdot,\mathcal{M}}^T
\]
be the pseudo-inverse of the matrix
$\mathbf{X}_{\cdot,\mathcal{M}}$ (the submatrix of $\mathbf{X}$ with
columns in $\mathcal{M}$), so $\bm{\beta}_{\mathcal{M}}^{*} =
\mathbf{X}_{\cdot,\mathcal{M}}^{\dagger} \bm{\Delta}$ where
$\bm{\Delta} =
(\Delta(\mathbf{X}_1),\dotsc,\Delta(\mathbf{X}_n))^T$.

We are interested in making inference for
$\bm{\beta}_{\hat{\mathcal{M}}}^{*}$ where $\hat{\mathcal{M}}$
contains all the nonzero entries of the solution to the lasso problem
\eqref{eq:illustration-lasso}. Notice that
\eqref{eq:illustration-lasso} is the same as \eqref{eq:lasso} and
\eqref{eq:beta-star-linear} is the same as \eqref{eq:beta-star-causal} by
taking $\hat{\mu}_t(\bm x) = \hat{\mu}_y(\bm x) = 0$ since $T_i \equiv
1$. We assume the noise
$\epsilon_i$ is i.i.d.\ normal with variance $\sigma^2$. The normality
assumption can be relaxed in large samples
\citep{tian2017asymptotics}. A natural estimator of
$\bm{\beta}_{\hat{\mathcal{M}}}^{*}$ is the least squares solution
$\hat{\bm{\beta}}_{\hat{\mathcal{M}}} =
\mathbf{X}_{\cdot,\hat{\mathcal{M}}}^{\dagger} \mathbf{Y}$ that treats
$\hat{\mathcal{M}}$ as known. However, to obtain the sampling distribution of
$\hat{\bm{\beta}}_{\hat{\mathcal{M}}}$, the immediate challenge is
that the submodel $\hat{\mathcal{M}}$ is selected using the data,
therefore the usual normal distribution of
the least squares estimator does not hold.

To solve this problem, \citet{lee2013exact} proposed to use the
conditional distribution $\hat{\bm{\beta}}_{{\mathcal{M}}} |
\hat{\mathcal{M}} = \mathcal{M}$ to construct a pivotal statistic for
$\bm{\beta}_{\hat{\mathcal{M}}}^{*}$. Let $\hat{\mathbf{s}}$ be the sign
of the solution to the lasso problem \eqref{eq:illustration-lasso}. They found that the event
$\{\hat{\mathcal{M}} = \mathcal{M}\}$ can be written as the union of
some linear constraints on the response $\mathbf{Y}$,
\begin{equation} \label{eq:partition}
  \{\hat{\mathcal{M}} = \mathcal{M}\} = \bigcup_{\bm{s}}
  \big\{\hat{\mathcal{M}} = \mathcal{M}, \hat{\mathbf{s}} = \mathbf{s} \big\} =
  \bigcup_{\bm{s}} \big\{\mathbf{A}(\mathcal{M},\mathbf{s}) \bm{Y} \le \mathbf{b}(\mathcal{M},\mathbf{s})\big\}.
\end{equation}
The constraints are given by $\mathbf{A}(\mathcal{M},\mathbf{s}) =
(\mathbf{A}_0(\mathcal{M},\mathbf{s})^T,
\mathbf{A}_1(\mathcal{M},\mathbf{s})^T)^T$, $\mathbf{b}(\mathcal{M},\mathbf{s}) =
(\mathbf{b}_0(\mathcal{M},\mathbf{s})^T,
\mathbf{b}_1(\mathcal{M},\mathbf{s})^T)^T$, where $\mathbf{A}_0$
satisfies $\mathbf{A}_0 \mathbf{X}_{\cdot,\mathcal{M}} = \mathbf{0}$
(the exact expressions for $\mathbf{A}_0$ and $\mathbf{b}_0$ can be found in
\citet{lee2013exact}), and
\[
  \begin{split}
    % &\mathbf{A}_0(\mathcal{M}, \mathbf{s}) = \frac{1}{\lambda}
    % \begin{pmatrix}
    %   \mathbf{X}_{\cdot,\textrm{-}\mathcal{M}}^T (\mathbf{I} -
    %   \mathbf{P}_{\mathcal{M}}) \\
    %   -\mathbf{X}_{\cdot,\textrm{-}\mathcal{M}}^T (\mathbf{I} -
    %   \mathbf{P}_{\mathcal{M}}) \\
    % \end{pmatrix},~
    % \mathbf{b}_0(\mathcal{M}, \mathbf{s}) = \frac{1}{\lambda} =
    % \begin{pmatrix}
    %   \mathbf{1} - \mathbf{X}_{\cdot,\textrm{-}\mathcal{M}}^T
    %   (\mathbf{X}_{\cdot,\textrm{-}\mathcal{M}}^T)^{\dagger} \mathbf{s} \\
    %   \mathbf{1} + \mathbf{X}_{\cdot,\textrm{-}\mathcal{M}}^T
    %   (\mathbf{X}_{\cdot,\textrm{-}\mathcal{M}}^T)^{\dagger} \mathbf{s} \\
    % \end{pmatrix}; \\
    &\mathbf{A}_1(\mathcal{M}, \mathbf{s}) = - \mathrm{diag}(\mathbf{s})
    \mathbf{X}_{\cdot,\mathcal{M}}^{\dagger},~\mathbf{b}_1(\mathcal{M},
    \mathbf{s}) = - \lambda \mathrm{diag}(\mathbf{s})
    (\mathbf{X}_{\cdot,\mathcal{M}}^T \mathbf{X}_{\cdot,\mathcal{M}})^{-1} \mathbf{s}.
  \end{split}
\]

Suppose we are interested in the $j$-th component of
$\bm{\beta}^{*}_{\hat{\mathcal{M}}}$. Let $\bm{\eta}_{\mathcal{M}} =
(\mathbf{X}_{\cdot,\mathcal{M}}^{\dagger})^T \mathbf{e}_j$ where $\bm
e_j$ is the unit vector for the $j$-th coordinate, so
$\big(\bm{\beta}^{*}_{{\mathcal{M}}}\big)_j =
\bm{\eta}_{\mathcal{M}}^T \bm{\Delta}$ and
$\big(\hat{\bm{\beta}}_{{\mathcal{M}}}\big)_j =
\bm{\eta}_{\mathcal{M}}^T \bm{Y}$. In a nutshell, the main result of
\citet{lee2013exact} states that
$\big(\hat{\bm{\beta}}_{{\mathcal{M}}}\big)_j | \hat{\mathcal{M}} = \mathcal{M}$
follows a truncated normal distribution. More precisely, let $F(y;
\mu,\sigma^2,l,u)$ denote the CDF of normal variable
$\mathrm{N}(\mu,\sigma^2)$ truncated to the interval $[l,u]$, that is,
\begin{equation} \label{eq:truncated-normal}
  F(y;\mu,\sigma^2,l,u) = \frac{\Phi((y-\mu)/\sigma) -
    \Phi((l-\mu)/\sigma)}{\Phi((u-\mu)/\sigma) - \Phi((l-\mu)/\sigma)}.
\end{equation}
\citet[Theorem 5.2]{lee2013exact} showed that
\begin{lemma} (Selective inference for the lasso) \label{lem:lasso-selective-inference}
If the noise
$\epsilon_i$ are i.i.d.\ $\mathrm{N}(0,\sigma^2)$, then
\begin{equation} \label{eq:lasso-pivot}
  F(\big(\hat{\bm{\beta}}_{\mathcal{M}})_j;\big(\bm{\beta}^{*}_{\mathcal{M}}\big)_j,
  \sigma^2 \bm{\eta}_{\mathcal{M}}^T
  \bm{\eta}_{\mathcal{M}},L,U)
  \,\big|\, \hat{\mathcal{M}} = \mathcal{M}, \hat{\mathbf{s}} = \mathbf{s}
  \sim \mathrm{Unif}(0,1),
\end{equation}
where
\[
  \begin{split}
    L &= L(\bm{Y};\mathcal{M},\mathbf{s}) = \bm{\eta}_{\mathcal{M}}^T \mathbf{Y} +
    \max_{(\mathbf{A} \bm{\eta})_k < 0} \frac{b_k - (\mathbf{A}
      \mathbf{Y})_k}{(\mathbf{A} \bm{\eta}_{\mathcal{M}})_k},\\
    ~U &= U(\bm{Y};\mathcal{M},\mathbf{s}) = \bm{\eta}_{\mathcal{M}}^T \mathbf{Y} +
    \max_{(\mathbf{A} \bm{\eta})_k > 0} \frac{b_k - (\mathbf{A}
      \mathbf{Y})_k}{(\mathbf{A} \bm{\eta}_{\mathcal{M}})_k}.
  \end{split}
\]
\end{lemma}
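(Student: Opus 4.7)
The plan is to reduce the selective distribution to a one-dimensional truncated normal via the classical polyhedral-conditioning argument, then apply the probability integral transform. The starting point is the fact, given in \eqref{eq:partition}, that the event $\{\hat{\mathcal{M}} = \mathcal{M}, \hat{\mathbf{s}} = \mathbf{s}\}$ coincides with a polyhedron $\{\mathbf{A}\mathbf{Y} \le \mathbf{b}\}$ where $\mathbf{A} = \mathbf{A}(\mathcal{M}, \mathbf{s})$ and $\mathbf{b} = \mathbf{b}(\mathcal{M}, \mathbf{s})$. Since $\mathbf{Y} \sim \mathrm{N}(\bm{\Delta}, \sigma^2 \mathbf{I})$ by \eqref{eq:illustration-model} and the normality assumption, and since $\bm{\eta}_{\mathcal{M}}^T \bm{\Delta} = \bm{\eta}_{\mathcal{M}}^T \mathbf{X}_{\cdot,\mathcal{M}}^{\dagger} \bm{\Delta}\cdot \|\mathcal{M}\|\ldots$ actually more directly, $\bm{\eta}_{\mathcal{M}}^T \bm{\Delta} = \mathbf{e}_j^T \mathbf{X}_{\cdot,\mathcal{M}}^{\dagger} \bm{\Delta} = (\bm{\beta}^*_{\mathcal{M}})_j$ by \eqref{eq:beta-star-linear}, we have $\bm{\eta}_{\mathcal{M}}^T \mathbf{Y} \sim \mathrm{N}((\bm{\beta}^*_{\mathcal{M}})_j, \sigma^2 \bm{\eta}_{\mathcal{M}}^T \bm{\eta}_{\mathcal{M}})$ marginally.

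Next I would decompose $\mathbf{Y}$ into its component along $\bm{\eta}_{\mathcal{M}}$ and its orthogonal complement,
\[
\mathbf{Y} \;=\; \frac{\bm{\eta}_{\mathcal{M}}^T \mathbf{Y}}{\bm{\eta}_{\mathcal{M}}^T\bm{\eta}_{\mathcal{M}}}\bm{\eta}_{\mathcal{M}} + \mathbf{Z}, \qquad \mathbf{Z} := \Big(\mathbf{I} - \tfrac{\bm{\eta}_{\mathcal{M}}\bm{\eta}_{\mathcal{M}}^T}{\bm{\eta}_{\mathcal{M}}^T\bm{\eta}_{\mathcal{M}}}\Big)\mathbf{Y}.
\]
Because $\mathbf{Y}$ is Gaussian with spherical covariance, $\bm{\eta}_{\mathcal{M}}^T \mathbf{Y}$ and $\mathbf{Z}$ are independent. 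Plugging this decomposition into each row of $\mathbf{A}\mathbf{Y} \le \mathbf{b}$, the $k$-th constraint becomes a linear inequality in the scalar $\bm{\eta}_{\mathcal{M}}^T \mathbf{Y}$ with slope $(\mathbf{A}\bm{\eta}_{\mathcal{M}})_k/(\bm{\eta}_{\mathcal{M}}^T\bm{\eta}_{\mathcal{M}})$ and intercept determined by $\mathbf{Z}$. Solving each constraint for $\bm{\eta}_{\mathcal{M}}^T\mathbf{Y}$ (splitting on the sign of the slope; rows with zero slope become constraints on $\mathbf{Z}$ alone, which is why $\mathbf{A}_0 \mathbf{X}_{\cdot,\mathcal{M}} = \mathbf{0}$ is harmless here) gives the interval representation
\[
\{\mathbf{A}\mathbf{Y}\le\mathbf{b}\} \;=\; \{L(\mathbf{Z}) \le \bm{\eta}_{\mathcal{M}}^T\mathbf{Y} \le U(\mathbf{Z}),\ V(\mathbf{Z}) \ge 0\},
\]
where $L$, $U$ are precisely the expressions stated in the lemma (rewritten using $\mathbf{Y} = \mathbf{Z} + (\bm{\eta}_{\mathcal{M}}^T\mathbf{Y})\bm{\eta}_{\mathcal{M}}/\|\bm{\eta}_{\mathcal{M}}\|^2$ and a small algebraic manipulation), and $V$ collects the zero-slope constraints.

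With this in hand, condition additionally on $\mathbf{Z}$. Since $L(\mathbf{Z})$, $U(\mathbf{Z})$, $V(\mathbf{Z})$ are functions of $\mathbf{Z}$ alone and $\bm{\eta}_{\mathcal{M}}^T\mathbf{Y} \perp \mathbf{Z}$, the conditional law of $\bm{\eta}_{\mathcal{M}}^T\mathbf{Y}$ given $\{\hat{\mathcal{M}}=\mathcal{M},\hat{\mathbf{s}}=\mathbf{s},\mathbf{Z}\}$ is exactly $\mathrm{N}((\bm{\beta}^*_{\mathcal{M}})_j,\sigma^2\bm{\eta}_{\mathcal{M}}^T\bm{\eta}_{\mathcal{M}})$ truncated to $[L(\mathbf{Z}),U(\mathbf{Z})]$. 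Applying the probability integral transform with the CDF in \eqref{eq:truncated-normal} shows that the left-hand side of \eqref{eq:lasso-pivot} is $\mathrm{Unif}(0,1)$ conditionally on $(\hat{\mathcal{M}},\hat{\mathbf{s}},\mathbf{Z})$. Since this conditional distribution does not depend on $\mathbf{Z}$, marginalizing over $\mathbf{Z}$ (holding $\hat{\mathcal{M}}=\mathcal{M}$ and $\hat{\mathbf{s}}=\mathbf{s}$) preserves uniformity, giving \eqref{eq:lasso-pivot}.

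The main obstacle is the polyhedral lemma in the middle step: verifying that after the decomposition, the polyhedron $\{\mathbf{A}\mathbf{Y}\le\mathbf{b}\}$ really does collapse to a simple interval constraint on the scalar $\bm{\eta}_{\mathcal{M}}^T\mathbf{Y}$ (with the slopes of the defining inequalities being $(\mathbf{A}\bm{\eta}_{\mathcal{M}})_k$, as they appear in the stated $L$ and $U$), together with ancillary constraints that depend only on $\mathbf{Z}$. Everything else --- the Gaussian decomposition and the PIT --- is routine once the geometry has been set up correctly.
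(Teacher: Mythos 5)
Your proof is correct and is precisely the standard polyhedral-lemma argument of Lee et al.\ (2016, Theorem 5.2), which is the source this paper cites for the result rather than reproving it, so there is nothing to compare beyond noting the match. The only cosmetic caveat is that the truncation endpoints $L$ and $U$ are functions of $\mathbf{Z}$ alone only when the slopes are normalized as $(\mathbf{A}\bm{\eta}_{\mathcal{M}})_k/(\bm{\eta}_{\mathcal{M}}^T\bm{\eta}_{\mathcal{M}})$ (equivalently, written with $\mathbf{c}=\bm{\eta}_{\mathcal{M}}/\|\bm{\eta}_{\mathcal{M}}\|^2$ in the denominators), which your ``small algebraic manipulation'' correctly accounts for.
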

Since $\mathbf{A}_0 \mathbf{X}_{\cdot,\mathcal{M}} = \mathbf{0}$, we
have $\mathbf{A}_0 \bm{\eta}_{\mathcal{M}} = \mathbf{0}$. Therefore the interval
$[L,U]$ only depends on $\mathbf{A}_1$, which corresponds to the set
of constraints on the active variables.

To construct the selective confidence interval for
$\big(\bm{\beta}^{*}_{\hat{\mathcal{M}}}\big)_j$, one can invert the
pivotal statistic \eqref{eq:lasso-pivot} by finding values $D_j^{-}$ and
$D_j^{+}$ such that
\begin{equation} \label{eq:invert-pivot}
  F(\big(\hat{\bm{\beta}}_{\hat{\mathcal{M}}})_j;D_j^{-},
  \sigma^2 \bm{\eta}_{\hat{\mathcal{M}}}^T \bm{\eta}_{\hat{\mathcal{M}}},L,U) = 1 - q / 2, \quad
  F(\big(\hat{\bm{\beta}}_{\hat{\mathcal{M}}})_j;D_j^{+},
  \sigma^2 \bm{\eta}_{\hat{\mathcal{M}}}^T \bm{\eta}_{\hat{\mathcal{M}}},L,U) = q / 2.
\end{equation}
Then by \eqref{eq:lasso-pivot} it is easy to show that the confidence
interval $[D_j^{-},D_j^{+}]$ controls the selective type I error
\eqref{eq:stie} (if further conditioning on the event
$\{\hat{\mathbf{s}} = \mathbf{s}\}$) and hence the false coverage rate
\eqref{eq:fcr}. One can further improve the power of selective
inference by marginalizing over the coefficient signs $\mathbf{s}$, see
\citet[Section 5.2]{lee2013exact} for more detail.

%%% Local Variables:
%%% mode: latex
%%% TeX-master: t
%%% End:

\section{Inference for a fixed model of effect modification}
\label{sec:infer-fixed-model}

We now turn to the causal model \eqref{eq:causal-model} \rev{without
  the simplifying assumption that $\eta(\bm x) \equiv 0$ and $T \equiv
  1$}. \rev{As explained in \Cref{sec:our-proposal}, the submodel
  parameter $\bm \beta_{\mathcal{M}}^{*}$ is defined by the weighted
  projection \eqref{eq:beta-star-causal} instead of
  \eqref{eq:beta-star-linear}. First, we state the fundamental assumptions that are necessary for
statistical inference for the conditional average treatment effect
$\Delta(\bm x)$.}

\begin{assumption} (Fundamental assumptions in causal inference) \label{assump:basic}
For $i=1,\dotsc,n$,
\begin{enumerate}[label=(\ref{assump:basic}\Alph*)]
\item \emph{Consistency} of the observed outcome: $Y_i = Y_i(T_i)$;
\item \emph{Unconfoundedness} of the treatment assignment: $T_i \independent
  Y_i(t) | \mathbf{X}_i,~\forall t \in \mathcal{T}$;
\item \emph{Positivity (or Overlap)} of the treatment assignment:
  $T_i | \mathbf{X}_i$ has a
  positive density with respect to a
  dominating measure on $\mathcal{T}$. In particular, we assume
  $\mathrm{Var}(T_i|\mathbf{X}_i)$ exists and is between $1/C$ and $C$ for
  some constant $C > 1$ and all $\mathbf{X}_i \in \mathcal{X}$.
\end{enumerate}
\end{assumption}
Assumption (\ref{assump:basic}A) connects the observed outcome with
the potential outcomes and states that there is no interference
between the observations. Assumption (\ref{assump:basic}B) assumes
that there is no unmeasured confounding variable and is crucial to
identify the causal effect of $T$ on $Y$. This assumption is trivially
satisfied in a randomized experiment ($T_i \independent \mathbf{X}_i$).
Assumption
(\ref{assump:basic}C) ensures that statistical inference of the
treatment effect is possible. All the assumptions are essential and
commonly found in causal inference, see
\citet{rosenbaum1983central,hernan2017causal}.

In this section we consider the case of a fixed model of effect
modification, where we want to approximate $\Delta(\mathbf{X}_i)$ with
$\mathbf{X}_{i,\mathcal{M}}^T \bm{\beta}_{\mathcal{M}}$ in the sense
that it is the best linear approximation to the data generating model in
\eqref{eq:transformed-model}. Formally, the inferential target is
defined by \eqref{eq:beta-star-causal} (replacing $\hat{\mathcal{M}}$
by $\mathcal{M}$). This is slightly different from the parameter in the linear
model defined \eqref{eq:beta-star-linear} because the outcome
regression also involves the treatment variable. Similar to
\Cref{sec:infer-fixed-model}, we assume the response $Y_i -
\hat{\mu}_y(\mathbf{X}_i)$ and the design $(T_i -
\hat{\mu}_t(\mathbf{X}_i)) \bm{X}_i$, $i=1,\dotsc,n$, are all
centered, so we will ignore the intercept term in the theoretical
analysis below.

As described in \Cref{sec:introduction}, a natural estimator of
$\bm{\beta}_{\mathcal{M}}^{*}$ is the least squares estimator
$\hat{\bm{\beta}} = \hat{\bm{\beta}}_{\mathcal{M}}(0)$ defined in \eqref{eq:lasso} with
the plug-in nuisance estimates $\hat{\mu}_t(\bm{x})$ and $\hat{\mu}_y(\bm{y})$
and no regularization. The problem is: how accurate do
$\hat{\mu}_t(\bm{x})$ and $\hat{\mu}_y(\bm{y})$ need to be so that
$\hat{\bm{\beta}}_{\mathcal{M}}(0)$ is consistent and asymptotically
normal? One challenge of the theoretical analysis is that
both the regressors and the responses in \eqref{eq:lasso} involve
the estimated regression functions. Our analysis hinges on the
following modification of $\bm{\beta}^{*}_{\mathcal{M}}$:
\begin{equation} \label{eq:beta-tilde}
  \tilde{\bm{\beta}}_{\mathcal{M}}(\mathbf{T},\mathbf{X}) = \argmin_{\bm{\beta}_{\mathcal{M}} \in
    \mathbb{R}^{|\mathcal{M}|}} \frac{1}{n} \sum_{i=1}^n \big( T_i - \hat{\mu}_t(\mathbf{X}_i) \big)^2 \big(\Delta(\mathbf{X}_i) - \mathbf{X}_{i,\mathcal{M}}^T
  \bm{\beta}_{\mathcal{M}}\big)^2.
\end{equation}
We use tilde in this paper to indicate that the
quantity also depends on $\hat{\mu}_t(\cdot)$ and/or
$\hat{\mu}_y(\cdot)$.
The next Lemma shows that $\tilde{\bm{\beta}}_{\mathcal{M}}$ is
very close to the target parameter $\bm{\beta}^{*}_{\mathcal{M}}$ when
the treatment model is sufficiently accurate.

\begin{assumption} (Accuracy of treatment
  model) \label{assump:accuracy-treatment}
  $\|\hat{\mu}_t - \mu_t\|_2^2 = (1/n)\sum_{i=1}^n
    (\hat{\mu}_t(X_i) - \mu_t(X_i))^2 = o_p(n^{-1/2})$.
\end{assumption}

\begin{assumption} \label{assump:support-X}
  The support of $\mathbf{X}$ is uniformly bounded, i.e.\
  $\mathcal{X} \subseteq [-C,C]^p$ for some constant $C$. The conditional
  treatment effect $\Delta(\mathbf{X})$ is also bounded by $C$.
\end{assumption}

\begin{lemma} \label{thm:hat-star}
  Suppose
  \Cref{assump:basic,assump:accuracy-treatment,assump:support-X} are
  satisfied. For a fixed model $\mathcal{M}$ such that
  $\mathrm{E}[\bm{X}_{i,\mathcal{M}} \bm{X}_{i,\mathcal{M}}^T] \succeq (1/C)
  \mathbf{I}_{|\mathcal{M}|}$, we have
  $\|\tilde{\bm{\beta}}_{\mathcal{M}} -
  {\bm{\beta}}^{*}_{\mathcal{M}}\|_{\infty} = o_p(n^{-1/2})$.
\end{lemma}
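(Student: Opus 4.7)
The plan is to express both $\bm{\beta}^{*}_{\mathcal{M}}$ and $\tilde{\bm{\beta}}_{\mathcal{M}}$ in closed form as weighted least squares solutions and reduce the conclusion to a perturbation analysis of the weights. Writing $w_i = (T_i - \mu_t(\mathbf{X}_i))^2$, $\hat w_i = (T_i - \hat\mu_t(\mathbf{X}_i))^2$, and
\[
\hat{\mathbf{A}} = \tfrac{1}{n}\sum_{i} \hat w_i \mathbf{X}_{i,\mathcal{M}}\mathbf{X}_{i,\mathcal{M}}^T,\;\;
\mathbf{A} = \tfrac{1}{n}\sum_{i} w_i \mathbf{X}_{i,\mathcal{M}}\mathbf{X}_{i,\mathcal{M}}^T,\;\;
\hat{\mathbf{b}} = \tfrac{1}{n}\sum_{i} \hat w_i \mathbf{X}_{i,\mathcal{M}} \Delta(\mathbf{X}_i),\;\;
\mathbf{b} = \tfrac{1}{n}\sum_{i} w_i \mathbf{X}_{i,\mathcal{M}} \Delta(\mathbf{X}_i),
\]
the normal equations give $\bm{\beta}^{*}_{\mathcal{M}} = \mathbf{A}^{-1}\mathbf{b}$ and $\tilde{\bm{\beta}}_{\mathcal{M}} = \hat{\mathbf{A}}^{-1}\hat{\mathbf{b}}$, hence
\[
\tilde{\bm{\beta}}_{\mathcal{M}} - \bm{\beta}^{*}_{\mathcal{M}} = \hat{\mathbf{A}}^{-1}\bigl[(\hat{\mathbf{b}} - \mathbf{b}) - (\hat{\mathbf{A}} - \mathbf{A})\bm{\beta}^{*}_{\mathcal{M}}\bigr].
\]
Since $|\mathcal{M}|$ is fixed and the relevant finite-dimensional norms are equivalent up to constants, it will suffice to establish (a) $\hat{\mathbf{A}}^{-1}$ is bounded in operator norm with probability tending to one, and (b) $\hat{\mathbf{A}} - \mathbf{A}$ and $\hat{\mathbf{b}} - \mathbf{b}$ are $o_p(n^{-1/2})$ entrywise; boundedness of $\bm{\beta}^{*}_{\mathcal{M}}$ itself will follow from (a) applied to $\mathbf{A}$ together with \Cref{assump:support-X}.

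For (a), \Cref{assump:basic}C and the lemma's spectral hypothesis give $\mathrm{E}[w_i \mathbf{X}_{i,\mathcal{M}}\mathbf{X}_{i,\mathcal{M}}^T] = \mathrm{E}[\mathrm{Var}(T_i \mid \mathbf{X}_i)\,\mathbf{X}_{i,\mathcal{M}}\mathbf{X}_{i,\mathcal{M}}^T] \succeq (1/C^2)\mathbf{I}$, and a law of large numbers, with variance controlled by \Cref{assump:support-X}, will yield $\lambda_{\min}(\mathbf{A}) \ge 1/(2C^2)$ with probability tending to one. Combined with $\hat{\mathbf{A}} - \mathbf{A} = o_p(1)$ from the main step, this delivers (a). The substantive work is (b). Expanding
\[
\hat w_i - w_i = -2\bigl(T_i - \mu_t(\mathbf{X}_i)\bigr)\bigl(\hat\mu_t(\mathbf{X}_i) - \mu_t(\mathbf{X}_i)\bigr) + \bigl(\hat\mu_t(\mathbf{X}_i) - \mu_t(\mathbf{X}_i)\bigr)^2,
\]
I would bound the contribution of each piece to a generic entry $(1/n)\sum_i (\hat w_i - w_i) g(\mathbf{X}_i)$, where $g$ is a bounded function on $\mathcal{X}$ (either $X_{ij}X_{ik}$ or $X_{ij}\Delta(\mathbf{X}_i)$, both bounded by \Cref{assump:support-X}). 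The quadratic piece is handled immediately: it is at most $C \|\hat\mu_t - \mu_t\|_2^2 = o_p(n^{-1/2})$ by \Cref{assump:accuracy-treatment}.

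The cross term is the main obstacle, because a plain Cauchy--Schwarz bound yields only $O_p(\|\hat\mu_t - \mu_t\|_2) = o_p(n^{-1/4})$, which is too slow by a factor of $n^{1/4}$. Here the cross-fitting construction is essential: because $\hat\mu_t^{(-i)}$ is trained on a subsample disjoint from observation $i$, conditionally on the auxiliary subsample the summand $(T_i - \mu_t(\mathbf{X}_i))(\hat\mu_t(\mathbf{X}_i) - \mu_t(\mathbf{X}_i)) g(\mathbf{X}_i)$ has mean zero by unconfoundedness (\Cref{assump:basic}B) and the tower property, with conditional variance bounded by $C\,\mathrm{E}[(\hat\mu_t - \mu_t)^2 \mid \text{aux}]$ using the upper bound on $\mathrm{Var}(T_i \mid \mathbf{X}_i)$ and boundedness of $g$. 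A Chebyshev bound then makes the cross term $O_p(n^{-1/2}\,\|\hat\mu_t - \mu_t\|_{L_2}) = o_p(n^{-1/2})$. The one remaining technical care is translating between the empirical norm in \Cref{assump:accuracy-treatment} and the conditional population $L_2$ norm that naturally appears in the Chebyshev step; this can be handled by a short concentration argument leveraging \Cref{assump:support-X}. Combining (a), the quadratic bound, and the cross-term bound then assembles into the claimed $\|\tilde{\bm{\beta}}_{\mathcal{M}} - \bm{\beta}^{*}_{\mathcal{M}}\|_\infty = o_p(n^{-1/2})$.
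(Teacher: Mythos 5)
Your proposal is correct and is essentially the paper's own argument in explicit matrix form: the identity $\tilde{\bm{\beta}}_{\mathcal{M}} - \bm{\beta}^{*}_{\mathcal{M}} = \hat{\mathbf{A}}^{-1}\bigl[(\hat{\mathbf{b}} - \mathbf{b}) - (\hat{\mathbf{A}} - \mathbf{A})\bm{\beta}^{*}_{\mathcal{M}}\bigr]$ is exactly the paper's first-order-condition expansion $\bm{0} = \sqrt{n}\bm{\psi}(\bm{\beta}^{*},\hat{\mu}_t) + \hat{\mathbf{A}}\sqrt{n}(\tilde{\bm{\beta}} - \bm{\beta}^{*})$, and both then split $(T_i-\hat{\mu}_{ti})^2 - (T_i-\mu_{ti})^2$ into a cross term controlled by the conditional-mean-zero (cross-fitting) structure and a quadratic term controlled by \Cref{assump:accuracy-treatment}. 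If anything, keeping $\bm{\beta}^{*}_{\mathcal{M}}$ outside the i.i.d.\ sum is marginally cleaner than the paper's version, which places the random $\Delta(\mathbf{X}_i)-\mathbf{X}_i^T\bm{\beta}^{*}$ inside the "mean-zero i.i.d.\ sum."
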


\begin{assumption} (Accuracy of outcome model) \label{assump:accuracy-outcome}
  $\|\hat{\mu}_y - \mu_y\|_2 = o_p(1)$ and
  $\|\hat{\mu}_t - \mu_t\|_2 \cdot \|\hat{\mu}_y - \mu_y\|_2 =
  o_p(n^{-1/2})$.
\end{assumption}

The last assumption is a doubly-robust type assumption
\citep{chernozhukov2016double}. Double
robustness is the property that the causal effect estimate is consistent
if at least one of the two nuisance estimates are consistent
\citep{bang2005doubly}.  To make asymptotic normal
inference, we additionally require both nuisance estimates to be consistent and converge fast enough: \Cref{assump:accuracy-treatment,assump:accuracy-outcome}
are satisfied if $\hat{\mu}_y$ is consistent and both $\hat{\mu}_t$ and $\hat{\mu}_y$ converge faster
than the familiar $n^{-1/4}$ rate in $l_2$ norm
\citep{van2017generally,chernozhukov2016double}.

The next Theorem
establishes the asymptotic distribution of
$\hat{\bm{\beta}}_{\mathcal{M}}$ when $\mathcal{M}$ is fixed and given.

\begin{theorem} \label{thm:beta-fixed}
  Suppose cross-fitting is used to estimate the nuisance functions. Under
  \Cref{assump:basic,assump:accuracy-treatment,assump:support-X,assump:accuracy-outcome}, for a fixed model $\mathcal{M}$ such that
  $\mathrm{E}[\bm{X}_{i,\mathcal{M}} \bm{X}_{i,\mathcal{M}}^T] \succeq (1/C)
  \mathbf{I}_{|\mathcal{M}|}$, we have
  \[
  \Big(\sum_{i=1}^n (T_i - \hat{\mu}_t(\mathbf{X}_{i}))^2
  \mathbf{X}_{i,\mathcal{M}} \mathbf{X}_{i, \mathcal{M}}^T\Big)^{-1/2} (\hat{\bm{\beta}}_{\mathcal{M}} -
  {\bm{\beta}}^{*}_{\mathcal{M}}) \overset{d}{\to} \mathrm{N}(0,\sigma^2 \mathbf{I}_{|\mathcal{M}|}).
\]
\end{theorem}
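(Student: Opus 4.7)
The plan is to use the chain $\hat{\bm{\beta}}_{\mathcal{M}} - \bm{\beta}^{*}_{\mathcal{M}} = (\hat{\bm{\beta}}_{\mathcal{M}} - \tilde{\bm{\beta}}_{\mathcal{M}}) + (\tilde{\bm{\beta}}_{\mathcal{M}} - \bm{\beta}^{*}_{\mathcal{M}})$. Write $\tilde{T}_i := T_i - \hat{\mu}_t(\mathbf{X}_i)$ and $\mathbf{G}_n := \sum_{i=1}^n \tilde{T}_i^2 \mathbf{X}_{i,\mathcal{M}} \mathbf{X}_{i,\mathcal{M}}^T$. By \Cref{assump:support-X} and the LLN, $\|\mathbf{G}_n^{1/2}\|_{\mathrm{op}} = O_p(\sqrt{n})$, so \Cref{thm:hat-star} yields $\mathbf{G}_n^{1/2}(\tilde{\bm{\beta}}_{\mathcal{M}} - \bm{\beta}^{*}_{\mathcal{M}}) = o_p(1)$ coordinatewise since $|\mathcal{M}|$ is fixed. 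The bulk of the proof is then the analysis of $\hat{\bm{\beta}}_{\mathcal{M}} - \tilde{\bm{\beta}}_{\mathcal{M}}$.

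Subtracting the normal equations for $\hat{\bm{\beta}}_{\mathcal{M}}$ (from \eqref{eq:lasso} with $\lambda=0$) and $\tilde{\bm{\beta}}_{\mathcal{M}}$ (from \eqref{eq:beta-tilde}) gives
\[
\mathbf{G}_n\,(\hat{\bm{\beta}}_{\mathcal{M}} - \tilde{\bm{\beta}}_{\mathcal{M}}) = \sum_{i=1}^n \tilde{T}_i \mathbf{X}_{i,\mathcal{M}}\,\xi_i,\quad \xi_i := \tilde{Y}_i - \tilde{T}_i\,\Delta(\mathbf{X}_i).
\]
Using $\mu_y = \eta + \mu_t\Delta$ (from \eqref{eq:causal-model} with unconfoundedness), a short calculation shows $\xi_i = \epsilon_i + r_t(\mathbf{X}_i)\,\Delta(\mathbf{X}_i) - r_y(\mathbf{X}_i)$ where $r_t := \hat{\mu}_t - \mu_t$ and $r_y := \hat{\mu}_y - \mu_y$. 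Writing $\tilde{T}_i = e_i - r_t(\mathbf{X}_i)$ with $e_i := T_i - \mu_t(\mathbf{X}_i)$, the product $\tilde{T}_i \xi_i$ splits into the ``clean'' term $e_i \epsilon_i$ and five remainders, each carrying at least one factor of $r_t$ or $r_y$.

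The three single-nuisance remainders ($r_t \epsilon_i$, $e_i r_y$, $e_i r_t \Delta$) are handled by cross-fitting: conditioning on the training fold freezes $r_t, r_y$ as deterministic functions of $\mathbf{X}_i$, while $\mathrm{E}[e_i | \mathbf{X}_i] = 0$ and $\mathrm{E}[\epsilon_i | T_i, \mathbf{X}_i] = 0$ (this uses consistency and unconfoundedness in \Cref{assump:basic}) make each conditional mean vanish. Chebyshev combined with \Cref{assump:support-X} gives conditional variance $O(\|r_t\|_2^2 + \|r_y\|_2^2)$, which is $o_p(1)$ by \Cref{assump:accuracy-treatment,assump:accuracy-outcome}. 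The two double-nuisance remainders are deterministic:
\[
\Big|n^{-1/2}\!\sum_i r_t\,r_y\, \mathbf{X}_{i,\mathcal{M}}\Big| \le C\sqrt{n}\,\|r_t\|_2\|r_y\|_2 = o_p(1), \quad \Big|n^{-1/2}\!\sum_i r_t^2\,\Delta\,\mathbf{X}_{i,\mathcal{M}}\Big| \le C\sqrt{n}\,\|r_t\|_2^2 = o_p(1),
\]
by \Cref{assump:accuracy-outcome} and \Cref{assump:accuracy-treatment} respectively. Hence $n^{-1/2}\sum_i \tilde{T}_i \xi_i \mathbf{X}_{i,\mathcal{M}}$ is asymptotically equivalent to $n^{-1/2}\sum_i e_i \epsilon_i \mathbf{X}_{i,\mathcal{M}}$.

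For the surviving term I assume the standard homoscedasticity condition $\mathrm{Var}(\epsilon_i | T_i, \mathbf{X}_i) = \sigma^2$. The i.i.d.\ summands $e_i \epsilon_i \mathbf{X}_{i,\mathcal{M}}$ have mean zero and covariance $\sigma^2 \mathbf{V}$, where $\mathbf{V} := \mathrm{E}[e_i^2 \mathbf{X}_{i,\mathcal{M}} \mathbf{X}_{i,\mathcal{M}}^T]$, so the multivariate CLT gives $n^{-1/2}\sum_i e_i \epsilon_i \mathbf{X}_{i,\mathcal{M}} \overset{d}{\to} \mathrm{N}(0, \sigma^2 \mathbf{V})$. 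A separate calculation (expanding $\tilde{T}_i^2 = e_i^2 - 2 e_i r_t + r_t^2$ and using \Cref{assump:accuracy-treatment}) shows $\mathbf{G}_n / n \overset{p}{\to} \mathbf{V}$, and Slutsky's theorem then yields the desired $\mathrm{N}(0, \sigma^2 \mathbf{I}_{|\mathcal{M}|})$ limit. The main technical subtlety throughout is the cross-fitting bookkeeping: one must condition carefully on each fold's training data so $r_t, r_y$ behave as deterministic functions inside conditional expectations, and then reassemble the fold-wise arguments into a single limit statement without double-counting randomness.
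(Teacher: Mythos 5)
Your proposal is correct and follows essentially the same route as the paper: decompose $\hat{\bm{\beta}}_{\mathcal{M}}-\bm{\beta}^{*}_{\mathcal{M}}$ through the intermediate $\tilde{\bm{\beta}}_{\mathcal{M}}$, invoke \Cref{thm:hat-star} for the second piece, expand the normal equations and kill the nuisance-error remainders via cross-fitting plus \Cref{assump:accuracy-treatment,assump:accuracy-outcome}, then apply a CLT and Slutsky. The only cosmetic difference is that the paper keeps $\sum_i (T_i-\hat{\mu}_t(\mathbf{X}_i))\epsilon_i\mathbf{X}_{i,\mathcal{M}}$ as the leading term (so it pairs exactly with the $\mathbf{G}_n^{-1/2}$ normalization), whereas you further reduce to $\sum_i e_i\epsilon_i\mathbf{X}_{i,\mathcal{M}}$ and separately show $\mathbf{G}_n/n\to\mathbf{V}$; both are valid, and your accounting of the five remainder terms is in fact more explicit than the paper's.
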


The key step to prove this Theorem is to replace
$\bm{\beta}^{*}_{\mathcal{M}}$ by $\tilde{\bm{\beta}}_{\mathcal{M}}$
using \Cref{thm:hat-star}. The rest of the proof is just an
extension to the standard asymptotic analysis of least squares
estimator in which the response is perturbed
slightly. \rev{\Cref{thm:beta-fixed} is an instance of the finite
  population central limit theorem in which the estimand $\bm
  \beta^{*}_{\mathcal{M}}$ depends on the sample
  \citep{li2017general}.}

We want to emphasize that in a randomized experiment (i.e.\ $T \independent
\mathbf{X}$) or the parametric form of $\mu_t(\mathbf{x})$ is known,
$\mu_t$ can be estimated very accurately by the sample
ratio or a parametric regression. In this case, $\|\hat{\mu}_t - \mu_t\|_2 =
O_p(n^{-1/2})$ so \Cref{assump:accuracy-treatment}
holds trivially. \Cref{assump:accuracy-outcome} is reduced
to the very weak condition that $\hat{\mu}_y(\mathbf{x})$ is
consistent. This is easily satisfied by standard nonparametric
regressions or the random forests
\citep{biau2012analysis,scornet2015consistency}.

%%% Local Variables:
%%% mode: latex
%%% TeX-master: t
%%% End:

\section{Selective inference for effect modificiation}
\label{sec:select-infer-effect1}

As argued in \Cref{sec:introduction}, it is often desirable to
use a simple  model to approximately describe the effect modification when the
dimension of
$\bm{X}$ is high. One way to do this is to solve the lasso problem
\eqref{eq:lasso} and let the selected model $\hat{M} =
\hat{M}_{\lambda}$ be the positions of the non-zero entries in the solution
$\hat{\bm{\beta}}_{\{1,\dotsc,p\}}(\lambda)$. We want to make
valid inference for the parameter
$\bm{\beta}^{*}_{\hat{\mathcal{M}}}$ defined in
\eqref{eq:beta-star-causal} given the fact that $\hat{\mathcal{M}}$ is
selected using the data.

Compared to the selective inference in linear models described in
\Cref{sec:select-infer-line}, the challenge here is that the nuisance
parameters $\mu_y(\bm{x})$ and $\mu_t(\bm{x})$ must be estimated by
the data. This means that in the regression model
\eqref{eq:transformed-model}, the response $Y_i - \mu_y(\mathbf{X}_i)$
and the regressors (in the approximate linear model) $(T_i -
\mu_t(\mathbf{X}_i)) \mathbf{X}_i$ are not observed exactly. Similar
to \Cref{sec:infer-fixed-model}, the
estimation error $\|\hat{\mu}_t - \mu_t\|_2$ and $\|\hat{\mu}_y
- \mu_y\|_2$ must be sufficiently small to make the asymptotic
theory go through. Our main technical result is that with some
additional assumptions on the selection event, the same
rate assumptions in the fixed model case
(\Cref{assump:accuracy-treatment,assump:accuracy-outcome})
ensures that
the pivotal statistic \eqref{eq:lasso-pivot} is still asymptotically valid.

The first key assumption we make is that the size of the select model
$\hat{\mathcal{M}}$ is not too large. This assumption is important to
control the number of parametric models we need to consider in the
asymptotic analysis.
\begin{assumption} (Size of the selected model) \label{assump:model-size}
  For some constant $m$, $\mathrm{P}(|\hat{\mathcal{M}}| \le m) \to
  1$.
\end{assumption}
Similar to \Cref{thm:hat-star}, we assume the covariance
matrices of the design $\mathbf{X}$ are uniformly positive definite,
so the regressors are not collinear in any selected model.
\begin{assumption} (Sparse eigenvalue assumption) \label{assump:design}
  For all model $\mathcal{M}$ such that $|\mathcal{M}| \le m$,
  $\mathrm{E}[\bm{X}_{i,\mathcal{M}} \bm{X}_{i,\mathcal{M}}^T] \succeq (1/C)
  \mathbf{I}_{|\mathcal{M}|}$.
\end{assumption}

These additional assumptions ensure the modified parameter
$\tilde{\bm{\beta}}_{\hat{\mathcal{M}}}$ is not too far from the
target parameter ${\bm{\beta}}^{*}_{\hat{\mathcal{M}}}$ when the
treatment model is sufficiently accurate.
\begin{lemma} \label{lem:beta-tilde-star-random}
  Under the assumptions in \Cref{thm:hat-star} and additionally
  \Cref{assump:model-size,assump:design},
  $\|{\bm{\beta}}^{*}_{\hat{\mathcal{M}}}\|_{\infty} = O_p(1)$ and
  $\|\tilde{\bm{\beta}}_{\hat{\mathcal{M}}} -
  {\bm{\beta}}^{*}_{\hat{\mathcal{M}}}\|_{\infty} =
  o_p(n^{-1/2})$.
\end{lemma}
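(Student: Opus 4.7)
The statement asks us to lift the fixed-model conclusion of \Cref{thm:hat-star} to the data-dependent model $\hat{\mathcal{M}}$, and separately to bound the target parameter. The decisive observation is that \Cref{assump:model-size} restricts attention to the finite collection
\[
\mathcal{S}_m \;=\; \bigl\{\mathcal{M}\subseteq\{1,\dotsc,p\}:|\mathcal{M}|\le m\bigr\},
\]
whose cardinality $|\mathcal{S}_m|\le\sum_{k=0}^{m}\binom{p}{k}$ does not depend on $n$, and that the event $E_n=\{\hat{\mathcal{M}}\in\mathcal{S}_m\}$ satisfies $\mathrm{P}(E_n)\to 1$. So the argument should reduce to a finite union bound over $\mathcal{S}_m$.

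\textbf{Rate of approximation.} I would first verify the hypotheses of \Cref{thm:hat-star} for every fixed $\mathcal{M}\in\mathcal{S}_m$: \Cref{assump:design} supplies the uniform eigenvalue condition $\mathrm{E}[\mathbf{X}_{i,\mathcal{M}}\mathbf{X}_{i,\mathcal{M}}^T]\succeq (1/C)\mathbf{I}$, and the remaining hypotheses are inherited. \Cref{thm:hat-star} then gives $\|\tilde{\bm{\beta}}_\mathcal{M}-\bm{\beta}^*_\mathcal{M}\|_\infty = o_p(n^{-1/2})$ for each such $\mathcal{M}$. Because $|\mathcal{S}_m|$ is a constant, a finite union bound yields
\[
\max_{\mathcal{M}\in\mathcal{S}_m}\|\tilde{\bm{\beta}}_\mathcal{M}-\bm{\beta}^*_\mathcal{M}\|_\infty \;=\; o_p(n^{-1/2}).
\]
On $E_n$ this maximum dominates $\|\tilde{\bm{\beta}}_{\hat{\mathcal{M}}}-\bm{\beta}^*_{\hat{\mathcal{M}}}\|_\infty$, and since $\mathrm{P}(E_n^c)\to 0$ the desired rate transfers automatically.

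\textbf{Boundedness of $\bm{\beta}^*_{\hat{\mathcal{M}}}$.} For the second conclusion I would use the explicit weighted-least-squares form
\[
\bm{\beta}^*_\mathcal{M} \;=\; \Bigl(\tfrac{1}{n}\sum_i(T_i-\mu_t(\mathbf{X}_i))^2\mathbf{X}_{i,\mathcal{M}}\mathbf{X}_{i,\mathcal{M}}^T\Bigr)^{-1}\Bigl(\tfrac{1}{n}\sum_i(T_i-\mu_t(\mathbf{X}_i))^2\mathbf{X}_{i,\mathcal{M}}\Delta(\mathbf{X}_i)\Bigr).
\]
Combining the positivity part of \Cref{assump:basic} with \Cref{assump:design} gives a population weighted Gram matrix $\succeq (1/C^2)\mathbf{I}_{|\mathcal{M}|}$, and an entrywise law of large numbers -- uniform over $\mathcal{S}_m$ because of its finite cardinality and the bounded support of $\mathbf{X}$ in \Cref{assump:support-X} -- shows the sample Gram is close to its expectation with probability tending to one, so its inverse has $O_p(1)$ operator norm uniformly in $\mathcal{M}\in\mathcal{S}_m$. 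The numerator is likewise $O_p(1)$ entrywise since $|\Delta|$ and $\|\mathbf{X}\|_\infty$ are bounded and $\mathrm{E}[(T_i-\mu_t(\mathbf{X}_i))^2]=\mathrm{E}[\mathrm{Var}(T_i\mid\mathbf{X}_i)]\le C$. Hence $\max_{\mathcal{M}\in\mathcal{S}_m}\|\bm{\beta}^*_\mathcal{M}\|_\infty=O_p(1)$, and restricting to $\hat{\mathcal{M}}$ on $E_n$ delivers the claim.

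\textbf{Main obstacle.} I do not anticipate a genuine difficulty: the whole argument is bookkeeping around a finite union bound, which is made legitimate precisely by \Cref{assump:model-size}. The only step that deserves attention is verifying that the LLN-type control on the weighted Gram matrix and its inverse is uniform over $\mathcal{S}_m$; but with $p$ and $m$ treated as fixed constants and $\mathbf{X}$ bounded, the union bound dispenses with this at no cost. Were $p$ permitted to grow with $n$, this is exactly the step at which a genuinely more refined concentration argument (e.g., a sparse-eigenvalue-type bound) would be required.
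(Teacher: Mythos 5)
Your proposal is correct and takes essentially the same route as the paper: \Cref{assump:model-size} reduces everything to the finitely many models with $|\mathcal{M}|\le m$ (finite since $p$ is fixed), the fixed-model bounds of \Cref{thm:hat-star} and the explicit weighted-least-squares formula for $\bm{\beta}^{*}_{\mathcal{M}}$ are applied uniformly over this collection, and a finite union bound transfers both conclusions to $\hat{\mathcal{M}}$. The paper packages the Gram-matrix control as a separate eigenvalue lemma (via a matrix Chernoff bound) and states the uniform bound on $\sqrt{n}\,\bm{\psi}(\bm{\beta}^{*}_{\mathcal{M}},\hat{\mu}_t)$ directly, but these are the same argument as yours in substance.
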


Let $\tilde{\mathbf{X}}_{i,\mathcal{M}} = (T_i - \hat{\mu}_t(\mathbf{X}_i)) \mathbf{X}_{i,\mathcal{M}}$ be the
transformed design (tilde indicates dependence on $\hat{\mu}_t$) and $\tilde{\bm{\eta}}_{\mathcal{M}} = (\tilde{\mathbf{X}}_{\cdot,\mathcal{M}}^{\dagger})^T
  \mathbf{e}_j$ be the linear transformation we are interested in. In
  other words, $(\hat{\bm{\beta}}_{\hat{\mathcal{M}}})_j =
  \tilde{\bm{\eta}}^T_{\mathcal{M}} \tilde{\bm Y}$
  where $\tilde{\bm Y} = \mathbf{Y} - \hat{\bm{\mu}}_y$ and
  $\hat{\bm{\mu}}_y$ is the vector of fitted values of the
  $\mathbf{Y}$ versus $\mathbf{X}$ regression, $\hat{\bm{\mu}}_y =
  (\hat{\mu}_y(\mathbf{X}_1), \dotsc,
  \hat{\mu}_y(\mathbf{X}_n))^T$. The model selection event \eqref{eq:partition}
  can be obtained analogously by substituting $\bm X$ with the estimated
  transformed design $\tilde{\mathbf{X}}$ in the definition of the matrices
  $\bm{A}(\mathcal{M}, \bm{s})$ and vectors $\bm{b}(\mathcal{M},\bm{s})$.

Next we state the extra technical assumptions for our main Theorem.
\begin{assumption} (Truncation threshold) \label{assump:smooth-pivot}
  The truncation thresholds $L$ and $U$ (computed using the
  transformed design $\tilde{\bm{X}}$) satisfy
  \[
  \mathrm{P}\Big(\frac{U(\mathbf{Y} - \hat{\bm{\mu}}_y) - L(\mathbf{Y}
    - \hat{\bm{\mu}}_y)}{\sigma\|\tilde{\bm{\eta}}_{\mathcal{M}}\|} \ge 1/C\Big) \to
  1.
  \]
% and
% \[
%  \mathrm{P}\Big( \frac{U(\mathbf{Y} - \bm{\hat{\mu}}_y) -
%       (\bm{\beta}^{*}_{\hat{\mathcal{M}}})_j}{\|\bm{\eta}\|}
%   \le -C ~\mathrm{or}~ \frac{L(\mathbf{Y} - \bm{\hat{\mu}}_y) -
%       (\bm{\beta}^{*}_{\hat{\mathcal{M}}})_j}{\|\bm{\eta}\|}
%   \ge C \Big) \to 0.
% \]
\end{assumption}

\begin{assumption} (Lasso solution) \label{assump:kkt}
$\mathrm{P}\Big(\big|\big(\hat{\bm{\beta}}_{\{1,\dotsc,p\}}(\lambda)\big)_k\big| \ge
1/(C\sqrt{n}), ~\forall k \in \hat{\mathcal{M}}\Big) \to 1$.
\end{assumption}

\Cref{assump:smooth-pivot} assumes the truncation points $L$ and $U$
are not too close (i.e.\ the conditioning event is not too small),
so a small perturbation does not change the denominator of
\eqref{eq:truncated-normal} a lot. \Cref{assump:kkt} assumes the
lasso solution does not have a small coefficient. This is true with
high probability if the truth is a sparse linear model and the true nonzero
coefficients are not too small; see
\citet{negahban2012unified}. However, \Cref{assump:kkt} does not
require the true model is sparse and must be selected
consistently. Together these two assumptions mean the selected model
is not a small probability event and is stable to small perturbations. Notice that
both these assumptions can be verified empirically.

Finally we state our main Theorem. Note that we assume the noise is
homoskedastic and Gaussian in this Theorem, but it is possible to
relax this assumption. See \Cref{sec:assumptions-paper} for more
discussion about all the assumptions in this paper.
\begin{theorem} \label{thm:asymptotic-validity}
  Suppose cross-fitting is used to estimate the nuisance functions
  and the noise $\epsilon_i$ are i.i.d.\ $\mathrm{N}(0,\sigma^2)$. Under
  \Cref{assump:basic,assump:accuracy-treatment,assump:support-X,assump:accuracy-outcome,assump:model-size,assump:design,assump:smooth-pivot,assump:kkt},
  the pivotal statistic in \eqref{eq:lasso-pivot} is asymptotically
  valid. More specifically, for any $\mathcal{M}$ such that
  $\mathrm{P}(\hat{\mathcal{M}} = \mathcal{M},\hat{\mathbf{s}} = \mathbf{s}) > 0$,
  \begin{equation} \label{eq:asymptotic-validity}
  F\Big(\big(\hat{\bm{\beta}}_{{\mathcal{M}}}\big)_j;
  \big(\bm{\beta}^{*}_{{\mathcal{M}}}\big)_j, \sigma^2 \tilde{\bm{\eta}}_{\mathcal{M}}^T
  \tilde{\bm{\eta}}_{\mathcal{M}},L\big(\mathbf{Y} - \hat{\bm{\mu}}_y;{\mathcal{M}},{\mathbf{s}}\big),U\big(\mathbf{Y} - \hat{\bm{\mu}}_y;{\mathcal{M}},{\mathbf{s}}\big)
  \Big) \Big| \hat{\mathcal{M}} = \mathcal{M}, \hat{\mathbf{s}} =
  \mathbf{s} \overset{d}{\to} \mathrm{Unif}(0,1).
  \end{equation}
\end{theorem}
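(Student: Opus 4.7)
The plan is to compare the actual pivot, which uses the estimated nuisance functions, against an oracle pivot computed from the true nuisances, and show the two differ by $o_p(1)$ — whence the exact uniformity given by \Cref{lem:lasso-selective-inference} for the oracle transfers to asymptotic uniformity. Define the oracle quantities $\mathbf{Y}^*_i = Y_i - \mu_y(\mathbf{X}_i)$ and $\mathbf{X}^*_i = (T_i - \mu_t(\mathbf{X}_i))\mathbf{X}_i$. Under \eqref{eq:causal-model} and \Cref{assump:basic}, conditional on $(\mathbf{T}, \mathbf{X})$, the vector $\mathbf{Y}^*$ is normal with mean $\mu^*_i = (T_i - \mu_t(\mathbf{X}_i))\Delta(\mathbf{X}_i)$ and covariance $\sigma^2 \mathbf{I}$. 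Running the lasso on $(\mathbf{X}^*, \mathbf{Y}^*)$ yields an oracle active set $\hat{\mathcal{M}}^*$, signs $\hat{\mathbf{s}}^*$, estimate $(\hat{\bm\beta}^{*}_\mathcal{M})_j = \bm\eta^{*T}\mathbf{Y}^*$, and truncation points $L^*, U^*$. Applying \Cref{lem:lasso-selective-inference} pointwise in $(\mathbf{T}, \mathbf{X})$ gives that the oracle pivot $F((\hat{\bm\beta}^{*}_\mathcal{M})_j; (\bm\beta^{*}_\mathcal{M})_j, \sigma^2\|\bm\eta^*\|^2, L^*, U^*)$ conditional on $\{\hat{\mathcal{M}}^* = \mathcal{M}, \hat{\mathbf{s}}^* = \mathbf{s}\}$ is exactly $\mathrm{Unif}(0,1)$.

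I would then transfer this to the actual pivot in three steps. First, show that the estimated and oracle selection events agree with probability tending to one, i.e.\ $\mathrm{P}(\hat{\mathcal{M}} = \hat{\mathcal{M}}^*,\, \hat{\mathbf{s}} = \hat{\mathbf{s}}^* \mid \hat{\mathcal{M}} = \mathcal{M}, \hat{\mathbf{s}} = \mathbf{s}) \to 1$. \Cref{assump:kkt} forces the nonzero lasso coefficients to be bounded away from zero, and \Cref{assump:accuracy-treatment,assump:accuracy-outcome} together with \Cref{assump:support-X} make the perturbations $\|\tilde{\mathbf{X}} - \mathbf{X}^*\|_F$ and $\|\tilde{\mathbf{Y}} - \mathbf{Y}^*\|_2$ small enough that Lipschitz stability of the lasso keeps the active set and signs fixed. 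Second, show that each argument of $F$ converges to its oracle counterpart: $|(\hat{\bm\beta}_\mathcal{M})_j - (\hat{\bm\beta}^{*}_\mathcal{M})_j| = o_p(1)$, $|\|\tilde{\bm\eta}\|^2 - \|\bm\eta^*\|^2| = o_p(1)$, and $L - L^* = o_p(1)$, $U - U^* = o_p(1)$ — the first bound follows from \Cref{lem:beta-tilde-star-random} together with the argument driving \Cref{thm:beta-fixed}, and the others from continuity of the defining expressions combined with \Cref{assump:design}. Third, invoke uniform continuity of $F(y;\mu,\sigma^2,l,u)$ on the relevant region, which is guaranteed by \Cref{assump:smooth-pivot} keeping $U - L$ bounded away from zero (normalized by $\sigma\|\tilde{\bm\eta}\|$), so that the actual pivot equals the oracle pivot plus $o_p(1)$, and hence converges in distribution to $\mathrm{Unif}(0,1)$.

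The main obstacle is the perturbation analysis of the truncation points $L$ and $U$. Unlike the test statistic, whose behavior is essentially governed by \Cref{thm:beta-fixed}, the thresholds are defined through maxima over sign-restricted ratios built from $\mathbf{A}_1(\mathcal{M}, \mathbf{s})$ and $\mathbf{b}_1(\mathcal{M}, \mathbf{s})$, and they are consequently nonsmooth in the inputs. To control them I would work conditionally on the fixed active sign pattern $(\mathcal{M}, \mathbf{s})$ (so the constraint matrices are fixed) and exploit the sparse-eigenvalue \Cref{assump:design} together with \Cref{assump:smooth-pivot} to obtain continuity of each ratio on the conditioning event. A secondary subtlety is that the target $\bm\beta^{*}_\mathcal{M}$ is defined via the true $\mu_t$ while the lasso and pivot use $\hat\mu_t$; \Cref{lem:beta-tilde-star-random} provides the $o_p(n^{-1/2})$ bridge between $\bm\beta^{*}_\mathcal{M}$ and its plug-in analogue $\tilde{\bm\beta}_\mathcal{M}$, but one must track carefully that conditioning on the data-dependent selection event does not inflate the $o_p$ rates. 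Once these perturbation bounds are assembled, the conclusion follows by Slutsky-type reasoning, using that $F$ is continuous wherever the truncation interval is nondegenerate.
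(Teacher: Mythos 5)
Your high-level architecture (exact uniformity of an oracle pivot via \Cref{lem:lasso-selective-inference}, then a perturbation argument) matches the paper's, but there are two concrete gaps and one structural choice that creates unaddressed work. First, your stated perturbation rates are at the wrong scale. By \Cref{lem:eta}, $\tilde{\bm{\eta}}^T\tilde{\bm{\eta}} = \Theta_p(1/n)$, so every argument of $F$ is normalized by $\sigma\|\tilde{\bm{\eta}}\| = \Theta_p(n^{-1/2})$; showing $|(\hat{\bm{\beta}}_{\mathcal{M}})_j - (\hat{\bm{\beta}}^{*}_{\mathcal{M}})_j| = o_p(1)$, $L - L^{*} = o_p(1)$, $U - U^{*} = o_p(1)$ is therefore useless (and $|\|\tilde{\bm{\eta}}\|^2 - \|\bm{\eta}^{*}\|^2| = o_p(1)$ is vacuous since both sides are $o_p(1)$ individually). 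The entire point of \Cref{assump:accuracy-treatment,assump:accuracy-outcome,assump:kkt} is to get these errors below the $n^{-1/2}$ scale; this is what \Cref{lem:replace-lemma-7,lem:beta-tilde-star-random} deliver in the paper. Second, even at the right scale, the truncation limits cannot be controlled additively: the paper's \Cref{lem:U-difference} only obtains a \emph{multiplicative} bound of the form $|A_n - B_n| \le |A_n| C_n + D_n$ on the normalized thresholds, because $U$ and $L$ are minima/maxima of ratios and the normalized $U$ may diverge. The conclusion $|\Phi(A_n) - \Phi(B_n)| \to 0$ then requires the Gaussian-tail argument (that $\phi(ca)\,a$ is bounded), not ``uniform continuity of $F$ on the relevant region.'' Your sketch correctly identifies the thresholds as the main obstacle but does not supply the mechanism that resolves it.

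The structural difference: you place the true $\mu_t$ in the oracle design ($\mathbf{X}^{*}_i = (T_i - \mu_t(\mathbf{X}_i))\mathbf{X}_i$), whereas the paper keeps the \emph{estimated} design $\tilde{\mathbf{X}}$ throughout and only swaps the response centering $\hat{\bm{\mu}}_y \leftrightarrow \bm{\mu}_y$ and the target $\tilde{\bm{\beta}} \leftrightarrow \bm{\beta}^{*}$. The paper can do this because cross-fitting makes $\hat{\mu}_t$ independent of the noise, so \Cref{lem:lasso-selective-inference} applies conditionally with $\tilde{\mathbf{X}}$ as the design, and the design error is absorbed into $\tilde{\bm{\beta}}$ via \Cref{lem:beta-tilde-star-random}. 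Your route additionally requires showing that the lasso on $(\mathbf{X}^{*}, \cdot)$ and on $(\tilde{\mathbf{X}}, \cdot)$ select the same signed model and have truncation limits agreeing to $o_p(n^{-1/2})$ under a design perturbation; \Cref{assump:kkt} gives a margin only for the lasso actually computed with $\tilde{\mathbf{X}}$, so propagating it through a design change is extra work your proposal does not carry out (and ``Lipschitz stability of the lasso'' is not a property of the active set, which is discrete). I would recommend adopting the paper's choice of oracle and then supplying the $n^{-1/2}$-scale bounds and the multiplicative threshold argument.
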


The main challenge in proving \Cref{thm:asymptotic-validity} is that the
CDF $F$ is a ratio (defined in \Cref{eq:truncated-normal}), so it is
necessary to bound the error in both the numerator and the
denominator. Also, the truncation limits $L$ and $U$ involve taking
maxima over many
constraints, which lead to additional compications. Notice that our proof of \Cref{thm:asymptotic-validity}
can be easily extended to other variable selection methods such as
forward stepwise regression as long as the selection
event $\{\hat{\mathcal{M}} = \mathcal{M}\}$ can be characterized as
linear constraints of $\mathbf{Y}$
\citep{loftus2014significance,taylor2015statistical}. In this case,
\Cref{assump:kkt}
needs to be replaced by the condition that these linear constraints
are satisfied with at least $O(1/\sqrt{n})$ margin. See \Cref{lem:kkt}
in the Appendix.

Similar to the case in \Cref{sec:infer-fixed-model}, the pivot in
\eqref{eq:asymptotic-validity} has no unknown parameter (except for
$\sigma^2$, which is assumed to be known in the selective inference of
\citet{lee2013exact}) and can be inverted as in
\eqref{eq:invert-pivot} to obtain the confidence intervals for the
coefficients $\bm{\beta}^{*}_{\hat{\mathcal{M}}}$.

%%% Local Variables:
%%% mode: latex
%%% TeX-master: t
%%% End:

\section{Simulation}
\label{sec:simulation}

\subsection{Validity of selective inference for effect modification}
\label{sec:select-infer-effect}

We evaluate the method proposed in this paper with data simulated
from the causal model \eqref{eq:causal-model}. We consider a
comprehensive simulation design parametrized by the following
parameters
\begin{itemize}
\item $s_t$: sparsity of $\mu_t$, either $0$ (a
  randomized experiment), $5$, or $25$.
\item $f_t$: functional form of $\mu_t$, either linear (lin), quadratic
  (qua), a five-variate function used by \citet{friedman1989flexible}
  (FS), or a five-variate function used by
  \citet*{friedman1983multidimensional} (FGS); see below for detail.
\item $s_y$: sparsity of $\mu_y$, either $5$ or $25$.
\item $f_y$: functional form of $\mu_y$, same options as $f_t$.
\item $s_{\Delta}$: sparsity of $\Delta$, either $5$ or $25$.
\item $f_{\Delta}$: functional form of $\Delta$, same options as $f_t$.
\item $\sigma$: standard deviation of the noise, either
  $0.25$ or $0.5$.
\item noise: distribution of the noise, either $\sigma \cdot \mathrm{N}(0,
  1)$ or $\sigma \cdot \mathrm{double\textrm{-}exp}(0, 1/\sqrt{2})$.
\end{itemize}
These give us $3072$ simulation settings in total. The functional forms
are
\begin{itemize}
\item Linear: $f(x_1,x_2,x_3,x_4,x_5) = 3x_1 + x_2 + x_3 + x_4 + x_5 -
  3.5$;
\item Quadratic: $f(x_1,x_2,x_3,x_4,x_5) = 3(x_1 - 0.5)^2 + (x_2 -
  0.5)^2 + (x_3 - 0.5)^2 + (x_4 - 0.5)^2 + (x_5 - 0.5)^2 + 3x_1 + x_2
  + x_3 + x_4 + x_5 - 4$;
\item FS: $f(x_1,x_2,x_3,x_4,x_5) = [0.1 \exp^{4x_1} +
  4/(1+\exp^{-20(x_2-0.5)}) + 3 x_3 + 2x_4 + x_5 - 6.3]/2.5$;
\item FGS: $f(x_1,x_2,x_3,x_4,x_5) = [10 \sin(\pi x_1 x_2) + 20
  (x_3-0.5)^2 + 10 x_4 + 5 x_5 - 14.3]/4.9$.
\end{itemize}
In every setting,
we generate $n = 1000$ observations and $p = 25$ covariates that are
uniformly distributed over $[0,1]$ and independent.
If the sparsity is $5$, for example $s_t = 5$, then $\mu_t(\mathbf{x})
= f(x_1,x_2,x_3,x_4,x_5)$. If the sparsity is $25$, then $\mu_t(\mathbf{x})
= f(x_1,x_2,x_3,x_4,x_5)/1^2 + f(x_6,x_7,x_8,x_9,x_{10})/2^2 + \cdots +
f(x_{21},x_{22},x_{23},x_{24},x_{25})/5^2$ and similarly for
$\mu_y(\mathbf{x})$ and $\Delta(\mathbf{x})$. To evaluate the performance
of selective inference in high-dimensional settings, we also
simulate a more challenging setting with $p=500$ covariates by appending
$475$ independent covariates to $\bm X$.

After the data are generated, we use the lasso (with cross validation) or random forest to estimate the
nuisance functions
$\mu_t(\mathbf{x})$ and $\mu_y(\mathbf{x})$. For the lasso we use the
\texttt{R} package \texttt{glmnet}
\citep{friedman2010regularization}. For the random forest we use the
\texttt{R}
package \texttt{randomForest} \citep{R-random-forest} with all the
default tuning parameters
except \texttt{nodesize = 20} \rev{ and \texttt{mtry = 25}}. \rev{For
  $\hat{\mu}_t(\mathbf{x})$ and $\hat{\mu}_y(\mathbf{x})$, we use the
  out-of-bag (OOB) predictions from the random forests. This may serve
as a proxy to cross-fitting.} We have also tried to use cross-fitting to
estimate the nuisance functions, but that appears to deteriorate the
performance of our methods. Thus in all the empirical investigations below, we decide
to use the full sample to estimate the nuisance functions. See
\Cref{sec:assumptions-paper} for more discussion on cross-fitting.

We select effect
modifiers using the lasso regression \eqref{eq:lasso} with $\lambda =
1.1 \times \mathrm{E}[\|\mathbf{X}\bm{\epsilon}\|_{\infty}]$ where
$\bm{\epsilon} \sim \mathrm{N}(\mathbf{0},\hat{\sigma}^2\mathbf{I}_p)$
as recommended by \citet{negahban2012unified}. The noise variance
$\sigma^2$ is estimated by the full linear regression of
$Y_i - \hat{\mu}_y(\mathbf{X}_i)$ on $[T_i -
\hat{\mu}_t(\mathbf{X}_i)]\mathbf{X}_i$, $i=1,\dotsc,n$. Finally we use
the asymptotic pivot in \eqref{eq:asymptotic-validity} to construct
selective 95\%-confidence intervals for the selected submodel as
implemented in the function \texttt{fixedLassoInf} in the \texttt{R}
package \texttt{selectiveInference} \citep{R-selective}. In
each simulation setting, we run the above procedure for $300$
independent realizations. Three error metrics are reported: the false
coverage rate (FCR) defined in \eqref{eq:fcr}, the selective type I error
(STIE) defined in \eqref{eq:stie}, and the false sign rate (FSR)
\[
\mathrm{FSR} = \mathrm{E}\bigg[\frac{\#\big\{j \in \hat{\mathcal{M}}:
    0\not\in
    [D^{-}_j,D^{+}_j],\,\big(\bm{\beta}^{*}_{\mathcal{M}}\big)_j \cdot
    D^{-}_j < 0\big\}}{\max(|\hat{\mathcal{M}}|,1)} \bigg]
\]
to examine if any significant selective confidence interval has the
incorrect sign.

\subsubsection{Results in the low-dimensional settings}
\label{sec:low-dimens-sett}

In \Cref{tab:linear} we report the simulation results in the
low-dimensional settings when the true
functional forms are all linear and the nuisance functions are estimated
by the random forest. The size of the selected model
$|\hat{\mathcal{M}}|$ seems to heavily depend on the intrinsic complexity
of the nuisance parameter ($s_t$, $s_y$) and the noise level
($\sigma$). The selective type I error and the false
coverage rate were controlled at the nominal 5\% level even when the
noise is non-Gaussian, and no significant confidence interval
containing only
incorrect signs was found. Similar conclusions can be reached from
\Cref{tab:nonlinear} where exactly one of the true functional forms is
nonlinear, with the exception that in two simulation settings the
false coverage rates (FCR) were greater than $10\%$. In both cases, the true
propensity score $\mu_t(\mathbf{x})$ is generated by the FGS and the
biases of the estimated propensity score $\hat{\mu}_t$ were larger than
those in the other settings.

\begin{table}[t]
\centering
\begin{tabular}{rlrlrlrlrrrrrr}
  \hline
$s_t$ & $f_t$ & $s_y$ & $f_y$ & $s_{\Delta}$ & $f_{\Delta}$ & $\sigma$
& noise & $|\hat{\mathcal{M}}|$ & \# sig & FCR & STIE & FSR &
$\mathrm{bias}(\hat{\mu}_t)$ \\
  \hline
0 & lin & 5 & lin & 5 & lin & 0.25 & normal & 4.18 & 3.32 & 0.051 & 0.052 & 0.000 & 0.0018 \\
  0 & lin & 5 & lin & 5 & lin & 0.5 & normal & 1.96 & 1.36 & 0.050 & 0.049 & 0.000 & -0.0015 \\
  0 & lin & 5 & lin & 5 & lin & 0.25 & exp & 4.14 & 3.19 & 0.053 & 0.056 & 0.000 & -0.0023 \\
  0 & lin & 5 & lin & 5 & lin & 0.5 & exp & 1.87 & 1.36 & 0.058 & 0.066 & 0.000 & -0.0010 \\
  5 & lin & 5 & lin & 5 & lin & 0.25 & normal & 2.02 & 1.37 & 0.021 & 0.026 & 0.000 & 0.0011 \\
  5 & lin & 5 & lin & 5 & lin & 0.5 & normal & 1.11 & 1.03 & 0.043 & 0.045 & 0.000 & 0.0021 \\
  25 & lin & 5 & lin & 5 & lin & 0.25 & normal & 1.83 & 1.37 & 0.039 & 0.038 & 0.000 & 0.0019 \\
  25 & lin & 5 & lin & 5 & lin & 0.5 & normal & 1.13 & 1.04 & 0.030 & 0.033 & 0.000 & 0.0027 \\
  0 & lin & 25 & lin & 5 & lin & 0.25 & normal & 3.23 & 2.23 & 0.044 & 0.049 & 0.000 & -0.0002 \\
  25 & lin & 5 & lin & 5 & lin & 0.25 & normal & 1.83 & 1.37 & 0.039 & 0.038 & 0.000 & 0.0019 \\
  0 & lin & 5 & lin & 25 & lin & 0.25 & normal & 4.32 & 3.36 & 0.044 & 0.045 & 0.000 & -0.0000 \\
  25 & lin & 25 & lin & 25 & lin & 0.25 & normal & 1.33 & 1.09 & 0.030 & 0.030 & 0.000 & 0.0027 \\
   \hline
\end{tabular}
\caption{Performance of the selective confidence intervals in
  low-dimensional settings where the true $\mu_t(\mathbf{x})$,
  $\mu_y(\mathbf{x})$, and $\Delta(\mathbf{x})$ are linear in
  $\mathbf{x}$ and the nuisance functions are estimated by the random
  forest. The false coverage rates (FCR) and selective type I
  error (STIE) are all close to the
  nominal $5\%$ level. Columns in this table are: sparsity of $\mu_t$
  ($s_t$), functional form of $\mu_t$ ($f_t$), sparsity of $\mu_y$
  ($s_y$), functional form of $\mu_y$ ($f_y$), sparsity of $\Delta$
  ($s_{\Delta}$), functional form of $\Delta$ ($f_{\Delta}$), standard
  deviation of the noise ($\sigma$), distribution of the noise
  (noise), average size of selected models ($|\hat{\mathcal{M}}|$),
  average number of significant partial regression coefficients (\#
  sig), false coverage rate (FCR), selective type I error (STIE), false sign rate (FSR), average bias
  of the estimated propensity score ($\mathrm{bias}(\hat{\mu}_t)$).}
\label{tab:linear}
\end{table}

\begin{table}[t]
\centering
\begin{tabular}{rlrlrlrlrrrrrr}
  \hline
$s_t$ & $f_t$ & $s_y$ & $f_y$ & $s_{\Delta}$ & $f_{\Delta}$ & $\sigma$
& noise & $|\hat{\mathcal{M}}|$ & \# sig & FCR & STIE & FSR &
$\mathrm{bias}(\hat{\mu}_t)$ \\
  \hline
0 & lin & 5 & quad & 5 & lin & 0.25 & normal & 4.25 & 3.51 & 0.059 & 0.063 & 0.000 & -0.0000 \\
  0 & lin & 5 & FS & 5 & lin & 0.25 & normal & 4.72 & 4.21 & 0.048 & 0.048 & 0.000 & -0.0009 \\
  0 & lin & 5 & FGS & 5 & lin & 0.25 & normal & 3.18 & 2.18 & 0.066 & 0.064 & 0.000 & -0.0000 \\
  0 & lin & 5 & lin & 5 & quad & 0.25 & normal & 4.08 & 3.20 & 0.058 & 0.060 & 0.000 & -0.0021 \\
  0 & lin & 5 & lin & 5 & FS & 0.25 & normal & 3.28 & 2.98 & 0.053 & 0.054 & 0.000 & 0.0007 \\
  0 & lin & 5 & lin & 5 & FGS & 0.25 & normal & 3.75 & 3.47 & 0.040 &
  0.040 & 0.000 & -0.0006 \\
  5 & quad & 5 & lin & 5 & lin & 0.25 & normal & 2.47 & 1.72 & 0.042 & 0.045 & 0.000 & -0.0030 \\
  5 & FS & 5 & lin & 5 & lin & 0.25 & normal & 2.34 & 1.61 & 0.064 & 0.060 & 0.000 & 0.0011 \\
  5 & FGS & 5 & lin & 5 & lin & 0.25 & normal & 2.13 & 1.67 & {\bf 0.136} & {\bf 0.125} & 0.000 & 0.0070 \\
  5 & lin & 5 & quad & 5 & lin & 0.25 & normal & 2.29 & 1.51 & 0.038 & 0.045 & 0.000 & 0.0023 \\
  5 & lin & 5 & FS & 5 & lin & 0.25 & normal & 2.71 & 1.89 & 0.036 & 0.034 & 0.000 & 0.0030 \\
  5 & lin & 5 & FGS & 5 & lin & 0.25 & normal & 1.44 & 1.12 & 0.083 & 0.084 & 0.000 & 0.0016 \\
  5 & lin & 5 & lin & 5 & quad & 0.25 & normal & 1.79 & 1.33 & 0.023 & 0.024 & 0.000 & 0.0014 \\
  5 & lin & 5 & lin & 5 & FS & 0.25 & normal & 2.53 & 2.19 & 0.038 & 0.036 & 0.000 & 0.0032 \\
  5 & lin & 5 & lin & 5 & FGS & 0.25 & normal & 2.82 & 2.37 & 0.032 &
  0.033 & 0.000 & 0.0008 \\
  5 & quad & 5 & lin & 5 & lin & 0.25 & exp & 2.44 & 1.68 & 0.047 & 0.051 & 0.000 & -0.0016 \\
  5 & FS & 5 & lin & 5 & lin & 0.25 & exp & 2.28 & 1.59 & 0.049 & 0.058 & 0.000 & 0.0005 \\
  5 & FGS & 5 & lin & 5 & lin & 0.25 & exp & 2.15 & 1.74 & {\bf 0.117} & 0.099 & 0.000 & 0.0098 \\
  5 & lin & 5 & quad & 5 & lin & 0.25 & exp & 2.16 & 1.56 & 0.030 & 0.032 & 0.000 & 0.0001 \\
  5 & lin & 5 & FS & 5 & lin & 0.25 & exp & 2.72 & 1.95 & 0.021 & 0.023 & 0.000 & 0.0036 \\
  5 & lin & 5 & FGS & 5 & lin & 0.25 & exp & 1.45 & 1.15 & 0.061 & 0.060 & 0.000 & 0.0006 \\
  5 & lin & 5 & lin & 5 & quad & 0.25 & exp & 1.81 & 1.35 & 0.028 & 0.028 & 0.000 & 0.0028 \\
  5 & lin & 5 & lin & 5 & FS & 0.25 & exp & 2.61 & 2.29 & 0.040 & 0.035 & 0.000 & 0.0005 \\
  5 & lin & 5 & lin & 5 & FGS & 0.25 & exp & 2.89 & 2.44 & 0.033 & 0.033 & 0.000 & 0.0024 \\
   \hline
\end{tabular}
\caption{Performance of the selective confidence intervals in
  low-dimensional settings where one of the true $\mu_t(\mathbf{x})$,
  $\mu_y(\mathbf{x})$, and $\Delta(\mathbf{x})$ is nonlinear in
  $\mathbf{x}$ and the nuisance functions are estimated by the random forest. The false coverage rates (FCR) and selective type I
  error (STIE) are close to the
  nominal $5\%$ level in almost all settings (exceptions are bolded). See caption of
  \Cref{tab:linear} for meaning of the columns.}
\label{tab:nonlinear}
\end{table}

\begin{figure}[t]
  \centering
  \includegraphics[width = 0.9\textwidth]{./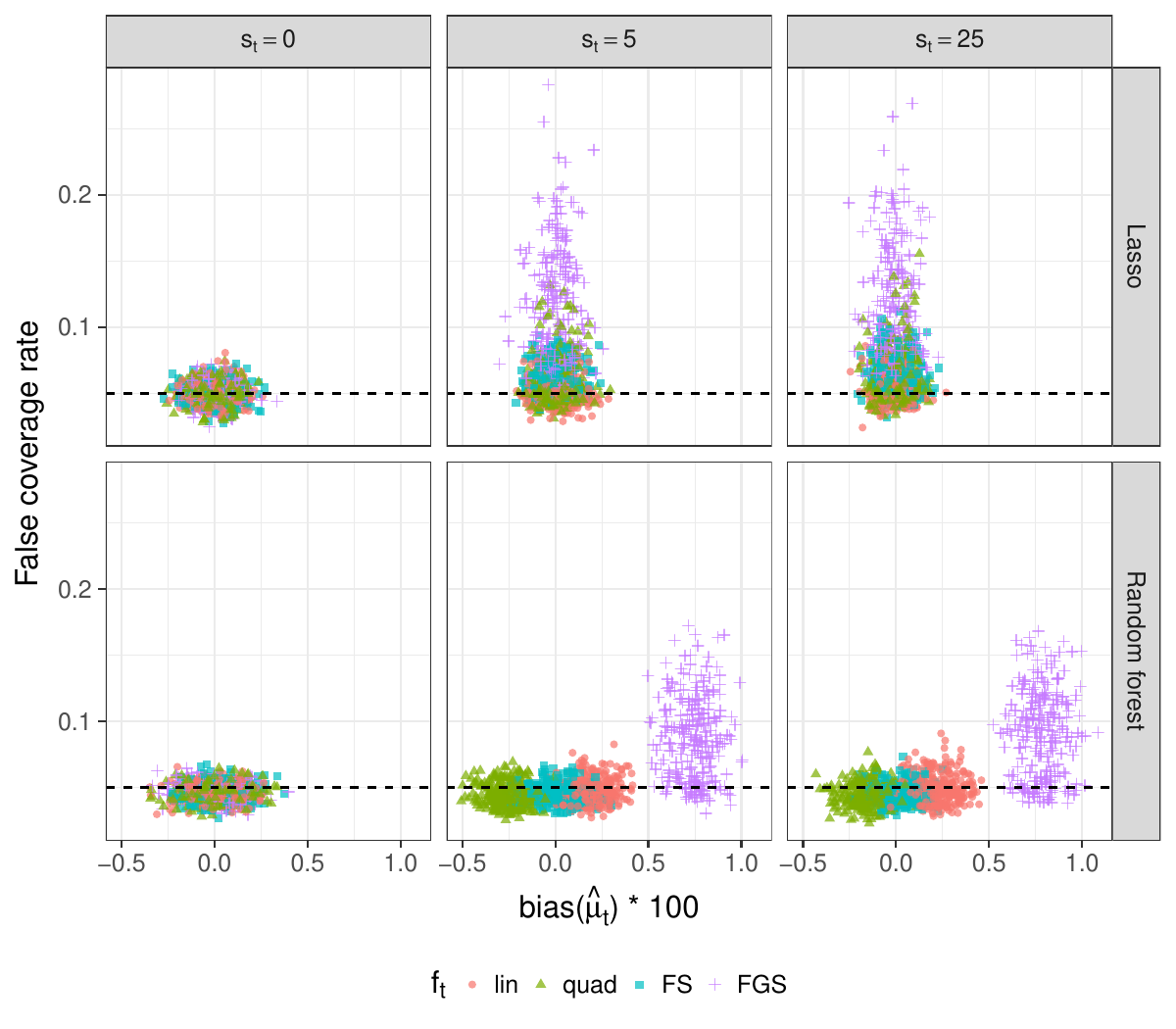}
  \caption{False coverage rate (FCR) versus bias of $\hat{\mu}_t$ in the 3072
    low-dimensional simulation settings ($p = 25$). When $s_t = 0$ (randomized experiment), the
    false coverage rate is controlled under all settings. When $s_t >
    0$ (observational data), there is no guarantee to control the
    FCR. When the nuisance functions are estimated by
    random forest in this case, the FCR is not much larger than the nominal level unless $f_t$ is
    FGS (purple crosses in the figure) where the random forest
    estimates of $f_t$ are biased.}
  \label{fig:fcr-bias-low}
\end{figure}

To get a broader picture of the performance of selective inference,
\Cref{fig:fcr-bias-low} shows the FCR versus the average
bias of $\hat{\mu}_t$ for all the $3072$ simulation settings when $\bm
X$ is low-dimensional ($p = 25$). When
$s_t = 0$ (randomized experiments), the error rates were well controlled at
the nominal 5\% level across all settings, regardless of the dimension
of $\bm X$. When $s_t > 0$ (observational studies), the rate assumption
for $\hat{\mu}_t$ (\Cref{assump:accuracy-treatment}) could be violated and
there is no
guarantee that the selective inference is still asymptotically
valid. Somewhat surprisingly, the false coverage rates were not too
high in most simulation settings. This is especially true when the
nuisance functions are estimated by the random
forest. The FCR was well controlled except when $f_t$ is
FGS, the case that the random forest estimator of  $\mu_t$ was clearly
biased. The selective inference performed poorly when $\mu_t$ and
$\mu_y$ are estimated by lasso, which is not too surprising because
some of the functional forms we used are highly nonlinear.

\subsubsection{Results in the high-dimensional settings}
\label{sec:results-high-dimens}

The coverage of selective CI deteriorated in observational studies
when the dimension of $\bm X$ is high ($p = 500$). \Cref{fig:fcr-bias-high} is the
counterpart of \Cref{fig:fcr-bias-low} in the high dimensional
settings. The FCR is usually much higher than the nominal level,
though the performance of lasso is better than the random forest. It
seems that the ``bet on sparsity'' principle
\citep{hastie2009elements} pays off to some extent here. When both
$\mu_t$ and $\mu_y$ are linear or quadratic functions and when they
are estimated by the lasso, the FCR were never larger than 10\% (this
cannot be directly observed from \Cref{fig:fcr-bias-high}).

\begin{figure}[t]
  \centering
  \includegraphics[width = 0.9\textwidth]{./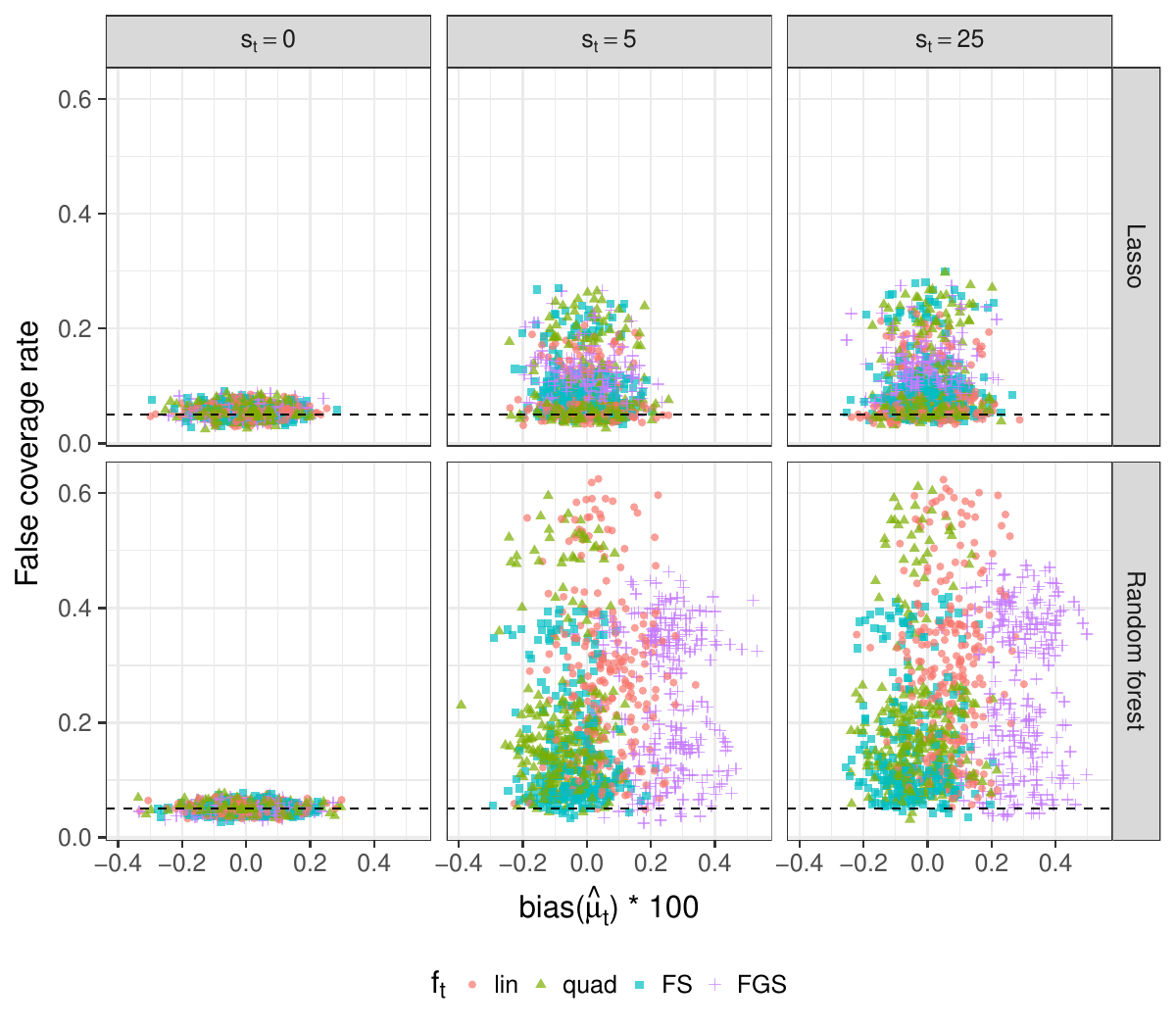}
  \caption{False coverage rate (FCR) versus bias of $\hat{\mu}_t$ in the 3072
    high-dimensional simulation settings ($p = 500$). When $s_t = 0$ (randomized experiment), the
    false coverage rate is well controlled. When $s_t > 0$
    (observational data), the FCR can be much larger
    than the nominal 5\% level, though FCR is is generally smaller
  when the nuisance functions are estimated by the lasso instead of
  random forest.}
  \label{fig:fcr-bias-high}
\end{figure}

\subsubsection{Coverage of WATE}
\label{sec:coverage-ate}

A somewhat surprising observation from the previous figures is that
although our selective inference procedure cannot guarantee selective
error control due to estimation error of $\mu_t$ and $\mu_y$, the
false coverage rate was not too much larger than the nominal level in
many simulation settings. In \Cref{fig:fcr-bias-high} we compare FCR
with coverage error of a weighted average treatment effect (WATE) in
all the simulation settings. \rev{The WATE is given by solving the
  optimization problem \eqref{eq:beta-star-causal} without effect
  modification, that is,
  \begin{equation} \label{eq:wate}
    \text{WATE} = \argmin_{\alpha} \sum_{i=1}^n \big[T_i -
    \mu_t(\mathbf{X}_i)\big]^2\big[\Delta(\mathbf{X}_i) -
    \alpha\big]^2 = \frac{\sum_{i=1}^n \big[T_i -
    \mu_t(\mathbf{X}_i)\big]^2\Delta(\mathbf{X}_i)}{\sum_{i=1}^n
    \big[T_i - \mu_t(\mathbf{X}_i)\big]^2}.
\end{equation}
Notice that as $n \to \infty$, we have
\[
  \text{WATE} \overset{p}{\to} \frac{\mathbb{E}[\mu_t(\bm X_i) (1 -
    \mu_t(\bm X_i)) \Delta(\bm X_i)]}{\mathbb{E}[\mu_t(\bm X_i) (1 -
    \mu_t(\bm X_i))]}
\]
by the law of large numbers. The population limit on the right hand side
is known as the optimally weighted average treatment effect
\citep{crump2006moving}; see also \citet{li2018balancing,zhao2019covariate}.
}

Confidence intervals of the WATE were obtained by a simple linear regression
of $Y_i - \hat{\mu}_y(\bm X_i)$ on $T_i - \hat{\mu}_t(\bm X_i)$. In
randomized settings ($s_t = 0$), both error rates were controlled at
the nominal level as expected. In observational settings ($s_t = 5$ or
$25$), coverage of WATE was very poor, sometimes completely missing the
target (coverage error is almost 100\%). In contrast, although there
is also no guarantee of controlling the FCR as we have shown previously,
the FCR was always smaller than the coverage error of WATE. This
suggests that the selective inference of effect modification may be
more robust to estimation error in the nuisance functions than the
semiparametric inference of WATE.

\begin{figure}[t]
  \centering
  \includegraphics[width = 0.9\textwidth]{./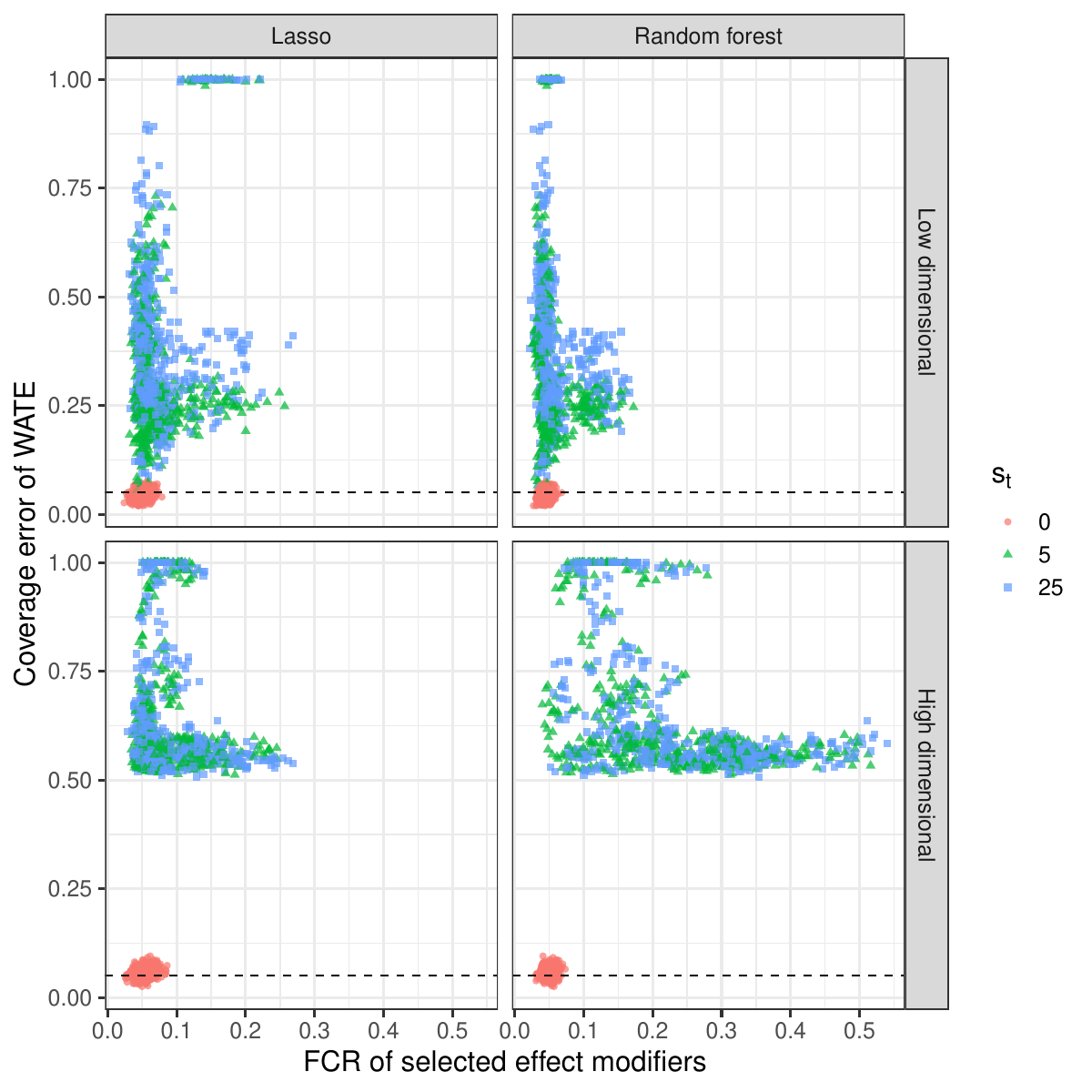}
  \caption{Coverage error of WATE versus false coverage rate (FCR) of
    the effect modifiers, stratified by the dimension of the data and
    the method used to estimate the nuisance functions. When $s_t = 0$
    (randomized experiment, red
    circles in the figure), both error rates are controlled at the
    nominal level (dashed line). When $s_t > 0$ (observational study),
    coverage of WATE is very poor, sometimes completely missing the
    target. There is also no guarantee of controlling FCR in general,
    though the FCR is smaller than the coverage error of WATE in
  the all the simulation settings.}
  \label{fig:fcr-bias-high}
\end{figure}

\subsection{Necessity and sufficiency of the rate assumptions}
\label{sec:valid-select-conf}

One of the main theoretical conclusions of this paper is that, when
the design and the outcome are observed with error, the selective
pivotal statistic is
still asymptotically valid as long as the classical semiparametric
rate assumptions
\Cref{assump:accuracy-treatment,assump:accuracy-outcome} are
satisfied. In the next simulation, we verify the sufficiency and
necessity of the crucial condition $\|\hat{\mu}_t - \mu_t\|_2
\cdot \|\hat{\mu}_y - \mu_y\|_2 =   o_p(n^{-1/2})$ in an
idealized setting. In this
simulation, the true design and the true outcome were generated by
\[
\mathbf{X}_i \in \mathbb{R}^{30} \overset{i.i.d.}{\sim}
\mathrm{N}(\mathbf{0},\mathbf{I}_{30}),~\mathrm{Y}_i \overset{i.i.d.}{\sim}
\mathrm{N}(\mathbf{X}_i^T \bm{\beta}, 1),~i=1,\dotsc,n,
\]
where $\bm{\beta} = (1,1,1,0,\dotsc,0)^T \in \mathbb{R}^{30}$.
%  and $\bm{\Sigma}$ in this simulation is the identity matrix,
% $\Sigma_{kl} = \delta_{kl}$. In the supplementary file we report
% similar simulation results for a correlated design,
% $\Sigma_{kl}=0.5^{|k-l|}$.
Next, the
design and the outcome were perturbed by
\begin{equation} \label{eq:perturbation}
\mathbf{X}_i \mapsto \mathbf{X}_i \cdot (1 + n^{-\gamma}
D_{1i}),~Y_i \mapsto Y_i + n^{-\gamma} D_{2i},
\end{equation}
where $D_{1i}$ and $D_{2i}$ are independent standard Gaussian random
variables. Since the nuisance parameters $\mu_t$ and $\mu_y$ are
always estimated with error in \Cref{sec:select-infer-effect}, the $(1
+ n^{-\gamma} D_{1i})$ and $n^{-\gamma}D_{2i}$ terms were used to
simulate the estimation error. We used five different values of
$\gamma$ in this simulation, $\gamma =0.15$, $0.2$, $0.25$, $0.3$, or
$0.35$. Then we pretended the perturbed design and outcome were the truth
and used the pivot \eqref{eq:lasso-pivot} to
obtain selective $90\%$-confidence intervals, after solving a lasso
regression with $\lambda =
2\mathrm{E}[\|\mathbf{X}\bm{\epsilon}\|_{\infty}]$ that is commonly
used in high dimensional regression \citep{negahban2012unified}. We
also compared the performance of selective inference with the naive
inference that ignores model selection.

In \Cref{fig:perturbation}, we report the average false coverage
proportion in $100$ realizations for each $\gamma$ and sample size
$n$. The naive inference failed to control the false coverage rate in
every setting. For selective inference, a phase transition phenomenon
occurred at $\gamma = 0.25$: when $\gamma < 0.25$, the false coverage rate
increases as the sample size increases; when $\gamma > 0.25$, the
false coverage rate converges to the nominal $10\%$ level as the
sample size increases. This observation is consistent with the rate assumption
$\|\hat{\mu}_t - \mu_t\|_2 \cdot \|\hat{\mu}_y -
\mu_y\|_2 = o_p(n^{-1/2})$ in \Cref{assump:accuracy-outcome}.

\begin{figure}[t]
  \centering
  % \begin{subfigure}[b]{0.49\textwidth}
    \includegraphics[width = 0.9\textwidth]{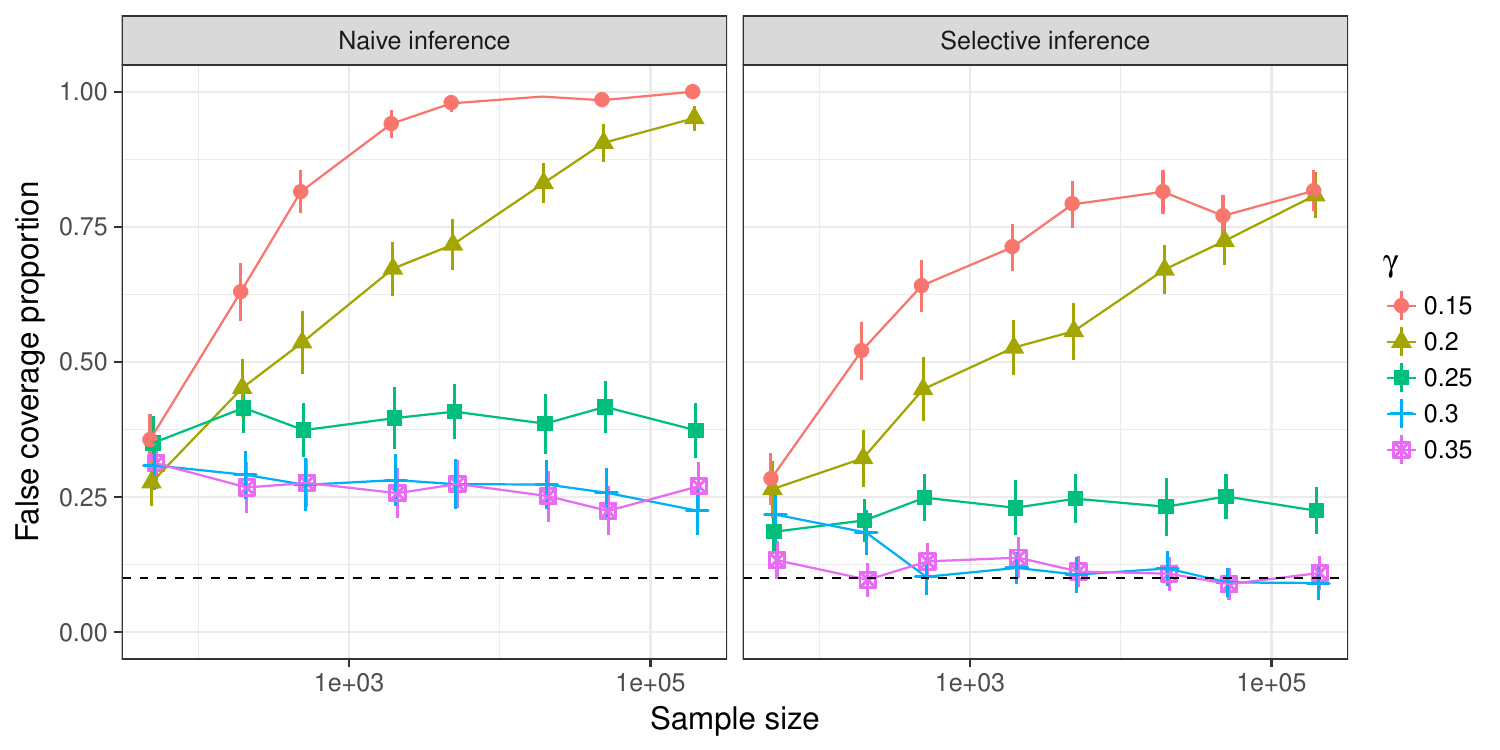}
    % \caption{Independent Gaussian design.}
  % \end{subfigure}
  % \begin{subfigure}[b]{0.49\textwidth}
  %   \includegraphics[width =
  %   \textwidth]{}
  %   \caption{Correlated Gaussian design.}
  % \end{subfigure}
  \caption{False coverage proportion under different strengths of
    perturbation and different sample sizes. Using naive inference
    that ignores the model is selected using the data, the false
    coverage rate is not controlled. Using selective
    inference, the false coverage proportion converges to the nominal
    $10\%$ level (the dashed horizontal line) if and only if $\gamma >
    0.25$.}
  \label{fig:perturbation}
\end{figure}

%%% Local Variables:
%%% mode: latex
%%% TeX-master: t
%%% End:

\section{Application: Overweight and systemic inflammation}
\label{sec:appl-obes-elev}

Finally we use an epidemiological study to demonstrate the method
proposed in this paper. \citet{visser1999elevated} studied the effect
of being overweight on low-grade systemic inflammation as measured by serum
C-reactive protein (CRP) level. Overweight was defined as body mass
index (BMI) greater than 25. Using the Third National Health and
Nutrition Examination Survey (NHANES III, 1988--1994), they found that
the CRP level is more likely to be elevated among overweight adults and the
effect is modified by gender and age group.

\subsection{Dataset and methods}
\label{sec:dataset-methods}

We obtain a more recent dataset from NHANES 2007--2008 and
2009--2010. We restricted to survey respondents who were not pregnant,
at least $21$ years old, and whose BMI and CRP levels are not
missing. Among the 10679 people left, 969 have missing income, 4 have
missing marital status, 15 have missing education, 1 has missing
information about frequent vigorous recreation, and 20 have no current
smoking information. To illustrate the method in this paper, we ignore
the entries with missing variables and end up with 9677
observations. The dataset and \texttt{R} code of our analysis are
included in the supplement.

\newcommand{\mytilde}{$\sim$}

The CRP level in the dataset ranges from $0.01$ mg/dL to $20.00$ mg/dL
and roughly follows a log-normal distribution (see the supplementary
file). Therefore we decided to
use $\log_2(\mathrm{CRP})$ as the response in the regression. We
use all the confounders identified in
\citet{visser1999elevated}, including gender, age, income, race,
marital status, education,
vigorous work activity (yes or no), vigorous recreation activities
(yes or no), ever smoked, number of cigarettes smoked in the last
month, estrogen usage, and if the survey respondent had bronchitis,
asthma, emphysema, thyroid problem, arthritis, heart attack, stroke, liver
condition, and gout. There are in total $20$ variables and some of
them are categorical. Using the \texttt{R} function
\texttt{model.matrix}, the design matrix $\mathbf{X}$ we use has
$9677$ rows and $30$ columns. To test the performance of selective inference
  in higher dimensions, we also consider a bigger model containing all
  the main effects and the first-order interactions of $\bm{X}$ (365 columns
  in total). We refer the reader to the supplementary file for more summary
  statistics of these variables.

We examine five different statistical analyses of effect modification
using this dataset:
\begin{enumerate}
\item Naive linear model: both $\eta(\mathbf{x})$ and
  $\Delta(\mathbf{x})$ are modeled by linear functions of
  $\mathbf{x}$.
\item Full model: in the following four models, the nuisance
  parameters ($\mu_y$ and $\mu_t$) are estimated by the random forest
  \citep{breiman2001random} (as implemented in the \texttt{R} package
  \texttt{randomForest}). In the full model, $\Delta(\mathbf{x})$ is
  modeled by the full linear model $\Delta(\mathbf{x}) \approx \alpha
  + \mathbf{x}^T \bm{\beta}$.
\item Univariate screening: $\Delta(\mathbf{x})$ is modeled by univariate
  linear model $\Delta(\mathbf{x}) \approx \alpha + x_j \beta_j$ for each
  $j=1,\dotsc,p$ (in the analysis we centered each column of
  $\mathbf{X}$, so the intercept $\alpha$ is the same).
\item Selected model: $\Delta(\mathbf{x}) \approx
  \alpha_{\hat{\mathcal{M}}} + \bm{x}_{\hat{\mathcal{M}}}^T
  \bm{\beta}_{\hat{\mathcal{M}}}$ where $\hat{\mathcal{M}}$ is selected by solving the
  lasso problem \eqref{eq:lasso} with $\lambda =
  1.1\times\mathrm{E}[\|\mathbf{X}\bm{\epsilon}\|_{\infty}]$ where
  $\bm{\epsilon} \sim \mathrm{N}(\mathbf{0},\hat{\sigma}^2\mathbf{I}_p)$ as
  suggested by \citet{negahban2012unified}. Then we used the pivotal
  statistic in \eqref{eq:asymptotic-validity} to make selective
  inference of $\bm{\beta}_{\hat{\mathcal{M}}}$. The noise variance
  $\sigma^2$ is estimated from a full model as suggested by
  \citet{lee2013exact}.
\item Data snooping model: this is the same as selected model except
  the statistical inference of $\bm{\beta}_{\hat{\mathcal{M}}}$
  ignores the fact that $\hat{\mathcal{M}}$ is selected using the data
  (known as data snooping).
\end{enumerate}

\subsection{Average treatment effect}
\label{sec:aver-treatm-effect}

We obtain estimates of the \rev{weighted average treatment effect (WATE) in
\eqref{eq:wate}} using the naive
linear model (method 1) and the full model (method 2). Since the last
four methods use the same estimated nuisance parameters, their
estimates of the WATE are identical. In other
words, their only difference is how effect modification is
modeled. The two estimates of WATE are quite close: using the naive
linear model (nuisance parameters are estimated by linear model), the
point estimate is 1.166 with 95\% confidence
interval [1.088, 1.244]; using the full model (parameters are
estimated by the random forest), the point estimate is 1.168 with
95\% confidence interval [1.071, 1.220].

\subsection{Effect modification}
\label{sec:effect-modification}

The results of the five methods are reported in
\Cref{tab:naive,tab:full,tab:univariate,tab:selective,tab:snooping},
respectively. The first three methods do not select a submodel, so the
coefficients of all $27$ regressors are reported. In the selective
inference (method 4), four regressors
(\texttt{Gender}, \texttt{Age}, \texttt{Stroke}, \texttt{Gout}) were
selected using the lasso when only main effects were used to
approximate $\Delta(\bm x)$. The last two regressors are replaced by \texttt{Age
$\times$ Vigorous work} and \texttt{Age $\times$ Stroke} in the lasso
when interaction terms are allowed. The corresponding partial
coefficients are reported for methods 4 and 5.

Next we discuss three observations that the reader may have already
noticed when comparing these results. First, although the naive linear
model (method 1) and
the full model (method 2) generate very similar estimates of the WATE, the
coefficients in their effect modification models are notably different
(compare \Cref{tab:naive} with \Cref{tab:full}). For example, the
estimated coefficient of \texttt{Gender} is $0.654$ using method 1 and $0.481$
using method 2. In general, the full model is more credible because
the nuisance parameters are more accurately estimated.

Second, the univariate screening (method 3) detects many
covariates that are very likely not the actual effect modifiers. Besides \texttt{Gender} and \texttt{Age}
themselves, all the other significant variables---\texttt{Marital (Widowed)},
\texttt{Marital (Never married)}, \texttt{Arthritis}, \texttt{Heart attack}, \texttt{Stroke}, and
\texttt{Gout}---are strongly correlated with \texttt{Gender} or
\texttt{Age} or both (the sample
correlations are at least $0.15$). When \texttt{Gender} and
\texttt{Age} are already in
the model (they were the first two active variables in the lasso solution path),
these variables are not subsequently selected by the lasso or are not
significant after adjusting for model selection.

Third, \texttt{Stroke} and \texttt{Gout} are selected by the lasso and they are not
significant using selective inference (method 4) in
\Cref{tab:selective-a}. However, they are significant
by using the naive inference that ignores model selection
(method 5) in \Cref{tab:snooping-a}. A similar phenomenon happens when
interaction terms are also used to model $\Delta(\bm x)$. \texttt{Age
$\times$ Vigorous work} and \texttt{Age $\times$ Stroke} are selected
by the lasso but were not significant in a selective inference
(\Cref{tab:selective-b}). The data snooping inference that ignores
model selection would find them to be significant
(\Cref{tab:snooping-b}). In general, non-selective inference
(data snooping) does not generate valid $p$-values and confidence
intervals. The example here demonstrates the practical importance of
selective inference, as stroke and gout would be reported as
significant otherwise.

% latex table generated in R 3.3.2 by xtable 1.8-2 package
% Thu Jun  8 18:17:42 2017
\begin{table}[t]
  \centering
  \begin{tabular}{rrrrrl}
    \hline
    & Estimate & $p$-value & CI low & CI up &  \\
    \hline
    Gender (Female) & 0.654 & 0.000 & 0.488 & 0.821 & *** \\
    Age & -0.024 & 0.000 & -0.030 & -0.018 & *** \\
    Income & -0.019 & 0.495 & -0.072 & 0.035 &   \\
    Race (Hispanic) & 0.052 & 0.752 & -0.272 & 0.377 &   \\
    Race (White) & 0.166 & 0.196 & -0.086 & 0.418 &   \\
    Race (Black) & 0.376 & 0.010 & 0.089 & 0.664 & ** \\
    Race (Other) & 0.038 & 0.842 & -0.337 & 0.414 &   \\
    Marital (Widowed) & -0.083 & 0.593 & -0.389 & 0.223 &   \\
    Marital (Divorced) & 0.161 & 0.202 & -0.086 & 0.409 &   \\
    Marital (Separated) & -0.235 & 0.272 & -0.654 & 0.184 &   \\
    Marital (Never married) & 0.117 & 0.320 & -0.113 & 0.347 &   \\
    Marital (Living with partner) & -0.050 & 0.745 & -0.349 & 0.250 &   \\
    Education (9--11th grade) & 0.259 & 0.100 & -0.049 & 0.566 & . \\
    Education (High school) & 0.307 & 0.041 & 0.013 & 0.601 & * \\
    Education (Some college) & 0.296 & 0.052 & -0.002 & 0.593 & . \\
    Education (College grad.) & 0.316 & 0.054 & -0.006 & 0.638 & . \\
    Vigorous work & -0.019 & 0.854 & -0.216 & 0.179 &   \\
    Vigorous recreation & -0.323 & 0.001 & -0.521 & -0.125 & *** \\
    Ever smoked & -0.067 & 0.447 & -0.239 & 0.105 &   \\
    \# Cigarettes last month & -0.000 & 0.437 & -0.000 & 0.000 &   \\
    Estrogen & -0.645 & 0.002 & -1.063 & -0.228 & ** \\
    Bronchitis & -0.092 & 0.725 & -0.603 & 0.420 &   \\
    Asthma & 0.193 & 0.230 & -0.122 & 0.509 &   \\
    Emphysema & 0.045 & 0.862 & -0.464 & 0.554 &   \\
    Thyroid problem & 0.122 & 0.438 & -0.187 & 0.431 &   \\
    Arthritis & -0.046 & 0.644 & -0.240 & 0.148 &   \\
    Heart attack & -0.178 & 0.393 & -0.586 & 0.230 &   \\
    Stroke & -0.364 & 0.090 & -0.785 & 0.057 & . \\
    Liver condition & -0.332 & 0.311 & -0.973 & 0.310 &   \\
    Gout & -0.584 & 0.012 & -1.040 & -0.129 & * \\
    \hline
  \end{tabular}
  \caption{Results of the naive linear model (method 1) where $\eta(\bm{x})$ and
    $\Delta(\bm{x})$ are modeled by linear functions of $\bm{x}$. In
    other words, it is assumed that $Y_i = \gamma_0 + \mathbf{X}_i^T \bm{\gamma} +
    T_i (\alpha + \mathbf{X}_i^T \bm{\beta})$ and the reported coefficients are
    $\bm{\beta}$. The reported values are the point estimates,
    $p$-values and confidence intervals (CI) of each entry of
    $\bm{\beta}$. In general, the results are less credible because
    the linear model can be misspecified.}
  \label{tab:naive}
\end{table}

% latex table generated in R 3.3.2 by xtable 1.8-2 package
% Thu Jun  8 18:17:43 2017
\begin{table}[t]
  \centering
  \begin{tabular}{rrrrrl}
    \hline
    & Estimate & $p$-value & CI low & CI up &  \\
    \hline
    Gender (Female) & 0.467 & 0.000 & 0.307 & 0.628 & *** \\
    Age & -0.021 & 0.000 & -0.027 & -0.016 & *** \\
    Income & 0.000 & 0.986 & -0.053 & 0.054 &   \\
    Race (Hispanic) & 0.209 & 0.183 & -0.099 & 0.517 &   \\
    Race (White) & 0.266 & 0.028 & 0.028 & 0.504 & * \\
    Race (Black) & 0.384 & 0.006 & 0.112 & 0.656 & ** \\
    Race (Other) & 0.136 & 0.482 & -0.244 & 0.516 &   \\
    Marital (Widowed) & -0.103 & 0.496 & -0.401 & 0.194 &   \\
    Marital (Divorced) & 0.152 & 0.219 & -0.090 & 0.394 &   \\
    Marital (Separated) & -0.157 & 0.501 & -0.613 & 0.299 &   \\
    Marital (Never married) & 0.116 & 0.316 & -0.111 & 0.344 &   \\
    Marital (Living with partner) & -0.168 & 0.271 & -0.468 & 0.131 &   \\
    Education (9--11th grade) & 0.036 & 0.812 & -0.262 & 0.334 &   \\
    Education (High School) & 0.073 & 0.616 & -0.213 & 0.359 &   \\
    Education (Some college) & 0.007 & 0.962 & -0.282 & 0.296 &   \\
    Education (College graduates) & 0.074 & 0.647 & -0.243 & 0.392 &   \\
    Vigorous work & 0.012 & 0.904 & -0.185 & 0.209 &   \\
    Vigorous recreation & -0.314 & 0.002 & -0.514 & -0.113 & ** \\
    Ever smoked & 0.005 & 0.957 & -0.162 & 0.171 &   \\
    \# Cigarettes last month & -0.000 & 0.565 & -0.000 & 0.000 &   \\
    Estrogen & -0.608 & 0.006 & -1.045 & -0.171 & ** \\
    Bronchitis & -0.144 & 0.609 & -0.695 & 0.407 &   \\
    Asthma & 0.197 & 0.242 & -0.133 & 0.527 &   \\
    Emphysema & -0.006 & 0.984 & -0.541 & 0.530 &   \\
    Thyroid & 0.134 & 0.380 & -0.165 & 0.433 &   \\
    Arthritis & -0.076 & 0.427 & -0.264 & 0.112 &   \\
    Heart attack & -0.214 & 0.321 & -0.636 & 0.208 &   \\
    Stroke & -0.492 & 0.023 & -0.917 & -0.067 & * \\
    Liver condition & -0.190 & 0.603 & -0.908 & 0.527 &   \\
    Gout & -0.489 & 0.034 & -0.940 & -0.037 & * \\
    \hline
  \end{tabular}
  \caption{Results of the full model (method 2) where $\mu_y(\bm{x})$ and
    $\mu_t(\bm{x})$ are estimated by the random forest and
    $\Delta(\bm{x})$ are modeled by $\Delta(\bm{x}) \approx \alpha +
    \bm{x}^T \bm{\beta}$. The reported coefficients are
    $\bm{\beta}$, which is the best linear approximation of
    $\Delta(\bm{x})$ in the sense of \eqref{eq:beta-star-causal}. Since
    there are $27$ regressors in total and many of them are strongly
    correlated, it is difficult to interpret these coefficients.}
  \label{tab:full}
\end{table}

% latex table generated in R 3.3.2 by xtable 1.8-2 packAge
% Thu Jun  8 18:17:43 2017
\begin{table}[t]
  \centering
  \begin{tabular}{rrrrrl}
    \hline
    & Estimate & $p$-value & CI low & CI up &  \\
    \hline
    Gender (Female) & 0.494 & 0.000 & 0.350 & 0.639 & *** \\
    Age & -0.021 & 0.000 & -0.025 & -0.017 & *** \\
    Income & -0.013 & 0.582 & -0.057 & 0.032 &   \\
    Race (Hispanic) & 0.108 & 0.399 & -0.143 & 0.359 &   \\
    Race (White) & -0.060 & 0.416 & -0.204 & 0.084 &   \\
    Race (Black) & 0.175 & 0.076 & -0.018 & 0.368 & . \\
    Race (Other) & -0.069 & 0.683 & -0.399 & 0.261 &   \\
    Marital (Widowed) & -0.529 & 0.000 & -0.790 & -0.268 & *** \\
    Marital (Divorced) & 0.088 & 0.455 & -0.143 & 0.320 &   \\
    Marital (Separated) & -0.135 & 0.554 & -0.583 & 0.313 &   \\
    Marital (Never married) & 0.454 & 0.000 & 0.257 & 0.651 & *** \\
    Marital (Living with partner) & 0.073 & 0.609 & -0.206 & 0.351 &   \\
    Education (9--11th grade) & -0.005 & 0.964 & -0.200 & 0.191 &   \\
    Education (High school) & 0.068 & 0.433 & -0.102 & 0.238 &   \\
    Education (Some college) & 0.090 & 0.278 & -0.073 & 0.254 &   \\
    Education (College graduates) & 0.005 & 0.953 & -0.165 & 0.175 &   \\
    Vigorous work & 0.043 & 0.654 & -0.145 & 0.230 &   \\
    Vigorous recreation & -0.048 & 0.602 & -0.229 & 0.133 &   \\
    Ever smokes & -0.138 & 0.062 & -0.282 & 0.007 & . \\
    \# Cigarettes last month & 0.000 & 0.651 & -0.000 & 0.000 &   \\
    Estrogen & 0.070 & 0.746 & -0.352 & 0.491 &   \\
    Bronchitis & -0.194 & 0.457 & -0.704 & 0.317 &   \\
    Asthma & 0.229 & 0.155 & -0.086 & 0.544 &   \\
    Emphysema & -0.453 & 0.067 & -0.937 & 0.032 & . \\
    Thyroid & -0.007 & 0.962 & -0.296 & 0.282 &   \\
    Arthritis & -0.410 & 0.000 & -0.576 & -0.244 & *** \\
    Heart attack & -0.730 & 0.000 & -1.137 & -0.324 & *** \\
    Stroke & -0.919 & 0.000 & -1.332 & -0.506 & *** \\
    Liver condition & -0.526 & 0.149 & -1.240 & 0.189 &   \\
    Gout & -0.987 & 0.000 & -1.428 & -0.546 & *** \\
    \hline
  \end{tabular}
  \caption{Results of the univariate screening (method 3) where $\mu_y(\bm{x})$ and
    $\mu_t(\bm{x})$ are estimated by the random forest and
    $\Delta(\bm{x})$ are then modeled by $\Delta(\bm{x}) \approx \alpha +
    x_j^T \beta_j$ for each $j = 1,\dotsc,p$. This simple method can be used
    to detect potential effect modifiers. However, all the other significant
    regressors are strongly correlated with gender or Age, so it is very
    likely that they are not the actual effect modifiers.}
  \label{tab:univariate}
\end{table}
% latex table generated in R 3.3.2 by xtable 1.8-2 packAge
% Thu Jun  8 18:17:43 2017
\begin{table}[t]
  \centering
  \begin{subtable}[t]{\textwidth}
    \centering
    \begin{tabular}{rrrrrl}
      \hline
      & Estimate & $p$-value & CI low & CI up &  \\
      \hline
      Gender (Female) & 0.476 & 0.000 & 0.330 & 0.624 & *** \\
      Age & -0.019 & 0.000 & -0.024 & -0.015 & *** \\
      Stroke & -0.515 & 0.311 & -0.899 & 1.256 &   \\
      Gout & -0.475 & 0.493 & -0.852 & 2.295 &   \\
      \hline
    \end{tabular}
    \caption{Using only main effects to model effect modification.}
    \label{tab:selective-a}
  \end{subtable}
  \begin{subtable}[t]{\textwidth}
    \centering
    \begin{tabular}{rrrrrl}
      \hline
      & Estimate & $p$-value & CI low & CI up &  \\
      \hline
      Gender (Female) & 0.471 & 0.000 & 0.323 & 0.618 & *** \\
      Age & -0.020 & 0.000 & -0.024 & -0.016 & *** \\
      Age $\times$ Vigorous recreation & 0.018 & 0.371 & -0.052 & 0.027 &   \\
      Age $\times$ Stroke & -0.036 & 0.069 & -0.054 & 0.014 & . \\
      \hline
    \end{tabular}
    \caption{Using main effects and first-order interactions to model
      effect modification.}
    \label{tab:selective-b}
  \end{subtable}
  \caption{Results of the selective inference (method 4) where $\mu_y(\bm{x})$ and
    $\mu_t(\bm{x})$ are estimated by the random forest and
    $\Delta(\bm{x})$ are modeled by $\Delta(\bm{x}) \approx \alpha +
    \mathbf{x}_{\hat{\mathcal{M}}}^T \bm{\beta}_{\hat{\mathcal{M}}}$
    where $\hat{\hat{\mathcal{M}}}$ is selected after fitting a lasso
    with the main effects (\Cref{tab:selective-a}) or the
    main effects and first-order interactions (\Cref{tab:selective-b}). The
    selective $p$-values and confidence intervals are obtained using
    the pivotal statistic \eqref{eq:asymptotic-validity}, which are
    asymptotically valid given the assumptions in this paper.}
  \label{tab:selective}
\end{table}
% latex table generated in R 3.3.2 by xtable 1.8-2 packAge
% Thu Jun  8 18:17:43 2017
\begin{table}[t]
  \centering
  \begin{subtable}[t]{\textwidth}
    \centering
    \begin{tabular}{rrrrrl}
      \hline
      & Estimate & $p$-value & CI low & CI up &  \\
      \hline
      Gender (Female) & 0.476 & 0.000 & 0.332 & 0.620 & *** \\
      Age & -0.019 & 0.000 & -0.023 & -0.015 & *** \\
      Stroke & -0.514 & 0.016 & -0.933 & -0.096 & * \\
      Gout & -0.473 & 0.038 & -0.919 & -0.026 & * \\
      \hline
    \end{tabular}
    \caption{Using only main effects to model effect modification.}
    \label{tab:snooping-a}
  \end{subtable}
  \begin{subtable}[t]{\textwidth}
    \centering
    \begin{tabular}{rrrrrl}
      \hline
      & Estimate & $p$-value & CI low & CI up &  \\
      \hline
      Gender (Female) & 0.471 & 0.000 & 0.327 & 0.615 & *** \\
      Age & -0.020 & 0.000 & -0.024 & -0.016 & *** \\
      Age $\times$ Vigorous recreation & 0.018 & 0.001 & 0.008 & 0.028 & *** \\
      Age $\times$ Stroke & -0.036 & 0.000 & -0.055 & -0.017 & *** \\
      \hline
    \end{tabular}
    \caption{Using main effects and first-order interactions to model
      effect modification.}
    \label{tab:snooping-b}
  \end{subtable}
  \caption{Results of data snooping (method 5). Everything is the same
    as the selective inference (method 4) except that
    $\hat{\mathcal{M}}$ is treated as given. The $p$-values and
    confidence intervals are not valid because the bias
    due to model selection was not taken into account.}
  \label{tab:snooping}
\end{table}

\section{Discussion}
\label{sec:discussion}

\subsection{When is selective inference a good approach for effect modification?}
\label{sec:when-should-select}

In \Cref{sec:introduction} we have compared accuracy and
interpretability of different approaches to modeling effect
modification. The machine learning approaches usually approximate the
conditional average treatment effect $\Delta(\mathbf{x})$ better but
are difficult to interpret. The univariate regressions find
significantly covariates correlated with $\Delta(\mathbf{x})$, but
that correlation can vanish after conditioning on other covariates as
illustrated in the example in \Cref{sec:appl-obes-elev}.
The selective inference approach provides an appealing tradeoff
between accuracy and interpretability and is a good approach for
modeling effect modification when interpretability is important in the
study, for example, when the goal is to generate new scientific
hypotheses or to assist clinicians to make intelligent treatment decisions.

The simulations in \Cref{sec:simulation} show that the selective
inference for effect modification can be trusted in randomized
experiments, but does not guarantee selective error control in
observational studies, especially when the dimension of confounders is
high. The consoling news is that the selective error is often not too
much higher than the nominal level, and is usually smaller than the
coverage error of the weighted ATE. Thus with observational
and high dimensional data, the results of the selective inference of
effect modification should be interpreted with caution, although they
are usually more reliable than the inference of the weighted ATE.

When the dimension of $\bm X$ is high, an alternative approach to
high-dimensional regression inference is the debiased lasso
\citep{zhang2014confidence,van2014asymptotically,javanmard2014confidence}. Compared
to the debiased lasso, the selective inference approach does not require
sparsity of the true regression coefficients. Instead, it shifts the
inferential target to the best linear approximation using the selected
covariates \citep{berk2013valid}. It is also possible to make
selective inference for the regression coefficient in the full model
\citep{liu2018more}. Notice that lasso is used for different
purposes in these two approaches: the debiased lasso uses the
lasso to estimate the regression coefficients, while selective
inference only uses the lasso to select a submodel. In principle, we think
our method is most useful in randomized experiments or observational
studies with a moderate number (e.g.\ $p = 50$) of covariates. Our
method can be used in observational and high-dimensional setting,
though in the $p \gg n$ case, it seems inevitable that strong
sparsity assumptions about the true causal model are needed (in which case the debiased
lasso would be more appealing). Another interesting
approach to high-dimensional regression is the knockoff filter
that aims at controlling the false discovery rate
\citep{barber2015controlling}.

There are two situations where selective inference should not be
used to model effect modification. The first is when prediction
accuracy is the only goal. This can happen if we want to learn the optimal
treatment regime and are not concerned about interpretability at
all. In this case, machine learning methods such as outcome-weighted
learning \citep{zhao2012estimating} should be
used, though this black box approach also raises many policy
concerns and challenges \citep{price2014black}. Although selective
inference can be used to generate interpretable decision rules, a
generally more efficient low-complexity policy can be learned by
directly maximizing the utility \citep{athey2017efficient}.

The second situation is when we are interested in the causal effect of
both the treatment and the discovered effect modifiers. Since we do
not control for
confounding between the effect modifiers and the outcome, selective
inference nor any other method that does not control for such
confounding can be used to estimate the causal
effect of the effect modifiers. In other words, effect modifiers
explain variation of the causal effect and may themselves be non-causal.
Nonetheless, the advocated selective
inference approach is useful for post-hoc discovery of important
effect modifiers which are usually informative proxies to the
underlying causes \citep{vanderweele2007four}. See \citet[Section
9.6]{vanderweele2015explanation} and the references therein for more
discussion.

\subsection{Assumptions in the paper}
\label{sec:assumptions-paper}

Our main theoretical result (\Cref{thm:asymptotic-validity}) hinges
on a number of assumptions. Here we explain when they are reasonable
and discuss their implications in more detail.

\Cref{assump:basic} is fundamental to causal inference. It transforms
the estimation of causal effect into a regression
problem. Unconfoundedness
(\Cref{assump:basic}A) is crucial to identify the causal effect for
observational studies, but it is unverifiable using observational
data. It would be an interesting further investigation to study the
sensitivity of the proposed method to the unconfoundedness assumption.

\Cref{assump:accuracy-treatment,assump:accuracy-outcome} are rate
assumptions for the estimated nuisance parameters. The product structure in
\Cref{assump:accuracy-outcome} is closely related to doubly robust
estimation \citep[see
e.g.][]{bang2005doubly,chernozhukov2016double}. \Cref{assump:accuracy-treatment,assump:accuracy-outcome} are essential
to the semiparametric problem under consideration and are satisfied by
using, for example, kernel smoothing with optimal bandwidth when $p \le
3$. However, there is little interest for selective inference in such a
low-dimensional problem. In general, no method can guarantee the rate
assumptions are universally satisfied, an issue present in all
observational studies. This is why we have recommended to use machine
learning methods such as the random forest to estimate the nuisance
parameters, as they usually
have much better prediction accuracy than conventional parametric
models. This practical advice is inspired by
\citet{vanderlaan2011} and \citet{chernozhukov2016double}. Recently
\citet{van2017generally} proposed a nonparametric regression estimator
called the highly adaptive lasso (HAL) which converges faster than
$n^{-1/4}$ for functions with bounded variation norm.

\Cref{assump:model-size} strongly restricts the size of the effect modification
model. We believe it is indispensable in our approach and other
semiparametric regressions to control the complexity of the parametric
part. \Cref{assump:model-size} is also
used by \citet{tian2017asymptotics} to relax the Gaussianity assumption of the
noise. \Cref{assump:design} assumes the selected design matrix is not
collinear and is a sparse eigenvalue assumption in
high-dimensional regression \citep{buhlmann2011statistics}. It is needed
to define the partial regression coefficient $\hat{\bm
  \beta}_{\mathcal{M}}^{*}$ in \eqref{eq:beta-star-causal}. The boundedness assumptions in
\Cref{assump:support-X,assump:smooth-pivot,assump:kkt} are technical
assumptions for the asymptotic analysis. Similar assumptions can be
found in \citet{tian2017asymptotics} that are used to prove the
asymptotics under non-Gaussian error. In our experience, the inversion
of the pivot (to obtain selective confidence interval) is often
unstable when \Cref{assump:smooth-pivot} is not satisfied. In this
case, \citet{tian2015selective} proposed to smooth out the selection
event by injecting noise to the outcome. \Cref{assump:kkt} is used to
ensure that the selection event using the
estimated $\mu_y$ and $\mu_t$ is with high probability the same as the
selection event using the true $\mu_y$ and $\mu_t$. We expect that
\Cref{assump:smooth-pivot,assump:kkt} can be greatly weakened using
the randomized response approach to selective inference of
\citet{tian2015selective}.

In our main Theorem we also assumed the noise is homoskedastic and
Gaussian. This simplifies the proof as we can directly use the exact
selective inference \Cref{lem:lasso-selective-inference} derived by
\citet{lee2013exact}. We expect that this assumption can
be weakened \citep[see e.g.][]{tian2015selective} as we only need asymptotic validity
of the pivot when $\mu_y(\mathbf{x})$ is known (see the proof of
\Cref{thm:asymptotic-validity} in the \Cref{sec:proof-crefthm:-valid}).

In our theoretical investigation, cross-fitting \citep{schick1986asymptotically} is used
to bound the remainder terms like $\sum_{i=1}^n (T_i -
\mu_{t}(\bm X_i))(\hat{\mu}_{y}(\bm X_i) - \mu_y(\bm X_i))$. However, we find in our
simulations that using this theory-oriented technique
actually deterioriates the coverage rate of the weighted ATE and the effect
modification. \revv{In our simulations,} cross-fitting behvaes poorly
especially when random forests are used to estimate the nuisance
functions. \revv{This is mainly due to systematic bias in the tail observations, a
common problem for
tree-based machine learning methods even when the out-of-bag
  sample is used for prediction \citep{friedberg2018local}. The
bias is exacerbated by the reduction of sample size with
the two-fold cross-fitting we implemented, which might be why
cross-fitting does not work well with random forests in our simulations.}
In contrast, when the lasso is used to estimate the
nuisance functions in our simulations, using cross-fitting does not
change the overall performance substantially \revv{in our
  simulations}.

% To weaken this
% assumption to $l_2$-consistency, one can use a cross-fitting technique
% \citep{schick1986asymptotically,chernozhukov2016double} where the
% first half of the data is used to obtain $\hat{\mu}_{ti}$ and the second
% half is used to obtain $\hat{\mu}_t$. Then it is straightforward to
% show $(1/\sqrt{n})\sum_{i=1}^{n/2} (T_i - \mu_{ti})(\hat{\mu}_{yi} -
% \mu_{yi}) = o_p(1)$ because the two terms in the summand are now
% independent. The roles of the two halves can be switched to prove the
% same bound for the summation over the second half. The practical
% implementation of this is also straightforward: the only difference is
% that $\hat{\mu}_{ti}$ is obtained from a treatment model
% fitted using the same half of the data and $\hat{\mu}_{yi}$ is
% obtained from outcome model fitted using the other half of the data.

\subsection{Future directions}
\label{sec:future-directions}

% There are several directions for future work of applying selective
% inference to causal problems.
We have focused on semiparametric regression with additive noise in this paper so
\citet{robinson1988root}'s transformation can be applied. In general,
many interesting estimands in causal inference \rev{and other statistical
problems} can be defined by estimating equations. \rev{While the
present paper was being drafted and reviewed, a parallel literature
has used the idea of Neyman orthogonalization (a generalization of
Robinson's transformation) in a number of related problems, including
inference about low-dimensional projections of causal functions
\citep{semenova2017estimation2,chernozhukov2018generic}, debiased-lasso
inference for coefficients of the CATE $\Delta(\bm X)$ assuming a sparse and
linear model for $\Delta(\bm X)$ \citep{semenova2017estimation},
nonparametric estimation of the CATE $\Delta(\bm X)$
\citep{nie2017quasi}, regularized estimation of high-dimensional
semiparametric nonlinear models \citep{chernozhukov2018plug}, and
statistical learning with nuisance parameters
\citep{foster2019orthogonal}. Our work is the first to consider
post-selection inference for this type of problems. For future
research it would be very interesting to develop selective inference in
more general semiparametric problems, beyond the effect modification
problem considered here.} Other possible future
directions include selective inference for non-linear models (e.g.\
trees) of $\Delta(\bm x)$ and extending the method in this paper to
longitudinal problems.

% The inferential target we considered in this paper is defined
% conditioning on the observed treatment variable and the covariates;
% see \cref{eq:beta-star-causal}. In many cases we would like to make
% inference for parameters defined by the population (see
% \citet{abadie2014inference} for the
% different interpretations of ``fixed regressors'' and ``random
% regressors'' when the linear model is misspecified). However, developing
% selective inference in the ``random regressors'' setting might not be
% an easy task, as \citet{buja2014models}
% pointed out that the definition of regression coefficients in this case
% depends on the distribution of the regressors and may ``conspire''
% with model misspecification.

%%% Local Variables:
%%% mode: latex
%%% TeX-master: t
%%% End:

\bibliographystyle{abbrvnat}
\bibliography{ref}

\begin{thebibliography}{101}
\providecommand{\natexlab}[1]{#1}
\providecommand{\url}[1]{\texttt{#1}}
\expandafter\ifx\csname urlstyle\endcsname\relax
  \providecommand{\doi}[1]{doi: #1}\else
  \providecommand{\doi}{doi: \begingroup \urlstyle{rm}\Url}\fi

\bibitem[Aiken et~al.(1991)Aiken, West, and Reno]{aiken1991multiple}
L.~S. Aiken, S.~G. West, and R.~R. Reno.
\newblock \emph{Multiple Regression: Testing and Interpreting Interactions}.
\newblock Sage, 1991.

\bibitem[Angrist(2004)]{angrist2004treatment}
J.~D. Angrist.
\newblock Treatment effect heterogeneity in theory and practice.
\newblock \emph{The Economic Journal}, 114\penalty0 (494):\penalty0 C52--C83,
  2004.

\bibitem[Ashley(2015)]{ashley2015precision}
E.~A. Ashley.
\newblock The precision medicine initiative: a new national effort.
\newblock \emph{Journal of the American Medical Association}, 313\penalty0
  (21):\penalty0 2119--2120, 2015.

\bibitem[Athey and Wager(2017)]{athey2017efficient}
S.~Athey and S.~Wager.
\newblock Efficient policy learning.
\newblock \emph{arXiv preprint arXiv:1702.02896}, 2017.

\bibitem[Athey et~al.(2018)Athey, Tibshirani, and Wager]{athey2016generalize}
S.~Athey, J.~Tibshirani, and S.~Wager.
\newblock Generalized random forests.
\newblock \emph{Annals of Statistics}, to appear, 2018.

\bibitem[Athey and Imbens(2016)]{athey2016recursive}
S.~C. Athey and G.~W. Imbens.
\newblock Recursive partitioning for heterogeneous causal effects.
\newblock \emph{Proceedings of the National Academy of Sciences}, 113\penalty0
  (27):\penalty0 7353--7360, 2016.

\bibitem[Balzer et~al.(2016)Balzer, Petersen, van~der Laan, and
  Collaboration"]{balzer2016targeted}
L.~B. Balzer, M.~L. Petersen, M.~J. van~der Laan, and S.~Collaboration".
\newblock Targeted estimation and inference for the sample average treatment
  effect in trials with and without pair-matching.
\newblock \emph{Statistics in Medicine}, 35\penalty0 (21):\penalty0 3717--3732,
  2016.

\bibitem[Bang and Robins(2005)]{bang2005doubly}
H.~Bang and J.~M. Robins.
\newblock Doubly robust estimation in missing data and causal inference models.
\newblock \emph{Biometrics}, 61\penalty0 (4):\penalty0 962--973, 2005.

\bibitem[Barber and Cand{\`e}s(2015)]{barber2015controlling}
R.~F. Barber and E.~J. Cand{\`e}s.
\newblock Controlling the false discovery rate via knockoffs.
\newblock \emph{Annals of Statistics}, 43\penalty0 (5):\penalty0 2055--2085,
  2015.

\bibitem[Benjamini and Yekutieli(2005)]{benjamini2005false}
Y.~Benjamini and D.~Yekutieli.
\newblock False discovery rate--adjusted multiple confidence intervals for
  selected parameters.
\newblock \emph{Journal of the American Statistical Association}, 100\penalty0
  (469):\penalty0 71--81, 2005.

\bibitem[Berk et~al.(2013)Berk, Brown, Buja, Zhang, and Zhao]{berk2013valid}
R.~Berk, L.~Brown, A.~Buja, K.~Zhang, and L.~Zhao.
\newblock Valid post-selection inference.
\newblock \emph{Annals of Statistics}, 41\penalty0 (2):\penalty0 802--837,
  2013.

\bibitem[Biau(2012)]{biau2012analysis}
G.~Biau.
\newblock Analysis of a random forests model.
\newblock \emph{Journal of Machine Learning Research}, 13\penalty0
  (Apr):\penalty0 1063--1095, 2012.

\bibitem[Biswas et~al.(2014)Biswas, Xia, and Lin]{biswas2014detecting}
S.~Biswas, S.~Xia, and S.~Lin.
\newblock Detecting rare haplotype-environment interaction with logistic
  bayesian lasso.
\newblock \emph{Genetic Epidemiology}, 38\penalty0 (1):\penalty0 31--41, 2014.

\bibitem[Breiman(2001)]{breiman2001random}
L.~Breiman.
\newblock Random forests.
\newblock \emph{Machine learning}, 45\penalty0 (1):\penalty0 5--32, 2001.

\bibitem[Brown et~al.(2014)Brown, Warton, Andrew, Binns, Cassis, and
  Gibb]{brown2014fourth}
A.~M. Brown, D.~I. Warton, N.~R. Andrew, M.~Binns, G.~Cassis, and H.~Gibb.
\newblock The fourth-corner solution---using predictive models to understand
  how species traits interact with the environment.
\newblock \emph{Methods in Ecology and Evolution}, 5\penalty0 (4):\penalty0
  344--352, 2014.

\bibitem[B{\"u}hlmann and {van de Geer}(2011)]{buhlmann2011statistics}
P.~B{\"u}hlmann and S.~{van de Geer}.
\newblock \emph{Statistics for high-dimensional data: methods, theory and
  applications}.
\newblock Springer Science \& Business Media, 2011.

\bibitem[Chernozhukov et~al.(2018{\natexlab{a}})Chernozhukov, Chetverikov,
  Demirer, Duflo, Hansen, Newey, and Robins]{chernozhukov2016double}
V.~Chernozhukov, D.~Chetverikov, M.~Demirer, E.~Duflo, C.~Hansen, W.~Newey, and
  J.~Robins.
\newblock Double/debiased machine learning for treatment and structural
  parameters.
\newblock \emph{The Econometrics Journal}, 21\penalty0 (1):\penalty0 C1--C68,
  2018{\natexlab{a}}.

\bibitem[Chernozhukov et~al.(2018{\natexlab{b}})Chernozhukov, Demirer, Duflo,
  and Fernandez-Val]{chernozhukov2018generic}
V.~Chernozhukov, M.~Demirer, E.~Duflo, and I.~Fernandez-Val.
\newblock Generic machine learning inference on heterogenous treatment effects
  in randomized experiments.
\newblock Technical report, National Bureau of Economic Research,
  2018{\natexlab{b}}.

\bibitem[Chernozhukov et~al.(2018{\natexlab{c}})Chernozhukov, Nekipelov,
  Semenova, and Syrgkanis]{chernozhukov2018plug}
V.~Chernozhukov, D.~Nekipelov, V.~Semenova, and V.~Syrgkanis.
\newblock Plug-in regularized estimation of high-dimensional parameters in
  nonlinear semiparametric models.
\newblock \emph{arXiv preprint arXiv:1806.04823}, 2018{\natexlab{c}}.

\bibitem[Cohen et~al.(2003)Cohen, Cohen, West, and Aiken]{cohen2003applied}
J.~Cohen, P.~Cohen, S.~G. West, and L.~S. Aiken.
\newblock \emph{Applied Multiple Regression/Correlation Analysis for the
  Behavioral Sciences}.
\newblock Routledge, 3rd edition, 2003.

\bibitem[Crossa et~al.(2010)Crossa, de~Los~Campos, P{\'e}rez, Gianola,
  Burgue{\~n}o, Araus, Makumbi, Singh, Dreisigacker, Yan,
  et~al.]{crossa2010prediction}
J.~Crossa, G.~de~Los~Campos, P.~P{\'e}rez, D.~Gianola, J.~Burgue{\~n}o, J.~L.
  Araus, D.~Makumbi, R.~Singh, S.~Dreisigacker, J.~Yan, et~al.
\newblock Prediction of genetic values of quantitative traits in plant breeding
  using pedigree and molecular markers.
\newblock \emph{Genetics}, 2010.

\bibitem[Crump et~al.(2006)Crump, Hotz, Imbens, and Mitnik]{crump2006moving}
R.~K. Crump, V.~J. Hotz, G.~W. Imbens, and O.~A. Mitnik.
\newblock Moving the goalposts: Addressing limited overlap in the estimation of
  average treatment effects by changing the estimand.
\newblock Technical report, National Bureau of Economic Research, 2006.

\bibitem[Ertefaie(2014)]{ertefaie2014constructing}
A.~Ertefaie.
\newblock Constructing dynamic treatment regimes in infinite-horizon settings.
\newblock \emph{arXiv preprint arXiv:1406.0764}, 2014.

\bibitem[Ertefaie et~al.(2017)Ertefaie, Small, and
  Rosenbaum]{ertefaie2017quantitative}
A.~Ertefaie, D.~S. Small, and P.~R. Rosenbaum.
\newblock Quantitative evaluation of the trade-off of strengthened instruments
  and sample size in observational studies.
\newblock \emph{Journal of the American Statistical Association}, \penalty0
  (just-accepted), 2017.

\bibitem[Fithian et~al.(2014)Fithian, Sun, and Taylor]{fithian2014optimal}
W.~Fithian, D.~Sun, and J.~Taylor.
\newblock Optimal inference after model selection.
\newblock \emph{arXiv:1410.2597}, 2014.

\bibitem[Foster and Syrgkanis(2019)]{foster2019orthogonal}
D.~J. Foster and V.~Syrgkanis.
\newblock Orthogonal statistical learning.
\newblock \emph{arXiv preprint arXiv:1901.09036}, 2019.

\bibitem[Friedberg et~al.(2018)Friedberg, Tibshirani, Athey, and
  Wager]{friedberg2018local}
R.~Friedberg, J.~Tibshirani, S.~Athey, and S.~Wager.
\newblock Local linear forests.
\newblock \emph{arXiv preprint arXiv:1807.11408}, 2018.

\bibitem[Friedman et~al.(2010)Friedman, Hastie, and
  Tibshirani]{friedman2010regularization}
J.~Friedman, T.~Hastie, and R.~Tibshirani.
\newblock Regularization paths for generalized linear models via coordinate
  descent.
\newblock \emph{Journal of Statistical Software}, 33\penalty0 (1):\penalty0
  1--22, 2010.
\newblock URL \url{http://www.jstatsoft.org/v33/i01/}.

\bibitem[Friedman and Silverman(1989)]{friedman1989flexible}
J.~H. Friedman and B.~W. Silverman.
\newblock Flexible parsimonious smoothing and additive modeling.
\newblock \emph{Technometrics}, 31\penalty0 (1):\penalty0 3--21, 1989.

\bibitem[Friedman et~al.(1983)Friedman, Grosse, and
  Stuetzle]{friedman1983multidimensional}
J.~H. Friedman, E.~Grosse, and W.~Stuetzle.
\newblock Multidimensional additive spline approximation.
\newblock \emph{SIAM Journal on Scientific and Statistical Computing},
  4\penalty0 (2):\penalty0 291--301, 1983.

\bibitem[Grimmer et~al.(2017)Grimmer, Messing, and
  Westwood]{grimmer2017estimating}
J.~Grimmer, S.~Messing, and S.~J. Westwood.
\newblock Estimating heterogeneous treatment effects and the effects of
  heterogeneous treatments with ensemble methods.
\newblock \emph{Working paper}, 2017.

\bibitem[Grobbee and Hoes(2009)]{grobbee2009clinical}
D.~E. Grobbee and A.~W. Hoes.
\newblock \emph{Clinical Epidemiology: Principles, Methods, and Applications
  for Clinical Research}.
\newblock Jones \& Bartlett Learning, 2009.

\bibitem[Hastie et~al.(2009)Hastie, Tibshirani, and
  Friedman]{hastie2009elements}
T.~Hastie, R.~Tibshirani, and J.~Friedman.
\newblock \emph{Elements of Statistical Learning}.
\newblock Springer, 2009.

\bibitem[Hernan and Robins(2017)]{hernan2017causal}
M.~A. Hernan and J.~M. Robins.
\newblock \emph{Causal inference}.
\newblock Chapman \& Hall/CRC (forthcoming), 2017.

\bibitem[Hill(2011)]{hill2011bayesian}
J.~L. Hill.
\newblock Bayesian nonparametric modeling for causal inference.
\newblock \emph{Journal of Computational and Graphical Statistics}, 20\penalty0
  (1):\penalty0 217--240, 2011.

\bibitem[Hirano and Porter(2009)]{hirano2009asymptotics}
K.~Hirano and J.~R. Porter.
\newblock Asymptotics for statistical treatment rules.
\newblock \emph{Econometrica}, 77\penalty0 (5):\penalty0 1683--1701, 2009.

\bibitem[Hsu et~al.(2013)Hsu, Small, and Rosenbaum]{hsu2013effect}
J.~Y. Hsu, D.~S. Small, and P.~R. Rosenbaum.
\newblock Effect modification and design sensitivity in observational studies.
\newblock \emph{Journal of the American Statistical Association}, 108\penalty0
  (501):\penalty0 135--148, 2013.

\bibitem[Imai and Ratkovic(2013)]{imai2013estimating}
K.~Imai and M.~Ratkovic.
\newblock Estimating treatment effect heterogeneity in randomized program
  evaluation.
\newblock \emph{The Annals of Applied Statistics}, 7\penalty0 (1):\penalty0
  443--470, 2013.

\bibitem[Javanmard and Montanari(2014)]{javanmard2014confidence}
A.~Javanmard and A.~Montanari.
\newblock Confidence intervals and hypothesis testing for high-dimensional
  regression.
\newblock \emph{Journal of Machine Learning Research}, 15\penalty0
  (1):\penalty0 2869--2909, 2014.

\bibitem[Jiang et~al.(2018)Jiang, Lee, Li, Johnson, Liu, Brown, Aster, and
  Liu]{jiang2018genome}
P.~Jiang, W.~Lee, X.~Li, C.~Johnson, J.~S. Liu, M.~Brown, J.~C. Aster, and
  X.~S. Liu.
\newblock Genome-scale signatures of gene interaction from compound screens
  predict clinical efficacy of targeted cancer therapies.
\newblock \emph{Cell Systems}, 6\penalty0 (3):\penalty0 343--354, 2018.

\bibitem[Lee et~al.(2016{\natexlab{a}})Lee, Sun, Sun, and Taylor]{lee2013exact}
J.~D. Lee, D.~L. Sun, Y.~Sun, and J.~E. Taylor.
\newblock Exact post-selection inference, with application to the lasso.
\newblock \emph{Annals of Statistics}, 44\penalty0 (3):\penalty0 907--927,
  2016{\natexlab{a}}.

\bibitem[Lee et~al.(2016{\natexlab{b}})Lee, Small, Hsu, Silber, and
  Rosenbaum]{lee2016discovering}
K.~Lee, D.~S. Small, J.~Y. Hsu, J.~H. Silber, and P.~R. Rosenbaum.
\newblock Discovering effect modification in an observational study of surgical
  mortality at hospitals with superior nursing.
\newblock \emph{arXiv preprint arXiv:1605.03872}, 2016{\natexlab{b}}.

\bibitem[Li et~al.(2018)Li, Morgan, and Zaslavsky]{li2018balancing}
F.~Li, K.~L. Morgan, and A.~M. Zaslavsky.
\newblock Balancing covariates via propensity score weighting.
\newblock \emph{Journal of the American Statistical Association}, 113\penalty0
  (521):\penalty0 390--400, 2018.

\bibitem[Li and Racine(2007)]{li2007nonparametric}
Q.~Li and J.~S. Racine.
\newblock \emph{Nonparametric Econometrics: Theory and Practice}.
\newblock Princeton University Press, 2007.

\bibitem[Li and Ding(2017)]{li2017general}
X.~Li and P.~Ding.
\newblock General forms of finite population central limit theorems with
  applications to causal inference.
\newblock \emph{Journal of the American Statistical Association}, 112\penalty0
  (520):\penalty0 1759--1769, 2017.

\bibitem[Liaw and Wiener(2002)]{R-random-forest}
A.~Liaw and M.~Wiener.
\newblock Classification and regression by randomforest.
\newblock \emph{R News}, 2\penalty0 (3):\penalty0 18--22, 2002.
\newblock URL \url{http://CRAN.R-project.org/doc/Rnews/}.

\bibitem[Liu et~al.(2018)Liu, Markovic, and Tibshirani]{liu2018more}
K.~Liu, J.~Markovic, and R.~Tibshirani.
\newblock More powerful post-selection inference, with application to the
  lasso.
\newblock \emph{arXiv preprint arXiv:1801.09037}, 2018.

\bibitem[Loftus and Taylor(2014)]{loftus2014significance}
J.~R. Loftus and J.~E. Taylor.
\newblock A significance test for forward stepwise model selection.
\newblock \emph{arXiv preprint arXiv:1405.3920}, 2014.

\bibitem[Luedtke and van~der Laan(2016)]{luedtke2016super}
A.~R. Luedtke and M.~J. van~der Laan.
\newblock Super-learning of an optimal dynamic treatment rule.
\newblock \emph{International Journal of Biostatistics}, 12\penalty0
  (1):\penalty0 305--332, 2016.

\bibitem[Manski(2004)]{manski2004statistical}
C.~F. Manski.
\newblock Statistical treatment rules for heterogeneous populations.
\newblock \emph{Econometrica}, 72\penalty0 (4):\penalty0 1221--1246, 2004.

\bibitem[Mauerer et~al.(2015)Mauerer, P{\"o}{\ss}necker, Thurner, and
  Tutz]{mauerer2015modeling}
I.~Mauerer, W.~P{\"o}{\ss}necker, P.~W. Thurner, and G.~Tutz.
\newblock Modeling electoral choices in multiparty systems with
  high-dimensional data: A regularized selection of parameters using the lasso
  approach.
\newblock \emph{Journal of choice modelling}, 16:\penalty0 23--42, 2015.

\bibitem[Meinshausen(2007)]{meinshausen2007relaxed}
N.~Meinshausen.
\newblock Relaxed lasso.
\newblock \emph{Computational Statistics \& Data Analysis}, 52\penalty0
  (1):\penalty0 374--393, 2007.

\bibitem[Murphy(2003)]{murphy2003optimal}
S.~A. Murphy.
\newblock Optimal dynamic treatment regimes.
\newblock \emph{Journal of the Royal Statistical Society: Series B (Statistical
  Methodology)}, 65\penalty0 (2):\penalty0 331--355, 2003.

\bibitem[Negahban et~al.(2012)Negahban, Yu, Wainwright, and
  Ravikumar]{negahban2012unified}
S.~Negahban, B.~Yu, M.~J. Wainwright, and P.~K. Ravikumar.
\newblock A unified framework for high-dimensional analysis of {M}-estimators
  with decomposable regularizers.
\newblock \emph{Statistical Science}, 27\penalty0 (4):\penalty0 538--557, 2012.

\bibitem[Newey and Robins(2018)]{newey2018cross}
W.~K. Newey and J.~R. Robins.
\newblock Cross-fitting and fast remainder rates for semiparametric estimation.
\newblock \emph{arXiv preprint arXiv:1801.09138}, 2018.

\bibitem[Nie and Wager(2017)]{nie2017quasi}
X.~Nie and S.~Wager.
\newblock Quasi-oracle estimation of heterogeneous treatment effects.
\newblock \emph{arXiv preprint arXiv:1712.04912}, 2017.

\bibitem[Ohman et~al.(2017)Ohman, Roe, Steg, James, Povsic, White, Rockhold,
  Plotnikov, Mundl, Strony, et~al.]{ohman2017clinically}
E.~M. Ohman, M.~T. Roe, P.~G. Steg, S.~K. James, T.~J. Povsic, J.~White,
  F.~Rockhold, A.~Plotnikov, H.~Mundl, J.~Strony, et~al.
\newblock Clinically significant bleeding with low-dose rivaroxaban versus
  aspirin, in addition to p2y12 inhibition, in acute coronary syndromes
  (gemini-acs-1): a double-blind, multicentre, randomised trial.
\newblock \emph{The Lancet}, 389\penalty0 (10081):\penalty0 1799--1808, 2017.

\bibitem[Pickkers and Kox(2017)]{pickkers2017towards}
P.~Pickkers and M.~Kox.
\newblock Towards precision medicine for sepsis patients.
\newblock \emph{Critical Care}, 21\penalty0 (1):\penalty0 11, 2017.

\bibitem[Price(2014)]{price2014black}
W.~N. Price.
\newblock Black-box medicine.
\newblock \emph{Harvard Journal of Law and Technology}, 28:\penalty0 419, 2014.

\bibitem[Rinaldo et~al.(2016)Rinaldo, Wasserman, G'Sell, Lei, and
  Tibshirani]{rinaldo2016bootstrapping}
A.~Rinaldo, L.~Wasserman, M.~G'Sell, J.~Lei, and R.~Tibshirani.
\newblock Bootstrapping and sample splitting for high-dimensional,
  assumption-free inference.
\newblock \emph{arXiv preprint arXiv:1611.05401}, 2016.

\bibitem[Robins et~al.(2008)Robins, Li, Tchetgen, van~der Vaart,
  et~al.]{robins2008higher}
J.~Robins, L.~Li, E.~Tchetgen, A.~van~der Vaart, et~al.
\newblock Higher order influence functions and minimax estimation of nonlinear
  functionals.
\newblock In \emph{Probability and statistics: essays in honor of David A.
  Freedman}, pages 335--421. Institute of Mathematical Statistics, 2008.

\bibitem[Robins(2004)]{robins2004optimal}
J.~M. Robins.
\newblock Optimal structural nested models for optimal sequential decisions.
\newblock In \emph{Proceedings of the second seattle Symposium in
  Biostatistics}, pages 189--326. Springer, 2004.

\bibitem[Robinson(1988)]{robinson1988root}
P.~M. Robinson.
\newblock Root-n-consistent semiparametric regression.
\newblock \emph{Econometrica}, 56\penalty0 (4):\penalty0 931--954, 1988.

\bibitem[Rosenbaum and Rubin(1983)]{rosenbaum1983central}
P.~R. Rosenbaum and D.~B. Rubin.
\newblock The central role of the propensity score in observational studies for
  causal effects.
\newblock \emph{Biometrika}, 70\penalty0 (1):\penalty0 41--55, 1983.

\bibitem[Scheff\'{e}(1953)]{scheffe1953method}
H.~Scheff\'{e}.
\newblock A method for judging all contrasts in the analysis of variance.
\newblock \emph{Biometrika}, 40\penalty0 (1--2):\penalty0 87--110, 1953.

\bibitem[Schick(1986)]{schick1986asymptotically}
A.~Schick.
\newblock On asymptotically efficient estimation in semiparametric models.
\newblock \emph{The Annals of Statistics}, pages 1139--1151, 1986.

\bibitem[Schochet et~al.(2014)Schochet, Puma, and
  Deke]{schochet2014understanding}
P.~Z. Schochet, M.~Puma, and J.~Deke.
\newblock Understanding variation in treatment effects in education impact
  evaluations: An overview of quantitative methods {(NCEE 2014-4017)}.
\newblock U.S. Department of Education, Institute of Education Sciences,
  National Center for Education Evaluation and Regional Assistance, 2014.

\bibitem[Scornet et~al.(2015)Scornet, Biau, Vert,
  et~al.]{scornet2015consistency}
E.~Scornet, G.~Biau, J.-P. Vert, et~al.
\newblock Consistency of random forests.
\newblock \emph{The Annals of Statistics}, 43\penalty0 (4):\penalty0
  1716--1741, 2015.

\bibitem[Semenova and Chernozhukov(2017)]{semenova2017estimation2}
V.~Semenova and V.~Chernozhukov.
\newblock Estimation and inference about conditional average treatment effect
  and other structural functions.
\newblock \emph{arXiv preprint arXiv:1702.06240}, 2017.

\bibitem[Semenova et~al.(2017)Semenova, Goldman, Chernozhukov, and
  Taddy]{semenova2017estimation}
V.~Semenova, M.~Goldman, V.~Chernozhukov, and M.~Taddy.
\newblock Estimation and inference about heterogeneous treatment effects in
  high-dimensional dynamic panels.
\newblock \emph{arXiv:1712.09988}, 2017.

\bibitem[Strobl et~al.(2007)Strobl, Boulesteix, Zeileis, and
  Hothorn]{strobl2007bias}
C.~Strobl, A.-L. Boulesteix, A.~Zeileis, and T.~Hothorn.
\newblock Bias in random forest variable importance measures: Illustrations,
  sources and a solution.
\newblock \emph{BMC bioinformatics}, 8\penalty0 (25), 2007.

\bibitem[Stuart et~al.(2011)Stuart, Cole, Bradshaw, and Leaf]{stuart2011use}
E.~A. Stuart, S.~R. Cole, C.~P. Bradshaw, and P.~J. Leaf.
\newblock The use of propensity scores to assess the generalizability of
  results from randomized trials.
\newblock \emph{Journal of the Royal Statistical Society: Series A (Statistics
  in Society)}, 174\penalty0 (2):\penalty0 369--386, 2011.

\bibitem[Sumithran et~al.(2011)Sumithran, Prendergast, Delbridge, Purcell,
  Shulkes, Kriketos, and Proietto]{sumithran2011long}
P.~Sumithran, L.~A. Prendergast, E.~Delbridge, K.~Purcell, A.~Shulkes,
  A.~Kriketos, and J.~Proietto.
\newblock Long-term persistence of hormonal adaptations to weight loss.
\newblock \emph{New England Journal of Medicine}, 365\penalty0 (17):\penalty0
  1597--1604, 2011.

\bibitem[Taddy et~al.(2016)Taddy, Gardner, Chen, and
  Draper]{taddy2016nonparametric}
M.~Taddy, M.~Gardner, L.~Chen, and D.~Draper.
\newblock A nonparametric bayesian analysis of heterogenous treatment effects
  in digital experimentation.
\newblock \emph{Journal of Business \& Economic Statistics}, 34\penalty0
  (4):\penalty0 661--672, 2016.

\bibitem[Taylor and Tibshirani(2015)]{taylor2015statistical}
J.~Taylor and R.~J. Tibshirani.
\newblock Statistical learning and selective inference.
\newblock \emph{Proceedings of the National Academy of Sciences}, 112\penalty0
  (25):\penalty0 7629--7634, 2015.

\bibitem[Tian et~al.(2014)Tian, Alizadeh, Gentles, and
  Tibshirani]{tian2014simple}
L.~Tian, A.~Alizadeh, A.~Gentles, and R.~Tibshirani.
\newblock A simple method for detecting interactions between a treatment and a
  large number of covariates.
\newblock \emph{Journal of the American Statistical Association}, 109\penalty0
  (508):\penalty0 1517--1532, 2014.

\bibitem[Tian and Taylor(2017)]{tian2017asymptotics}
X.~Tian and J.~Taylor.
\newblock Asymptotics of selective inference.
\newblock \emph{Scandinavian Journal of Statistics}, 44\penalty0 (2):\penalty0
  480--499, 2017.

\bibitem[Tian and Taylor(2018)]{tian2015selective}
X.~Tian and J.~Taylor.
\newblock Selective inference with a randomized response.
\newblock \emph{Annals of Statistics}, 46\penalty0 (2):\penalty0 679--710,
  2018.

\bibitem[Tibshirani(1996)]{tibshirani1996regression}
R.~Tibshirani.
\newblock Regression shrinkage and selection via the lasso.
\newblock \emph{Journal of the Royal Statistical Society. Series B
  (Methodological)}, 58\penalty0 (1):\penalty0 267--288, 1996.

\bibitem[Tibshirani et~al.(2017{\natexlab{a}})Tibshirani, Tibshirani, Taylor,
  Loftus, and Reid]{R-selective}
R.~Tibshirani, R.~Tibshirani, J.~Taylor, J.~Loftus, and S.~Reid.
\newblock \emph{selectiveInference: Tools for Post-Selection Inference},
  2017{\natexlab{a}}.
\newblock URL \url{https://CRAN.R-project.org/package=selectiveInference}.
\newblock R package version 10.0.0.

\bibitem[Tibshirani et~al.(2017{\natexlab{b}})Tibshirani, Tibshirani, Taylor,
  Loftus, and Reid]{tibshirani2017selectiveinference}
R.~Tibshirani, R.~Tibshirani, J.~Taylor, J.~Loftus, and S.~Reid.
\newblock \emph{selectiveInference: Tools for Post-Selection Inference},
  2017{\natexlab{b}}.
\newblock URL \url{https://CRAN.R-project.org/package=selectiveInference}.
\newblock R package version 1.2.4.

\bibitem[Tropp(2012)]{tropp2012user}
J.~A. Tropp.
\newblock User-friendly tail bounds for sums of random matrices.
\newblock \emph{Foundations of Computational Mathematics}, 12\penalty0
  (4):\penalty0 389--434, 2012.

\bibitem[Tukey(1949)]{tukey1949comparing}
J.~W. Tukey.
\newblock Comparing individual means in the analysis of variance.
\newblock \emph{Biometrics}, 5\penalty0 (2):\penalty0 99--114, 1949.

\bibitem[Tukey(1991)]{tukey1991philosophy}
J.~W. Tukey.
\newblock The philosophy of multiple comparisons.
\newblock \emph{Statistical science}, pages 100--116, 1991.

\bibitem[Van~de Geer et~al.(2014)Van~de Geer, B{\"u}hlmann, Ritov, and
  Dezeure]{van2014asymptotically}
S.~Van~de Geer, P.~B{\"u}hlmann, Y.~Ritov, and R.~Dezeure.
\newblock On asymptotically optimal confidence regions and tests for
  high-dimensional models.
\newblock \emph{The Annals of Statistics}, 42\penalty0 (3):\penalty0
  1166--1202, 2014.

\bibitem[van~der Laan(2017)]{van2017generally}
M.~van~der Laan.
\newblock A generally efficient targeted minimum loss based estimator based on
  the highly adaptive lasso.
\newblock \emph{International Journal of Biostatistics}, 13\penalty0 (2), 2017.

\bibitem[van~der Laan and Rose(2011)]{vanderlaan2011}
M.~J. van~der Laan and S.~Rose.
\newblock \emph{Targeted Learning}.
\newblock Springer, 2011.

\bibitem[van~der Vaart(2000)]{van2000asymptotic}
A.~W. van~der Vaart.
\newblock \emph{Asymptotic Statistics}.
\newblock Cambridge University Press, 2000.

\bibitem[VanderWeele(2015)]{vanderweele2015explanation}
T.~VanderWeele.
\newblock \emph{Explanation in causal inference: Methods for mediation and
  interaction}.
\newblock Oxford University Press, 2015.

\bibitem[VanderWeele and Robins(2007)]{vanderweele2007four}
T.~J. VanderWeele and J.~M. Robins.
\newblock Four types of effect modification: A classification based on directed
  acyclic graphs.
\newblock \emph{Epidemiology}, 18\penalty0 (5):\penalty0 561--568, 2007.

\bibitem[Visser et~al.(1999)Visser, Bouter, McQuillan, Wener, and
  Harris]{visser1999elevated}
M.~Visser, L.~M. Bouter, G.~M. McQuillan, M.~H. Wener, and T.~B. Harris.
\newblock Elevated {C}-reactive protein levels in overweight and obese adults.
\newblock \emph{Journal of the American Medical Association}, 282\penalty0
  (22):\penalty0 2131--2135, 1999.

\bibitem[Vittinghoff et~al.(2011)Vittinghoff, Glidden, Shiboski, and
  McCulloch]{vittinghoff2011regression}
E.~Vittinghoff, D.~V. Glidden, S.~C. Shiboski, and C.~E. McCulloch.
\newblock \emph{Regression Methods in Biostatistics: Linear, Logistic,
  Survival, and Repeated Measures Models}.
\newblock Springer, 2011.

\bibitem[Wang et~al.(2007)Wang, Lagakos, Ware, Hunter, and
  Drazen]{wang2007statistics}
R.~Wang, S.~W. Lagakos, J.~H. Ware, D.~J. Hunter, and J.~M. Drazen.
\newblock Statistics in medicine—reporting of subgroup analyses in clinical
  trials.
\newblock \emph{New England Journal of Medicine}, 357\penalty0 (21):\penalty0
  2189--2194, 2007.

\bibitem[Weisberg(2005)]{weisberg2005applied}
S.~Weisberg.
\newblock \emph{Applied Linear Regression}, volume 528.
\newblock John Wiley \& Sons, 2005.

\bibitem[Zatzick et~al.(2013)Zatzick, Jurkovich, Rivara, Russo, Wagner, Wang,
  Dunn, Lord, Petrie, O’Connor, et~al.]{zatzick2013randomized}
D.~Zatzick, G.~Jurkovich, F.~P. Rivara, J.~Russo, A.~Wagner, J.~Wang, C.~Dunn,
  S.~P. Lord, M.~Petrie, S.~S. O’Connor, et~al.
\newblock A randomized stepped care intervention trial targeting posttraumatic
  stress disorder for surgically hospitalized injury survivors.
\newblock \emph{Annals of Surgery}, 257\penalty0 (3):\penalty0 390, 2013.

\bibitem[Zhang et~al.(2012)Zhang, Tsiatis, Laber, and
  Davidian]{zhang2012robust}
B.~Zhang, A.~A. Tsiatis, E.~B. Laber, and M.~Davidian.
\newblock A robust method for estimating optimal treatment regimes.
\newblock \emph{Biometrics}, 68\penalty0 (4):\penalty0 1010--1018, 2012.

\bibitem[Zhang and Zhang(2014)]{zhang2014confidence}
C.-H. Zhang and S.~S. Zhang.
\newblock Confidence intervals for low dimensional parameters in high
  dimensional linear models.
\newblock \emph{Journal of the Royal Statistical Society: Series B (Statistical
  Methodology)}, 76\penalty0 (1):\penalty0 217--242, 2014.

\bibitem[Zhao and Yu(2006)]{zhao2006model}
P.~Zhao and B.~Yu.
\newblock On model selection consistency of lasso.
\newblock \emph{Journal of Machine learning research}, 7\penalty0
  (Nov):\penalty0 2541--2563, 2006.

\bibitem[Zhao(2019)]{zhao2019covariate}
Q.~Zhao.
\newblock Covariate balancing propensity score by tailored loss functions.
\newblock \emph{Annals of Statistics}, 47\penalty0 (2):\penalty0 965--993,
  2019.

\bibitem[Zhao and Hastie(2017)]{zhao2017causal}
Q.~Zhao and T.~Hastie.
\newblock Causal interpretations of black-box models.
\newblock \emph{Journal of Business \& Economic Statistics}, to appear, 2017.

\bibitem[Zhao et~al.(2012)Zhao, Zeng, Rush, and Kosorok]{zhao2012estimating}
Y.~Zhao, D.~Zeng, A.~J. Rush, and M.~R. Kosorok.
\newblock Estimating individualized treatment rules using outcome weighted
  learning.
\newblock \emph{Journal of the American Statistical Association}, 107\penalty0
  (499):\penalty0 1106--1118, 2012.

\end{thebibliography}

\appendix

\section{Proofs}
\label{sec:proof}

\subsection{Proof of \Cref{thm:hat-star}}
\label{sec:proof-crefthm:h-star}

We first prove a Lemma that shows $\bm{\beta}^{*}_{\mathcal{M}}$ is
bounded.

\begin{lemma} \label{lem:bounded-beta-star}
    Under \Cref{assump:basic,assump:support-X} and $\mathrm{E}[\bm X_{i,\mathcal{M}} \bm
  X_{i,\mathcal{M}}^T] \succeq (1/C) \bm{I}_{|\mathcal{M}|}$,
  $\|{\bm{\beta}}^{*}_{\mathcal{M}}\|_{\infty} = O_p(1)$.
\end{lemma}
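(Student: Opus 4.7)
The plan is to show that $\bm{\beta}^{*}_{\mathcal{M}}$, being the minimizer of the weighted least squares problem \eqref{eq:beta-star-causal}, admits a closed-form expression whose numerator is bounded and whose denominator is uniformly positive definite in probability. Writing $W_i = [T_i - \mu_t(\mathbf{X}_i)]^2$, the normal equations give
\[
\bm{\beta}^{*}_{\mathcal{M}} = \Big(\tfrac{1}{n}\sum_{i=1}^n W_i \mathbf{X}_{i,\mathcal{M}} \mathbf{X}_{i,\mathcal{M}}^T\Big)^{-1} \Big(\tfrac{1}{n}\sum_{i=1}^n W_i \mathbf{X}_{i,\mathcal{M}} \Delta(\mathbf{X}_i)\Big),
\]
so it suffices to bound each factor.

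For the denominator, note that by the positivity part of \Cref{assump:basic} we have $\mathrm{E}[W_i \mid \mathbf{X}_i] = \mathrm{Var}(T_i \mid \mathbf{X}_i) \ge 1/C$, which combined with the design assumption $\mathrm{E}[\mathbf{X}_{i,\mathcal{M}} \mathbf{X}_{i,\mathcal{M}}^T] \succeq (1/C)\mathbf{I}_{|\mathcal{M}|}$ yields $\mathrm{E}[W_i \mathbf{X}_{i,\mathcal{M}} \mathbf{X}_{i,\mathcal{M}}^T] \succeq (1/C^2)\mathbf{I}_{|\mathcal{M}|}$. Since $\mathbf{X}_{i,\mathcal{M}}$ is bounded by \Cref{assump:support-X} and $\mathrm{E}[W_i] \le C$, each entry of $W_i \mathbf{X}_{i,\mathcal{M}} \mathbf{X}_{i,\mathcal{M}}^T$ is integrable, so Khinchine's weak law of large numbers gives entrywise (hence operator-norm, since $|\mathcal{M}|$ is fixed) convergence in probability to the positive-definite limit above. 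In particular, the smallest eigenvalue of the sample matrix exceeds $1/(2C^2)$ with probability tending to one, making the inverse's operator norm $O_p(1)$.

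For the numerator, $\|\mathbf{X}_{i,\mathcal{M}}\|_\infty \le C$ and $|\Delta(\mathbf{X}_i)| \le C$ by \Cref{assump:support-X} and $\mathrm{E}[W_i] \le C$, so each coordinate of $W_i \mathbf{X}_{i,\mathcal{M}} \Delta(\mathbf{X}_i)$ is integrable with expectation bounded by $C^3$. Another application of the weak law gives the sample average converging in probability to a vector of $\ell_\infty$-norm at most $C^3$. Combining the two factors through the continuous mapping theorem, $\bm{\beta}^{*}_{\mathcal{M}}$ converges in probability to a fixed vector bounded in norm by a constant depending only on $C$ and $|\mathcal{M}|$, which gives $\|\bm{\beta}^{*}_{\mathcal{M}}\|_\infty = O_p(1)$.

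The only mildly delicate point is that the treatment $T_i$ is not assumed bounded, only its conditional variance is, so $W_i$ is merely integrable rather than uniformly bounded. This is why I rely on the weak law (which only requires integrability) instead of a deterministic bound, and it is the only place where the structure of \Cref{assump:basic}C is invoked beyond the positivity of $\mathrm{Var}(T_i \mid \mathbf{X}_i)$. All other manipulations are routine.
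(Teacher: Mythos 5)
Your proof is correct and follows essentially the same route as the paper's: write $\bm{\beta}^{*}_{\mathcal{M}}$ in closed form via the normal equations, bound the numerator using the boundedness of $\Delta$ and $\mathbf{X}$ together with a law of large numbers for $(1/n)\sum_{i}(T_i-\mu_t(\mathbf{X}_i))^2$, and control the inverse Gram matrix through the positive definiteness of $\mathrm{E}[\mathrm{Var}(T_i|\mathbf{X}_i)\,\mathbf{X}_{i,\mathcal{M}}\mathbf{X}_{i,\mathcal{M}}^T]$. The only cosmetic difference is that you invoke the scalar weak law of large numbers entrywise where the paper (via \Cref{lem:eta}) uses a matrix Chernoff bound, and your remark that $W_i=(T_i-\mu_t(\mathbf{X}_i))^2$ is merely integrable rather than bounded is a fair point in favor of your choice of tool.
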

\begin{proof}
%     By the boundedness of $\mathrm{Var}(T_i|\bm{X}_i)$,
%   $\Delta(\bm{X}_i)$ and the uniform boundedness of $\bm{X}_i$,
%   \[
%   \Big\|\frac{1}{n} \sum_{i=1}^n
%   (T_i - {\mu}_t(\bm X_i))^2
%   \Delta(\bm{X}_i)\bm{X}_{i,\mathcal{M}}\Big\|_{\infty} \le C^2 \cdot \frac{1}{n} \sum_{i=1}^n
%   (T_i - {\mu}_{ti})^2 = O_p(1).
%   \]
%   Therefore, using $\mathrm{Var}(T_i|\bm{X}_i) \ge 1/C > 0$ and the
%   positive definiteness of $\mathrm{E}[\bm X_{i,\mathcal{M}} \bm
%   X_{i,\mathcal{M}}^T]$, we have
%   \[
%   \begin{split}
%     \|\bm{\beta}^{*}_{{\mathcal{M}}}\|_{\infty} &\le \Big\|\Big[\frac{1}{n} \sum_{i=1}^n
%     (T_i - {\mu}_t(\bm X_i))^2
%     \bm{X}_{i,\mathcal{M}}\bm{X}_{i,\mathcal{M}}^T\Big]^{-1}\Big\|_1 \cdot \Big\|\frac{1}{n} \sum_{i=1}^n
%     (T_i - {\mu}_t(\bm X_i))^2
%     \Delta(\bm{X}_i)\bm{X}_{i,\mathcal{M}}\Big\|_{\infty} \\
%     &\le \sqrt{|\mathcal{M}|} \Big\|\Big[\frac{1}{n} \sum_{i=1}^n
%     (T_i - {\mu}_t(\bm X_i))^2
%     \bm{X}_{i,\mathcal{M}}\bm{X}_{i,\mathcal{M}}^T\Big]^{-1}\Big\|_2 \cdot O_p(1) \\
%     &\le \sqrt{|\mathcal{M}|} \cdot \Big\|\Big\{E\big[(T_i - \mu_t(\bm X_i))^2
%     \bm X_{i,\mathcal{M}} \bm X_{i,\mathcal{M}}^T\big] + o_p(\bm
%     1)\Big\}^{-1}\Big\|_2 \cdot O_p(1) \\
%     & = O_p(1).
%   \end{split}
% \]
See \Cref{sec:proof-crefl-tilde} below.
\end{proof}

Next we prove \Cref{thm:hat-star}.
For simplicity we suppress the subscript $\mathcal{M}$ since it is
a fixed set in \Cref{sec:infer-fixed-model}. Let
\[
\bm{\psi}(\bm{\beta},\mu_t) = \frac{1}{n} \sum_{i=1}^n (T_i - \mu_t(\bm{X}_i))^2
(\Delta(\mathbf{X}_i) - \mathbf{X}_i^T \bm{\beta}) \bm{X}_i.
\]
The first-order conditions for $\bm{\beta}^{*}$ and
$\tilde{\bm{\beta}}$ are $\bm{\psi}(\bm{\beta}^{*},\mu_t) = \bm{0}$ and
$\bm{\psi}(\tilde{\bm{\beta}},\hat{\mu}_t) = \bm{0}$. Notice that
$\psi$ is a linear function of $\bm{\beta}$, so
\[
\begin{split}
  \bm{0} &= \sqrt{n} \bm{\psi}(\tilde{\bm{\beta}},\hat{\mu}_t) \\
  &= \sqrt{n} \bm{\psi}(\bm{\beta}^{*},\hat{\mu}_t) + \Big[\frac{1}{n}
  \sum_{i=1}^n (T_i - \hat{\mu}_t(\bm{X}_i))^2 \bm{X}_i\bm{X}_i^T\Big]
  \sqrt{n} (\tilde{\bm{\beta}} - \bm{\beta}^{*}).
\end{split}
\]
Similar to the proof of \Cref{lem:bounded-beta-star}, the term in the squared bracket converges to
$\mathrm{E}[\mathrm{Var}(T_i|\mathbf{X}_i) \cdot \mathbf{X}_i \mathbf{X}_i^T]$
which is positive definite by assumption. Thus it suffices to show
$\sqrt{n} \bm{\psi}(\bm{\beta}^{*},\hat{\mu}_t) \overset{p}{\to}
0$. This is true because
\begin{equation} \label{eq:psi-beta}
  \begin{split}
    &\sqrt{n} \bm{\psi}(\bm{\beta}^{*},\hat{\mu}_t) \\
    =& \frac{1}{\sqrt{n}} \sum_{i=1}^n \big[(T_i - \mu_t(\mathbf{X}_i) +
    \mu_t(\mathbf{X}_i) - \hat{\mu}_t(\mathbf{X}_i)\big]^2
    (\Delta(\mathbf{X}_i) - \mathbf{X}_i^T \bm{\beta}^{*}) \mathbf{X}_i
    \\
    =& \frac{1}{\sqrt{n}} \sum_{i=1}^n (T_i - \mu_t(\mathbf{X}_i))^2 \cdot
    (\Delta(\mathbf{X}_i) - \mathbf{X}_i^T \bm{\beta}^{*})
    \mathbf{X}_i \\
    &+ \frac{1}{\sqrt{n}} \sum_{i=1}^n 2
    (\mu_t(\mathbf{X}_i) - \hat{\mu}_t(\mathbf{X}_i)) (T_i -
    \mu_t(\mathbf{X}_i)) \cdot (\Delta(\mathbf{X}_i) - \mathbf{X}_i^T
    \bm{\beta}^{*}) \mathbf{X}_i \\
    &+ \frac{1}{\sqrt{n}} \sum_{i=1}^n (\mu_t(\mathbf{X}_i) -
    \hat{\mu}_t(\mathbf{X}_i))^2 \cdot (\Delta(\mathbf{X}_i) - \mathbf{X}_i^T
    \bm{\beta}^{*}) \mathbf{X}_i.
  \end{split}
\end{equation}
The first term is $\bm{0}$ because $\bm{\psi}(\bm{\beta}^{*},\mu_t) =
\bm{0}$. The second term is $\bm{o}_p(n^{-1/4})$ because $\|\mu_t -
\hat{\mu}_t\|_2 = o_p(n^{-1/4})$ and the rest is an i.i.d.\ sum with mean
$\mathrm{E}[(T_i-\mu_t(\bm{X}_i)) (\Delta(\mathbf{X}_i) -
\mathbf{X}_i^T \bm{\beta}^{*}) \mathbf{X}_i] = \bm{0}$. The third term
is $\bm{o}_p(1)$ because of the rate asumption $\|\mu_t - \hat{\mu}_t\|_2 =
o_p(n^{-1/4})$ and boundedness of $\Delta(\bm X)$, $\bm X$, and $\bm
\beta^{*}$.

\subsection{Proof of \Cref{thm:beta-fixed}}
\label{sec:proof-crefthm:b-tild}

\begin{lemma} \label{thm:beta-tilde-M}
  Under
  \Cref{assump:basic,assump:accuracy-treatment,assump:support-X,assump:accuracy-outcome},
  we have
  \[
  \Big(\sum_{i=1}^n (T_i - \hat{\mu}_t(\mathbf{X}_{i}))^2
  \mathbf{X}_{i,\mathcal{M}} \mathbf{X}_{i, \mathcal{M}}^T\Big)^{-1/2} (\hat{\bm{\beta}}_{\mathcal{M}} -
  \tilde{\bm{\beta}}_{\mathcal{M}}) \overset{d}{\to} \mathrm{N}(0,\sigma^2 \mathbf{I}_{|\mathcal{M}|}).
\]
\end{lemma}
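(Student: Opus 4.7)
The plan is to subtract the first-order conditions (FOCs) of the two least-squares problems defining $\hat{\bm{\beta}}_{\mathcal{M}}$ and $\tilde{\bm{\beta}}_{\mathcal{M}}$. Writing $V = \sum_{i=1}^n (T_i - \hat\mu_t(\bm X_i))^2 \bm X_{i,\mathcal{M}} \bm X_{i,\mathcal{M}}^T$, the two FOCs give $V \hat{\bm{\beta}}_{\mathcal{M}} = \sum_i (T_i - \hat\mu_t)(Y_i - \hat\mu_y)\bm X_{i,\mathcal{M}}$ and $V \tilde{\bm{\beta}}_{\mathcal{M}} = \sum_i (T_i - \hat\mu_t)^2 \Delta(\bm X_i) \bm X_{i,\mathcal{M}}$. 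Using the identity $Y_i - \hat\mu_y = (T_i - \mu_t)\Delta(\bm X_i) + \epsilon_i + (\mu_y - \hat\mu_y)$, which follows from \eqref{eq:causal-model} together with $\mu_y = \eta + \mu_t \Delta$, and the algebraic rewrite $(T_i - \mu_t)\Delta - (T_i - \hat\mu_t)\Delta = (\hat\mu_t - \mu_t)\Delta$, subtracting the FOCs yields
\[
V\,(\hat{\bm{\beta}}_{\mathcal{M}} - \tilde{\bm{\beta}}_{\mathcal{M}}) = S_0 + R_1 + R_2,
\]
with $S_0 = \sum_i (T_i - \hat\mu_t)\epsilon_i \bm X_{i,\mathcal{M}}$, $R_1 = \sum_i (T_i - \hat\mu_t)(\hat\mu_t - \mu_t)\Delta(\bm X_i) \bm X_{i,\mathcal{M}}$, and $R_2 = \sum_i (T_i - \hat\mu_t)(\mu_y - \hat\mu_y) \bm X_{i,\mathcal{M}}$.

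For the main term $S_0$, I would decompose $T_i - \hat\mu_t = (T_i - \mu_t) + (\mu_t - \hat\mu_t)$. The piece $\sum_i (T_i - \mu_t)\epsilon_i \bm X_{i,\mathcal{M}}$ is an i.i.d.\ zero-mean sum (bounded second moment by \Cref{assump:basic}C and \Cref{assump:support-X}) and satisfies a multivariate CLT with asymptotic covariance $n\sigma^2 \mathrm{E}[(T_i - \mu_t)^2 \bm X_{i,\mathcal{M}} \bm X_{i,\mathcal{M}}^T]$. The piece $\sum_i (\mu_t - \hat\mu_t)\epsilon_i \bm X_{i,\mathcal{M}}$ is where cross-fitting matters: conditional on the nuisance estimate $\hat\mu_t$ (which was trained on an independent split), $\epsilon_i$ has mean zero and is independent of $(\bm X_i, T_i)$, so the conditional mean vanishes and the conditional variance is bounded by a constant times $\sum_i (\hat\mu_t - \mu_t)^2 = n\|\hat\mu_t - \mu_t\|_2^2 = o_p(n^{1/2})$ by \Cref{assump:accuracy-treatment}; hence this piece is $o_p(n^{1/4}) = o_p(\sqrt{n})$.

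For the remainders I would again split $T_i - \hat\mu_t = (T_i - \mu_t) + (\mu_t - \hat\mu_t)$. In $R_1$, the $(T_i - \mu_t)$ contribution has conditional mean zero given $\hat\mu_t$ (using $\mathrm{E}[T_i - \mu_t \mid \bm X_i] = 0$ and cross-fitting) and conditional variance bounded by a constant times $\sum_i (\hat\mu_t - \mu_t)^2 = o_p(\sqrt{n})$, while the $(\mu_t - \hat\mu_t)$ contribution is deterministically bounded by a constant times $\sum_i (\hat\mu_t - \mu_t)^2 = o_p(\sqrt{n})$. In $R_2$, the $(T_i - \mu_t)$ contribution, similarly, has conditional mean zero given $\hat\mu_y$ and conditional variance bounded by a constant times $\sum_i (\mu_y - \hat\mu_y)^2 = n \|\hat\mu_y - \mu_y\|_2^2 = o_p(n)$ by \Cref{assump:accuracy-outcome}, hence is $o_p(\sqrt{n})$. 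The remaining deterministic piece of $R_2$, namely $\sum_i (\mu_t - \hat\mu_t)(\mu_y - \hat\mu_y) \bm X_{i,\mathcal{M}}$, is bounded by Cauchy--Schwarz (and boundedness of $\bm X$) by a constant times $n \|\hat\mu_t - \mu_t\|_2 \|\hat\mu_y - \mu_y\|_2 = o_p(\sqrt{n})$, using precisely the doubly-robust product-rate part of \Cref{assump:accuracy-outcome}.

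Finally, a law-of-large-numbers argument analogous to that in the proof of \Cref{thm:hat-star} shows $V/n$ converges in probability to $\mathrm{E}[\mathrm{Var}(T_i \mid \bm X_i) \bm X_{i,\mathcal{M}} \bm X_{i,\mathcal{M}}^T]$, which is positive definite by \Cref{assump:basic}C and the assumption on $\mathrm{E}[\bm X_{i,\mathcal{M}} \bm X_{i,\mathcal{M}}^T]$. Combining the CLT for the leading part of $S_0$, the $o_p(\sqrt{n})$ control of all remainders, and Slutsky's theorem then yields $V^{-1/2}(\hat{\bm\beta}_{\mathcal{M}} - \tilde{\bm\beta}_{\mathcal{M}}) \overset{d}{\to} \mathrm{N}(0, \sigma^2 \bm I_{|\mathcal{M}|})$. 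The main obstacle will be the control of $R_2$: without cross-fitting the cross-term $\sum_i (T_i - \mu_t)(\mu_y - \hat\mu_y) \bm X_{i,\mathcal{M}}$ cannot be handled by a simple conditional-variance argument and would demand empirical-process machinery, while the deterministic cross-term $\sum_i (\mu_t - \hat\mu_t)(\mu_y - \hat\mu_y) \bm X_{i,\mathcal{M}}$ crucially requires the product rate---neither a consistent $\hat\mu_y$ nor a consistent $\hat\mu_t$ alone would suffice.
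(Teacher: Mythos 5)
Your proposal is correct and follows essentially the same route as the paper: the paper's proof expands the normal equation for $\hat{\bm{\beta}}_{\mathcal{M}}$ using $Y_i-\hat{\mu}_{yi}=(\mu_{yi}-\hat{\mu}_{yi})+(T_i-\mu_{ti})\Delta(\mathbf{X}_i)+\epsilon_i$, peels off $\tilde{\bm{\beta}}_{\mathcal{M}}$, and is left with exactly your $S_0$, $R_1$, $R_2$ (grouped into one bracket), with the remainders controlled by the same cross-fitting/conditional-Chebyshev and product-rate arguments you describe. Your write-up is in fact more explicit than the paper's, which simply refers back to the analysis in the proof of \Cref{thm:hat-star} for the residual terms.
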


Combining \Cref{thm:hat-star} and \Cref{thm:beta-tilde-M}, we obtain
the asymptotic inference of $\bm{\beta}_{\mathcal{M}}^{*}$ in
\Cref{thm:beta-fixed}. Next we prove \Cref{thm:beta-tilde-M}.

Like in \Cref{sec:proof-crefthm:h-star} we suppress the subscript
$\mathcal{M}$ for simplicity of notation. Denote $\mu_{yi} =
\mu_y(\mathbf{X}_i)$, $\mu_{ti} = \mu_t(\mathbf{X}_i)$ and similarly
for $\hat{\mu}_{yi}$ and $\hat{\mu}_{ti}$. Since $\hat{\bm{\beta}}$ is
the least squares solution, we have
\[
\begin{split}
  \hat{\bm{\beta}} &= \Big[\frac{1}{n}
  \sum_{i=1}^n (T_i - \hat{\mu}_{ti})^2
  \bm{X}_i\bm{X}_i^T\Big]^{-1} \Big[ \frac{1}{n} \sum_{i=1}^n (T_i -
  \hat{\mu}_{ti}) \mathbf{X}_i (y_i -
  \hat{\mu}_{yi}) \Big] \\
  &= \Big[\frac{1}{n}
  \sum_{i=1}^n (T_i - \hat{\mu}_{ti})^2
  \bm{X}_i\bm{X}_i^T\Big]^{-1} \Big\{ \frac{1}{n} \sum_{i=1}^n (T_i -
  \hat{\mu}_{ti}) \mathbf{X}_i \big[\mu_{yi} -
  \hat{\mu}_{yi} + (T_i - \mu_{ti}) \Delta(\mathbf{X}_i) +
  \epsilon_i\big] \Big\} \\
  &= \tilde{\bm{\beta}} + \Big[\frac{1}{n}
  \sum_{i=1}^n (T_i - \hat{\mu}_{ti})^2
  \bm{X}_i\bm{X}_i^T\Big]^{-1}\Big[ \frac{1}{n} \sum_{i=1}^n (T_i -
  \hat{\mu}_{ti}) \epsilon_i \mathbf{X}_i \Big]  \\
  &\qquad +\Big[\frac{1}{n}
  \sum_{i=1}^n (T_i - \hat{\mu}_{ti})^2
  \bm{X}_i\bm{X}_i^T\Big]^{-1} \Big\{\frac{1}{n} \sum_{i=1}^n (T_i
  - \mu_{ti} + \mu_{ti} -
  \hat{\mu}_{ti}) \mathbf{X}_i \big[\mu_{yi} -
  \hat{\mu}_{yi} + (\hat{\mu}_{ti} - \mu_{ti})
  \Delta(\mathbf{X}_i)\big] \Big\} \\
  &= \tilde{\bm{\beta}} + \Big[\frac{1}{n}
  \sum_{i=1}^n (T_i - \hat{\mu}_{ti})^2
  \bm{X}_i\bm{X}_i^T\Big]^{-1}\Big[ \frac{1}{n} \sum_{i=1}^n (T_i -
  \hat{\mu}_{ti}) \epsilon_i \mathbf{X}_i \Big] + o_p(n^{-1/2}). \\
\end{split}
\]
In the last equation, the residual terms are smaller than $n^{-1/2}$
because of the rate assumptions in the Lemma (each term can be
analyzed as in the proof of \Cref{thm:hat-star}).

\subsection{Proof of \Cref{lem:beta-tilde-star-random}}
\label{sec:proof-crefl-tilde}

We first prove a Lemma. Denote $\kappa(\mathbf{Z})$ be the
all the eigen-values of a square matrix $\mathbf{Z}$.

\begin{lemma} \label{lem:eta}
  Under \Cref{assump:basic,assump:support-X,assump:model-size,assump:design}, with probability going
  to $1$, for any $k$,
  \[
  1/(2 C^2) \le \kappa\Big((1/n)\tilde{\bm{X}}^T_{\cdot
    \hat{\mathcal{M}}}\tilde{\bm{X}}_{\cdot
    \hat{\mathcal{M}}}\Big) \le 2 m C^3.
  \]
  Therefore $\tilde{\bm{\eta}}^T \tilde{\bm{\eta}} = \Theta_p(1/n)$,
  meaning that for any $\epsilon > 0$, there exists a constant $C > 1$ such that
  $\mathrm{P}(1/(Cn) \le \tilde{\bm{\eta}}^T \tilde{\bm{\eta}} \le
  C/n) \ge 1- \epsilon$ for sufficiently large $n$.
\end{lemma}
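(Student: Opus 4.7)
My plan is to first establish that with probability tending to $1$ all eigenvalues of $(1/n)\tilde{\bm X}^T_{\cdot\hat{\mathcal M}}\tilde{\bm X}_{\cdot\hat{\mathcal M}}$ lie in $[1/(2C^2),2mC^3]$, and then read off the conclusion about $\tilde{\bm\eta}^T\tilde{\bm\eta}$ from the identity
\[
  \tilde{\bm\eta}_{\mathcal M}^T\tilde{\bm\eta}_{\mathcal M} \;=\; \bm e_j^T\bigl(\tilde{\bm X}^T_{\cdot\mathcal M}\tilde{\bm X}_{\cdot\mathcal M}\bigr)^{-1}\bm e_j,
\]
which expresses $\tilde{\bm\eta}^T\tilde{\bm\eta}$ as a diagonal entry of the inverse Gram and hence sandwiches it between the minimum and maximum eigenvalues of that inverse. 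If the Gram eigenvalues lie in $[1/(2C^2),2mC^3]$ then those of the inverse lie in $[1/(2mC^3n),2C^2/n]$, immediately giving $\tilde{\bm\eta}^T\tilde{\bm\eta}=\Theta_p(1/n)$.

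\textbf{Bounding the Gram eigenvalues, population piece.} Throughout I work on the high-probability event $\{|\hat{\mathcal M}|\le m\}$ provided by \Cref{assump:model-size}. For any fixed $\mathcal M$ with $|\mathcal M|\le m$, the tower rule identifies the population Gram as
\[
  \bm A(\mathcal M) \;:=\; \mathrm{E}\bigl[(T_i-\mu_t(\bm X_i))^2\bm X_{i,\mathcal M}\bm X_{i,\mathcal M}^T\bigr] \;=\; \mathrm{E}\bigl[\mathrm{Var}(T_i\mid\bm X_i)\,\bm X_{i,\mathcal M}\bm X_{i,\mathcal M}^T\bigr].
\]
Combining $\mathrm{Var}(T_i\mid\bm X_i)\in[1/C,C]$ (\Cref{assump:basic}C), the sparse eigenvalue assumption (\Cref{assump:design}), and $\|\bm X_{i,\mathcal M}\|^2\le mC^2$ (\Cref{assump:support-X}) yields $\bm A(\mathcal M)\succeq(1/C^2)\bm I_{|\mathcal M|}$ and $\bm A(\mathcal M)\preceq mC^3\bm I_{|\mathcal M|}$. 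A standard weak law plus a union bound over the finitely many subsets of size at most $m$ then gives
\[
  \max_{|\mathcal M|\le m}\Bigl\|(1/n)\sum_{i=1}^n (T_i-\mu_t(\bm X_i))^2\bm X_{i,\mathcal M}\bm X_{i,\mathcal M}^T - \bm A(\mathcal M)\Bigr\|_{\mathrm{op}} = o_p(1).
\]

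\textbf{Perturbation from $\mu_t$ to $\hat\mu_t$.} Expanding the square,
\[
  (T_i-\hat\mu_t(\bm X_i))^2-(T_i-\mu_t(\bm X_i))^2 = 2(T_i-\mu_t(\bm X_i))(\mu_t(\bm X_i)-\hat\mu_t(\bm X_i))+(\mu_t(\bm X_i)-\hat\mu_t(\bm X_i))^2,
\]
and using the elementary bound $\bigl\|(1/n)\sum_i \Delta_i\bm X_{i,\mathcal M}\bm X_{i,\mathcal M}^T\bigr\|_{\mathrm{op}}\le mC^2\cdot(1/n)\sum_i|\Delta_i|$ (which holds uniformly in $\mathcal M$ thanks to $\|\bm X_{i,\mathcal M}\|^2\le mC^2$), the squared piece contributes $O_p(\|\hat\mu_t-\mu_t\|_2^2)=o_p(n^{-1/2})$ by \Cref{assump:accuracy-treatment}, while the cross piece is $o_p(n^{-1/4})$ by Cauchy--Schwarz together with $(1/n)\sum_i(T_i-\mu_t(\bm X_i))^2=O_p(1)$. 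Combining with the previous paragraph and applying Weyl's inequality gives the claimed eigenvalue bounds $[1/(2C^2),2mC^3]$ with probability tending to $1$, and the $\tilde{\bm\eta}^T\tilde{\bm\eta}$ conclusion follows from the opening paragraph.

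\textbf{Main obstacle.} The delicate step is the perturbation argument, where one must transfer the $L^2$ rate $\|\hat\mu_t-\mu_t\|_2=o_p(n^{-1/4})$ into an operator-norm bound on the plug-in Gram that is uniform across all submodels $\mathcal M$ with $|\mathcal M|\le m$. The crude but sufficient trick is to factor out the model dependence via $\|\bm X_{i,\mathcal M}\|^2\le mC^2$, so that the operator-norm control reduces to a scalar empirical average that does not depend on $\mathcal M$; this is what makes the argument go through without any sophisticated empirical process machinery.
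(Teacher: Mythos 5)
Your proof is correct and follows essentially the same route as the paper: bound the eigenvalues of the population Gram matrix $\mathrm{E}[\mathrm{Var}(T_i\mid\bm X_i)\,\bm X_{i,\mathcal M}\bm X_{i,\mathcal M}^T]$ in $[1/C^2,\,mC^3]$ using \Cref{assump:basic,assump:support-X,assump:design}, transfer these bounds to the empirical Gram with a factor-of-two slack uniformly over the finitely many models of size at most $m$, and then read off $\tilde{\bm\eta}^T\tilde{\bm\eta}=\Theta_p(1/n)$ from the diagonal entries of the inverse exactly as the paper does. The one substantive difference is that you make explicit the step the paper glosses over: the Gram matrix in the lemma is built from $\hat\mu_t$, not $\mu_t$, and your perturbation decomposition (cross term by Cauchy--Schwarz, squared term by the $L^2$ rate) necessarily invokes \Cref{assump:accuracy-treatment}, which is not listed among the lemma's hypotheses; the paper's own proof simply asserts that the $\hat\mu_t$-based Gram ``converges to'' the population matrix and then applies the matrix Chernoff bound, so it implicitly relies on the same accuracy assumption. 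Your weak-law-plus-union-bound concentration is adequate for fixed $p$ (and avoids the boundedness of $T_i$ that a literal application of matrix Chernoff would require), though for $p$ growing with $n$ the exponential tail of the paper's concentration tool would be needed to absorb the union bound over $\binom{p}{\le m}$ submodels.
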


\begin{proof}
  For the first result, by
  \Cref{assump:model-size}, we only need to bound, for every
  $|\mathcal{M}| \le m$, the eigenvalues of $((1/n)\tilde{\bm{X}}^T_{\cdot
    {\mathcal{M}}}\tilde{\bm{X}}_{\cdot {\mathcal{M}}})$. This matrix converges
  to $\mathrm{E}[\mathrm{Var}(T_i|\bm{X}_{i}) \cdot
  \bm{X}_{i,\mathcal{M}} \bm{X}_{i,\mathcal{M}}^T]$, whose eigenvalues
  are bounded by
  \[
  \begin{split}
    \kappa \Big( \mathrm{E}[\mathrm{Var}(T_i|\bm{X}_{i}) \cdot
    \bm{X}_{i,\mathcal{M}} \bm{X}_{i,\mathcal{M}}^T] \Big) &\in
    \bigg[(1/C) \cdot \kappa_{\mathrm{min}} \Big( \mathrm{E}[
    \bm{X}_{i,\mathcal{M}} \bm{X}_{i,\mathcal{M}}^T] \Big),~C \cdot \kappa_{\mathrm{max}} \Big( \mathrm{E}[
    \bm{X}_{i,\mathcal{M}} \bm{X}_{i,\mathcal{M}}^T] \Big)\bigg] \\
    &\in [1/C^2, m C^3].
  \end{split}
  \]
  Here we use the fact that the largest eigenvalue of a symmetric matrix is
  upper-bounded by the largest row sum of the matrix. Using the matrix
  Chernoff bound \citep{tropp2012user}, the eigenvalues of
  $((1/n)\tilde{\bm{X}}^T_{\cdot {\mathcal{M}}}\tilde{\bm{X}}_{\cdot
    {\mathcal{M}}})$ are bounded by $1/(2C^2)$ and $2 m C^3$ with
  probability going to $1$.

  The second result follows from
  \[
  \tilde{\bm{\eta}}^T \tilde{\bm{\eta}} = \mathbf{e}_j^T \big(\tilde{\bm{X}}^T_{\cdot
    \hat{\mathcal{M}}}\tilde{\bm{X}}_{\cdot
    \hat{\mathcal{M}}}\big)^{-1} \mathbf{e}_j = \frac{1}{n} \Big[ \big(\frac{1}{n}\tilde{\bm{X}}^T_{\cdot
    \hat{\mathcal{M}}}\tilde{\bm{X}}_{\cdot
    \hat{\mathcal{M}}}\big)^{-1} \Big]_{jj}.
  \]
  The diagonal entries of $((1/n)\tilde{\bm{X}}^T_{\cdot
    \hat{\mathcal{M}}}\tilde{\bm{X}}_{\cdot
    \hat{\mathcal{M}}})^{-1}$ are bounded by its smallest and largest
  eigenvalues, i.e.\ the reciprocal of the largest and smallest
  eigenvalue of $((1/n)\tilde{\bm{X}}^T_{\cdot
    \hat{\mathcal{M}}}\tilde{\bm{X}}_{\cdot \hat{\mathcal{M}}})$.
\end{proof}

Now we turn to the proof of \Cref{lem:beta-tilde-star-random}. By \Cref{assump:model-size},
  $\|{\bm{\beta}}^{*}_{\hat{\mathcal{M}}}\|_{\infty} \le
  \max_{|\mathcal{M}| \le m} \|{\bm{\beta}}^{*}_{{\mathcal{M}}}\|_{\infty}$, $\|\tilde{\bm{\beta}}_{\hat{\mathcal{M}}} -
  {\bm{\beta}}^{*}_{\hat{\mathcal{M}}}\|_{\infty} \le
  \max_{|\mathcal{M}|\le m}\|\tilde{\bm{\beta}}_{{\mathcal{M}}} -
  {\bm{\beta}}^{*}_{{\mathcal{M}}}\|_{\infty}$ with probability
  tending to $1$. By definition,
  \[
  \bm{\beta}^{*}_{{\mathcal{M}}} = \Big[\frac{1}{n} \sum_{i=1}^n
  (T_i - {\mu}_{ti})^2 \bm{X}_{i,\mathcal{M}}\bm{X}_{i,\mathcal{M}}^T\Big]^{-1} \Big[\frac{1}{n} \sum_{i=1}^n
  (T_i - {\mu}_{ti})^2 \Delta(\bm{X}_i)\bm{X}_{i,\mathcal{M}}\Big].
  \]

  By the boundedness of $\mathrm{Var}(T_i|\bm{X}_i)$,
  $\Delta(\bm{X}_i)$ and the uniform boundedness of $\bm{X}_i$,
  \[
  \Big\|\frac{1}{n} \sum_{i=1}^n
  (T_i - {\mu}_{ti})^2
  \Delta(\bm{X}_i)\bm{X}_{i,\mathcal{M}}\Big\|_{\infty} \le C^2 \cdot \frac{1}{n} \sum_{i=1}^n
  (T_i - {\mu}_{ti})^2 = O_p(1).
  \]
  Therefore
  \[
  \begin{split}
    \|\bm{\beta}^{*}_{{\mathcal{M}}}\|_{\infty} &\le \Big\|\Big[\frac{1}{n} \sum_{i=1}^n
    (T_i - {\mu}_{ti})^2
    \bm{X}_{i,\mathcal{M}}\bm{X}_{i,\mathcal{M}}^T\Big]^{-1}\Big\|_1 \cdot \Big\|\frac{1}{n} \sum_{i=1}^n
    (T_i - {\mu}_{ti})^2
    \Delta(\bm{X}_i)\bm{X}_{i,\mathcal{M}}\Big\|_{\infty} \\
    &\le \sqrt{m} \Big\|\Big[\frac{1}{n} \sum_{i=1}^n
    (T_i - {\mu}_{ti})^2
    \bm{X}_{i,\mathcal{M}}\bm{X}_{i,\mathcal{M}}^T\Big]^{-1}\Big\|_2 \cdot C^2 \cdot \frac{1}{n} \sum_{i=1}^n
    (T_i - {\mu}_{ti})^2 \\
    &\le \sqrt{m} C^4 \frac{1}{n} \sum_{i=1}^n (T_i - {\mu}_{ti})^2 = O_p(1).
  \end{split}
  \]
  The last inequality uses \Cref{lem:eta}. Notice that the upper bound
  above holds uniformly for all $|\mathcal{M}| \le m$. Thus
  $\|{\bm{\beta}}^{*}_{\hat{\mathcal{M}}}\|_{\infty} = O_p(1)$.

  For  $\|\tilde{\bm{\beta}}_{{\mathcal{M}}} -
  {\bm{\beta}}^{*}_{{\mathcal{M}}}\|_{\infty}$, by \eqref{eq:psi-beta}
  and the boundedness assumptions (including the boundedness of
  $\bm{\beta}^{*}$), it is easy to show that $\max_{|\mathcal{M}|\le m}\sqrt{n}
  \|\bm{\psi}(\bm{\beta}^{*}_{\mathcal{M}},\hat{\mu}_t)\|_{\infty} =
  o_p(1)$. Therefore by the same argument in the proof of
  \Cref{thm:hat-star}, $\max_{|\mathcal{M}|\le m}\|\tilde{\bm{\beta}}_{{\mathcal{M}}} -
  {\bm{\beta}}^{*}_{{\mathcal{M}}}\|_{\infty} = o_p(n^{-1/2})$.

\subsection{Proof of \Cref{thm:asymptotic-validity}}
\label{sec:proof-crefthm:-valid}

We prove this Theorem through a series of Lemmas.

\begin{lemma}
  \label{lem:replace-lemma-7}
  Under the assumptions of \Cref{thm:beta-tilde-M} and \Cref{assump:support-X},
  \[ \max_{|\mathcal{M}| \le m}
  \|(\tilde{\mathbf{X}}_{\cdot\mathcal{M}}^T
  \tilde{\mathbf{X}}_{\cdot\mathcal{M}})^{-1}
  \tilde{\mathbf{X}}_{\cdot\mathcal{M}}^T (\hat{\bm{\mu}}_y -
  \bm{\mu}_y)\|_{\infty} = o_p(n^{-1/2}).
  \]
\end{lemma}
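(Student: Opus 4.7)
The plan is to factor the quantity of interest as
\[
(\tilde{\mathbf{X}}_{\cdot\mathcal{M}}^T \tilde{\mathbf{X}}_{\cdot\mathcal{M}})^{-1} \tilde{\mathbf{X}}_{\cdot\mathcal{M}}^T (\hat{\bm{\mu}}_y - \bm{\mu}_y) = \hat{\Sigma}_{\mathcal{M}}^{-1}\, \hat{\gamma}_\mathcal{M},
\]
where $\hat{\Sigma}_{\mathcal{M}} = \tfrac{1}{n}\tilde{\mathbf{X}}_{\cdot\mathcal{M}}^T \tilde{\mathbf{X}}_{\cdot\mathcal{M}}$ and $\hat{\gamma}_\mathcal{M} = \tfrac{1}{n}\tilde{\mathbf{X}}_{\cdot\mathcal{M}}^T (\hat{\bm{\mu}}_y - \bm{\mu}_y)$, and bound each factor. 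Lemma \ref{lem:eta} already yields $\|\hat{\Sigma}_{\mathcal{M}}^{-1}\|_2 \le 2C^2$ with probability tending to one, uniformly over all $|\mathcal{M}| \le m$. Since $\|\hat{\Sigma}_\mathcal{M}^{-1} \hat{\gamma}_\mathcal{M}\|_\infty \le \|\hat{\Sigma}_\mathcal{M}^{-1}\|_2 \sqrt{m}\,\|\hat{\gamma}_\mathcal{M}\|_\infty$ and each entry of $\hat{\gamma}_\mathcal{M}$ is one of the $p$ entries of the full vector $\hat{\gamma}_{\{1,\dotsc,p\}}$, it suffices to prove
\[
\max_{1 \le j \le p} \Big|\tfrac{1}{n} \sum_{i=1}^n \big(T_i - \hat{\mu}_t(\mathbf{X}_i)\big) X_{ij} \big(\hat{\mu}_y(\mathbf{X}_i) - \mu_y(\mathbf{X}_i)\big)\Big| = o_p(n^{-1/2}).
\]

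I would then decompose $T_i - \hat{\mu}_t(\mathbf{X}_i) = (T_i - \mu_t(\mathbf{X}_i)) + (\mu_t(\mathbf{X}_i) - \hat{\mu}_t(\mathbf{X}_i))$, yielding two pieces $S_j^{(1)}$ and $S_j^{(2)}$. The second piece is easy: by Cauchy--Schwarz and the boundedness of $\mathbf{X}$ (Assumption \ref{assump:support-X}),
\[
|S_j^{(2)}| \le C\,\|\hat{\mu}_t - \mu_t\|_2 \cdot \|\hat{\mu}_y - \mu_y\|_2 = o_p(n^{-1/2})
\]
by the product rate in Assumption \ref{assump:accuracy-outcome}. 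The first piece $S_j^{(1)} = \tfrac{1}{n}\sum_i (T_i - \mu_t(\mathbf{X}_i)) X_{ij} (\hat{\mu}_y(\mathbf{X}_i) - \mu_y(\mathbf{X}_i))$ is where cross-fitting is essential: conditional on the auxiliary subsample used to train $\hat{\mu}_y$, the summand has conditional mean zero since $\mathrm{E}[T_i - \mu_t(\mathbf{X}_i)\mid \mathbf{X}_i] = 0$ and $\hat{\mu}_y$ is measurable with respect to the auxiliary data only. A conditional variance calculation, together with $|T_i - \mu_t(\mathbf{X}_i)| \le C$ and boundedness of $X_{ij}$, gives $\mathrm{Var}(S_j^{(1)} \mid \mathrm{aux}) \le (C'/n)\,\|\hat{\mu}_y - \mu_y\|_2^2 = o_p(1/n)$ by Assumption \ref{assump:accuracy-outcome}, so $S_j^{(1)} = o_p(n^{-1/2})$ by Chebyshev.

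The main obstacle is cleanly executing the cross-fitting step: it is precisely this device that turns the otherwise intractable product of two estimation errors into a conditionally centered empirical average whose variance is controlled solely by $\|\hat{\mu}_y - \mu_y\|_2^2 = o_p(1)$. Everything else is a uniformity argument, which is handled by a union bound over the fixed coordinate index $j \in \{1,\dotsc,p\}$ (and absorbed into the $\sqrt{m}$ factor that accounts for passing from $\ell_2$ to $\ell_\infty$ on vectors of dimension at most $m$). Assembling the two pieces gives the claimed $o_p(n^{-1/2})$ bound uniformly in $\mathcal{M}$.
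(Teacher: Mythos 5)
Your proposal is correct and follows essentially the same route as the paper's proof: both factor the quantity as $\hat{\Sigma}_{\mathcal{M}}^{-1}\hat{\gamma}_{\mathcal{M}}$, control the inverse Gram matrix uniformly over $|\mathcal{M}|\le m$ via \Cref{lem:eta}, decompose $T_i-\hat{\mu}_t(\mathbf{X}_i)=(T_i-\mu_t(\mathbf{X}_i))+(\mu_t(\mathbf{X}_i)-\hat{\mu}_t(\mathbf{X}_i))$, kill the cross term with Cauchy--Schwarz and the product rate in \Cref{assump:accuracy-outcome}, and handle the centered term with the cross-fitting conditional-independence argument plus Chebyshev. Your write-up is if anything slightly more explicit than the paper's (keeping $X_{ij}$ inside the sum and doing the union bound over $j$), but there is no substantive difference.
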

\begin{proof}
  The proof is similar to the one of \Cref{thm:beta-tilde-M}. For any
  $|\mathcal{M}| \le m$,
  \[
  \begin{split}
    &\Big\|(\tilde{\mathbf{X}}_{\cdot\mathcal{M}}^T
    \tilde{\mathbf{X}}_{\cdot\mathcal{M}})^{-1}
    \tilde{\mathbf{X}}_{\cdot\mathcal{M}}^T (\hat{\bm{\mu}}_y -
    \bm{\mu}_y) \Big\|_{\infty} \\
    =&
    \Big\|\Big[\frac{1}{n}
    \sum_{i=1}^n (T_i - \hat{\mu}_{ti})^2
    \bm{X}_{i,\mathcal{M}}\bm{X}_{i,\mathcal{M}}^T\Big]^{-1} \Big[ \frac{1}{n} \sum_{i=1}^n (T_i -
    \hat{\mu}_{ti}) \mathbf{X}_{i,\mathcal{M}} (\hat{\mu}_{yi} -
    \mu_{yi}) \Big] \Big\|_{\infty} \\
    \le&
    \Big\|\Big[\frac{1}{n}
    \sum_{i=1}^n (T_i - \hat{\mu}_{ti})^2
    \bm{X}_{i,\mathcal{M}}\bm{X}_{i,\mathcal{M}}^T\Big]^{-1}\Big\|_1
    \cdot \Big\|\Big[ \frac{1}{n} \sum_{i=1}^n [(T_i - \mu_{ti}) + (\mu_{ti} - \hat{\mu}_{ti})] \mathbf{X}_{i,\mathcal{M}} (\hat{\mu}_{yi} -
    \mu_{yi}) \Big]\Big\|_{\infty} \\
    \le& \sqrt{m} C^3 \Big| \frac{1}{n} \sum_{i=1}^n [(T_i - \mu_{ti}) + (\mu_{ti} - \hat{\mu}_{ti})] (\hat{\mu}_{yi} - \mu_{yi}) \Big| \\
    =& o_p(n^{-1/2}).
  \end{split}
  \]
  The last inequality uses the rate assumptions in
  \Cref{assump:accuracy-outcome}. As an example, consider
  $(1/n)\sum_{i=1}^n (T_i - \mu_{ti}) (\hat{\mu}_{yi} -
  \mu_{yi})$. Because cross-fitting is used to obtain
  $\hat{\mu}_{yi}$, the two terms in the summand are independent. Thus $\mathbb{E}[(T_i - \mu_{ti}) (\hat{\mu}_{yi} -
  \mu_{yi})] = 0$. Using the consistency of $\hat{\mu}_y$ and the
  Chevyshev inequality, it is straightforward to show $(1/\sqrt{n})\sum_{i=1}^n (T_i - \mu_{ti}) (\hat{\mu}_{yi} -
  \mu_{yi}) = o_p(1)$.
Finally, notice that the bound above is
  universal for all $|\mathcal{M}| \le m$.
\end{proof}

\begin{lemma} \label{lem:kkt}
  Under \Cref{assump:kkt}, $\mathbf{b}_1(\hat{\mathcal{M}},\hat{\mathbf{s}})
  - \mathbf{A}_1(\hat{\mathcal{M}},\hat{\mathbf{s}}) \cdot (\mathbf{Y} -
\hat{\mathbf{\mu}}_y) \ge \bm{1}/(C\sqrt{n})$.
\end{lemma}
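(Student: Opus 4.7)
The plan is to reduce the claimed bound to an algebraic restatement of \Cref{assump:kkt} by unpacking the definitions of $\mathbf{A}_1$ and $\mathbf{b}_1$ (with the transformed design $\tilde{\mathbf{X}}$) and invoking the closed-form expression for the lasso solution on the active set. Concretely, for the effect-modification problem, writing $\tilde{\mathbf{Y}} = \mathbf{Y} - \hat{\bm{\mu}}_y$, the constraints are
\[
\mathbf{A}_1(\hat{\mathcal{M}},\hat{\mathbf{s}}) = -\mathrm{diag}(\hat{\mathbf{s}}) \tilde{\mathbf{X}}_{\cdot,\hat{\mathcal{M}}}^{\dagger}, \qquad \mathbf{b}_1(\hat{\mathcal{M}},\hat{\mathbf{s}}) = -\lambda\, \mathrm{diag}(\hat{\mathbf{s}}) (\tilde{\mathbf{X}}_{\cdot,\hat{\mathcal{M}}}^T \tilde{\mathbf{X}}_{\cdot,\hat{\mathcal{M}}})^{-1} \hat{\mathbf{s}}.
\]
On the event $\{\hat{\mathcal{M}}=\mathcal{M},\hat{\mathbf{s}}=\mathbf{s}\}$, the KKT conditions for \eqref{eq:lasso} imply the standard closed form
\[
\hat{\bm{\beta}}_{\hat{\mathcal{M}}}(\lambda) = (\tilde{\mathbf{X}}_{\cdot,\hat{\mathcal{M}}}^T \tilde{\mathbf{X}}_{\cdot,\hat{\mathcal{M}}})^{-1}\bigl(\tilde{\mathbf{X}}_{\cdot,\hat{\mathcal{M}}}^T \tilde{\mathbf{Y}} - \lambda \hat{\mathbf{s}}\bigr) = \tilde{\mathbf{X}}_{\cdot,\hat{\mathcal{M}}}^{\dagger}\tilde{\mathbf{Y}} - \lambda (\tilde{\mathbf{X}}_{\cdot,\hat{\mathcal{M}}}^T \tilde{\mathbf{X}}_{\cdot,\hat{\mathcal{M}}})^{-1} \hat{\mathbf{s}}.
\]

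Substituting into the quantity of interest,
\[
\mathbf{b}_1(\hat{\mathcal{M}},\hat{\mathbf{s}}) - \mathbf{A}_1(\hat{\mathcal{M}},\hat{\mathbf{s}})\tilde{\mathbf{Y}} = \mathrm{diag}(\hat{\mathbf{s}})\bigl[\tilde{\mathbf{X}}_{\cdot,\hat{\mathcal{M}}}^{\dagger}\tilde{\mathbf{Y}} - \lambda (\tilde{\mathbf{X}}_{\cdot,\hat{\mathcal{M}}}^T \tilde{\mathbf{X}}_{\cdot,\hat{\mathcal{M}}})^{-1} \hat{\mathbf{s}}\bigr] = \mathrm{diag}(\hat{\mathbf{s}})\,\hat{\bm{\beta}}_{\hat{\mathcal{M}}}(\lambda).
\]
Because $\hat{s}_k$ is by definition the sign of $(\hat{\bm{\beta}}_{\hat{\mathcal{M}}}(\lambda))_k$ for every $k\in\hat{\mathcal{M}}$, the $k$-th coordinate of the right-hand side equals $|(\hat{\bm{\beta}}_{\hat{\mathcal{M}}}(\lambda))_k|$. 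Applying \Cref{assump:kkt}, which asserts that $|(\hat{\bm{\beta}}_{\hat{\mathcal{M}}}(\lambda))_k| \ge 1/(C\sqrt{n})$ simultaneously for all $k \in \hat{\mathcal{M}}$ with probability tending to one, we obtain the desired componentwise inequality $\mathbf{b}_1 - \mathbf{A}_1(\mathbf{Y} - \hat{\bm{\mu}}_y) \ge \bm{1}/(C\sqrt{n})$.

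There is no real obstacle here: the lemma is essentially a restatement of \Cref{assump:kkt} in terms of the affine representation of the selection event, and its role in the rest of the paper is to provide the margin needed so that perturbing $\tilde{\mathbf{Y}}$ by an $o_p(n^{-1/2})$ term (coming from the estimation error in $\hat{\mu}_y$, as controlled in \Cref{lem:replace-lemma-7}) does not cross any of the active-variable boundaries and therefore does not alter the selected model or signs. The only care needed is to verify the closed form of $\hat{\bm{\beta}}_{\hat{\mathcal{M}}}(\lambda)$ on the conditioning event, which follows from the stationarity condition of the lasso restricted to the active coordinates.
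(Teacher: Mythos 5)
Your proof is correct and follows essentially the same route as the paper: unpack the definitions of $\mathbf{A}_1$ and $\mathbf{b}_1$, use the lasso KKT stationarity condition to write $\mathbf{b}_1 - \mathbf{A}_1(\mathbf{Y}-\hat{\bm{\mu}}_y) = \mathrm{diag}(\hat{\mathbf{s}})\,\hat{\bm{\beta}}_{\{1,\dotsc,p\}}(\lambda)_{\hat{\mathcal{M}}} = |\hat{\bm{\beta}}_{\{1,\dotsc,p\}}(\lambda)_{\hat{\mathcal{M}}}|$, and invoke \Cref{assump:kkt} (all holding on an event of probability tending to one, as you note).
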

\begin{proof}
  By the definition of $\mathbf{A}_1$ and $\mathbf{b}_1$,
\[
\begin{split}
&\mathbf{b}_1(\hat{\mathcal{M}},\hat{\mathbf{s}})
  - \mathbf{A}_1(\hat{\mathcal{M}},\hat{\mathbf{s}}) \cdot (\mathbf{Y} -
\hat{\mathbf{\mu}}_y) \\
=& - \lambda \mathrm{diag}(\mathbf{s})
    (\mathbf{X}_{\cdot,\hat{\mathcal{M}}}^T \mathbf{X}_{\cdot,\hat{\mathcal{M}}})^{-1}
    \mathbf{s} + \mathrm{diag}(\mathbf{s})
    \mathbf{X}_{\cdot,\hat{\mathcal{M}}}^{\dagger} (\mathbf{Y} -
\hat{\mathbf{\mu}}_y)
\end{split}
\]
  The lasso solution $\hat{\bm{\beta}}_{\{1,\dotsc,p\}}(\lambda)$ satisfies
  the Karush-Kuhn-Tucker condition which says that
\[
\mathbf{X}_{\cdot,\hat{\mathcal{M}}}^T \Big[\mathbf{X}_{\cdot,\hat{\mathcal{M}}}
\big(\hat{\bm{\beta}}_{\{1,\dotsc,p\}}(\lambda)\big)_{\hat{\mathcal{M}}} - (\mathbf{Y} -
\hat{\bm{\mu}}_y)\Big] + \lambda \hat{\mathbf{s}} = \mathbf{0}.
\]
Therefore by \Cref{assump:kkt}
\[
\mathbf{b}_1({\hat{\mathcal{M}}},\hat{\mathbf{s}})
  - \mathbf{A}_1(\hat{\mathcal{M}},\hat{\mathbf{s}}) \cdot (\mathbf{Y} -
\hat{\mathbf{\mu}}_y) =
\big|\big(\hat{\bm{\beta}}_{\{1,\dotsc,p\}}(\lambda)\big)_{\hat{\mathcal{M}}}\big|
\ge \bm{1}/(C\sqrt{n}).
\]
\end{proof}

\begin{lemma} \label{lem:denominator}
  Under the assumptions in \Cref{thm:asymptotic-validity}, we have
  \begin{equation} \label{eq:denominator}
  \Phi\left(\frac{U(\mathbf{Y} - \hat{\bm{\mu}}_y;\hat{\mathcal{M}},\hat{\mathbf{s}})-\big(\bm{\beta}^{*}_{\hat{\mathcal{M}}}\big)_j}{\sigma\|\tilde{\bm{\eta}}_{\hat{\mathcal{M}}}\|}\right)
  -
  \Phi\left(\frac{L(\mathbf{Y} - \hat{\bm{\mu}}_y;\hat{\mathcal{M}},\hat{\mathbf{s}})-\big({\bm{\beta}}^{*}_{\hat{\mathcal{M}}}\big)_j}{\sigma\|\tilde{\bm{\eta}}_{\hat{\mathcal{M}}}\|}\right)
  = \Theta_p(1).
  \end{equation}
\end{lemma}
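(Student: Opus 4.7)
\textbf{Proof plan for \Cref{lem:denominator}.} The upper bound in $\Theta_p(1)$ is trivial since $\Phi \le 1$, so I focus on the lower bound. Define the normalized quantities
$$x_0 = \frac{(\hat{\bm\beta}_{\hat{\mathcal{M}}})_j - \bigl(\bm{\beta}^{*}_{\hat{\mathcal{M}}}\bigr)_j}{\sigma\|\tilde{\bm{\eta}}_{\hat{\mathcal{M}}}\|},\qquad a = \frac{L - \bigl(\bm{\beta}^{*}_{\hat{\mathcal{M}}}\bigr)_j}{\sigma\|\tilde{\bm{\eta}}_{\hat{\mathcal{M}}}\|},\qquad b = \frac{U - \bigl(\bm{\beta}^{*}_{\hat{\mathcal{M}}}\bigr)_j}{\sigma\|\tilde{\bm{\eta}}_{\hat{\mathcal{M}}}\|},$$
so that the expression in \eqref{eq:denominator} equals $\Phi(b) - \Phi(a)$. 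By the construction of the truncation limits in \Cref{lem:lasso-selective-inference}, the interval $[L,U]$ always contains $\tilde{\bm{\eta}}^T_{\hat{\mathcal{M}}}(\mathbf{Y} - \hat{\bm{\mu}}_y) = (\hat{\bm\beta}_{\hat{\mathcal{M}}})_j$ whenever the selection event holds; hence $a \le x_0 \le b$ deterministically on this event. Moreover, by \Cref{assump:smooth-pivot}, $b - a \ge 1/C$ with probability tending to $1$.

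The crucial substep is to show $x_0 = O_p(1)$, which amounts to showing that numerator and denominator decay at the same $n^{-1/2}$ rate. The denominator $\sigma\|\tilde{\bm\eta}_{\hat{\mathcal{M}}}\|$ is $\Theta_p(n^{-1/2})$ by \Cref{lem:eta}. For the numerator, \Cref{thm:beta-fixed} together with \Cref{lem:eta} gives, for each \emph{fixed} $\mathcal{M}$ with $|\mathcal{M}|\le m$, that $(\hat{\bm\beta}_{\mathcal{M}})_j - (\bm\beta^{*}_{\mathcal{M}})_j = O_p(n^{-1/2})$. Since by \Cref{assump:model-size} the selected model $\hat{\mathcal{M}}$ lies, with probability tending to one, within the finite collection of models of size at most $m$, a union bound transfers the rate to $\hat{\mathcal{M}}$ itself, yielding $(\hat{\bm\beta}_{\hat{\mathcal{M}}})_j - (\bm\beta^{*}_{\hat{\mathcal{M}}})_j = O_p(n^{-1/2})$.

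With $x_0 = O_p(1)$, $a \le x_0 \le b$, and $b - a \ge 1/C$ w.h.p., at least one of the subintervals $[x_0,\,x_0+1/(2C)]$ or $[x_0-1/(2C),\,x_0]$ is contained in $[a,b]$. On that subinterval the standard normal density is bounded below by $\phi(|x_0|+1/(2C))$, so
$$\Phi(b) - \Phi(a) \ge \frac{1}{2C}\,\phi\bigl(|x_0| + 1/(2C)\bigr).$$
Because $|x_0| = O_p(1)$, the right-hand side is bounded away from zero with high probability, which proves $\Phi(b) - \Phi(a) = \Omega_p(1)$, and combined with the trivial upper bound gives $\Theta_p(1)$.

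The main obstacle is the transition from a fixed model $\mathcal{M}$, for which \Cref{thm:beta-fixed} delivers asymptotic normality of $\hat{\bm\beta}_{\mathcal{M}}$, to the data-dependent $\hat{\mathcal{M}}$; this is precisely what \Cref{assump:model-size} is designed for, since it allows a finite union bound over models of size at most $m$, while \Cref{lem:beta-tilde-star-random} ensures that the targets $\bm\beta^{*}_{\hat{\mathcal{M}}}$ (and hence the shifts appearing in $a$ and $b$) remain uniformly $O_p(1)$.
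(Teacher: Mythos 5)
Your proof is correct and is essentially the paper's argument in a slightly different dress: the paper also pins an $O_p(1)$ point inside the truncation interval (it shows $b \ge Z + o_p(1)$ and $a \le Z + o_p(1)$ for $Z = \tilde{\bm{\eta}}^T\bm{\epsilon}/(\sigma\|\tilde{\bm{\eta}}\|) \sim \mathrm{N}(0,1)$, which is exactly your $x_0$ up to $o_p(1)$, using the nonnegativity of the KKT slack terms in place of your observation that $L \le \tilde{\bm{\eta}}^T(\mathbf{Y}-\hat{\bm{\mu}}_y) \le U$) and then combines this with \Cref{assump:smooth-pivot} and the boundedness of $\phi$ away from zero on compact sets. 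The only cosmetic difference is that you obtain $x_0 = O_p(1)$ via \Cref{thm:beta-fixed} plus a union bound over models of size at most $m$, whereas the paper uses the explicit decomposition through \Cref{lem:eta,lem:replace-lemma-7,lem:beta-tilde-star-random}; both routes rest on the same ingredients.
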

\begin{proof}
By the definition of $U$ and \Cref{lem:eta,lem:replace-lemma-7,lem:beta-tilde-star-random},
\[
\begin{split}
&U(\mathbf{Y} -
\hat{\bm{\mu}}_y;\hat{\mathcal{M}},\hat{\mathbf{s}})-\big({\bm{\beta}}^{*}_{\hat{\mathcal{M}}}\big)_j \\
=& \tilde{\bm{\eta}}_{\hat{\mathcal{M}}}^T (\mathbf{Y} - \hat{\bm{\mu}}_y) -
\big({\bm{\beta}}^{*}_{\hat{\mathcal{M}}}\big)_j + \min_{k:(\mathbf{A}\tilde{\bm{\eta}}_{\hat{\mathcal{M}}})_k > 0}
    \frac{b_k - (\mathbf{A}(\mathbf{Y} - \hat{\bm{\mu}}_y))_k
    }{(\mathbf{A}\tilde{\bm{\eta}}_{\hat{\mathcal{M}}})_k} \\
=& \tilde{\bm{\eta}}_{\hat{\mathcal{M}}}^T \bm{\epsilon} +
\tilde{\bm{\eta}}_{\hat{\mathcal{M}}}^T (\bm{\mu}_y -
\hat{\bm{\mu}}_y) +
\big[\big(\tilde{\bm{\beta}}_{\hat{\mathcal{M}}}\big)_j - \big({\bm{\beta}}^{*}_{\hat{\mathcal{M}}}\big)_j\big] + \min_{k:(\mathbf{A}\tilde{\bm{\eta}}_{\hat{\mathcal{M}}})_k > 0}
    \frac{b_k - (\mathbf{A}(\mathbf{Y} - \hat{\bm{\mu}}_y))_k
    }{(\mathbf{A}\tilde{\bm{\eta}}_{\hat{\mathcal{M}}})_k} \\
\ge& \tilde{\bm{\eta}}_{\hat{\mathcal{M}}}^T \bm{\epsilon} + o_p(1/\sqrt{n}).
\end{split}
\]
The last inequality uses  the KKT conditions \eqref{eq:partition}. Therefore
\[
\frac{U(\mathbf{Y} -
\hat{\bm{\mu}}_y;\hat{\mathcal{M}},\hat{\mathbf{s}})-\big(\tilde{\bm{\beta}}_{\hat{\mathcal{M}}}\big)_j}{\sigma\|\tilde{\bm{\eta}}_{\hat{\mathcal{M}}}\|}
\ge
\Big(\frac{\tilde{\bm{\eta}}_{\hat{\mathcal{M}}}}{\|\tilde{\bm{\eta}}_{\hat{\mathcal{M}}}\|}\Big)^T
\Big(\frac{\bm{\epsilon}}{\sigma}\Big) + o_p(1).
\]
Notice that the first term on the right hand side follows
the standard normal distrbution. Similarly,
\[
\frac{L(\mathbf{Y} -
\hat{\bm{\mu}}_y;\hat{\mathcal{M}},\hat{\mathbf{s}})-\big(\tilde{\bm{\beta}}_{\hat{\mathcal{M}}}\big)_j}{\sigma\|\tilde{\bm{\eta}}_{\hat{\mathcal{M}}}\|}
\le
\Big(\frac{\tilde{\bm{\eta}}_{\hat{\mathcal{M}}}}{\|\tilde{\bm{\eta}}_{\hat{\mathcal{M}}}\|}\Big)^T
\Big(\frac{\bm{\epsilon}}{\sigma}\Big) + o_p(1),
\]
This means the two terms in $\Phi$ in \eqref{eq:denominator} are
not too extreme (the $U$ term cannot go to $-\infty$ and the $L$ term
cannot go to $\infty$). Furthermore, in \Cref{assump:smooth-pivot} it is assumed
that the difference of these two terms is bounded
below. \Cref{eq:denominator} immediate follows from the fact that  the normal
CDF function $\Phi$ has bounded derivative and is lower bounded from $0$ in any
finite interval.
\end{proof}

\begin{lemma} \label{lem:U-difference}
  Under the assumptions in \Cref{thm:asymptotic-validity}, we have
  \[
  \Phi\left(\frac{U(\mathbf{Y} - \bm{\hat{\mu}}_y;\hat{\mathcal{M}},\hat{\mathbf{s}})-\big(\tilde{\bm{\beta}}_{\hat{\mathcal{M}}}\big)_j}{\sigma\|\tilde{\bm{\eta}}_{\hat{\mathcal{M}}}\|}\right)
  -
  \Phi\left(\frac{U(\mathbf{Y} - \bm{\mu}_y;\hat{\mathcal{M}},\hat{\mathbf{s}})-\big(\bm{\beta}^{*}_{\hat{\mathcal{M}}}\big)_j}{\sigma\|\tilde{\bm{\eta}}_{\hat{\mathcal{M}}}\|}\right)
  = o_p(1),
  \]
  and the same conclusion holds for the lower truncation
  threshold $L$.
\end{lemma}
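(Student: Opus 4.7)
My plan is to exploit the Lipschitz continuity of the standard normal CDF $\Phi$ (whose derivative is bounded by $1/\sqrt{2\pi}$) to reduce the claim to showing that the two arguments of $\Phi$ differ by $o_p(1)$. Since Lemma \ref{lem:eta} gives $\sigma\|\tilde{\bm{\eta}}_{\hat{\mathcal{M}}}\| = \Theta_p(n^{-1/2})$, this is in turn equivalent to proving that the numerators differ by $o_p(n^{-1/2})$, namely
\[
  \bigl[U(\mathbf{Y}-\hat{\bm{\mu}}_y) - U(\mathbf{Y}-\bm{\mu}_y)\bigr] - \bigl[(\tilde{\bm{\beta}}_{\hat{\mathcal{M}}})_j - (\bm{\beta}^*_{\hat{\mathcal{M}}})_j\bigr] = o_p(n^{-1/2}).
\]
The $\beta$-difference is already $o_p(n^{-1/2})$ by Lemma \ref{lem:beta-tilde-star-random}, so the real work is to control the change in the upper truncation $U$ under the perturbation $\bm{\mu}_y \mapsto \hat{\bm{\mu}}_y$.

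Using the explicit formula from Lemma \ref{lem:lasso-selective-inference}, I would decompose
\[
  U(\mathbf{Y}-\hat{\bm{\mu}}_y) - U(\mathbf{Y}-\bm{\mu}_y) = \tilde{\bm{\eta}}^T_{\hat{\mathcal{M}}}(\bm{\mu}_y - \hat{\bm{\mu}}_y) + \bigl[V(\hat{\bm{\mu}}_y) - V(\bm{\mu}_y)\bigr],
\]
where $V(z)$ denotes the optimization term over the ratios $R_k(z) = (b_k - (\mathbf{A}(\mathbf{Y}-z))_k)/(\mathbf{A}\tilde{\bm{\eta}}_{\hat{\mathcal{M}}})_k$, restricted to $k$ with $(\mathbf{A}\tilde{\bm{\eta}}_{\hat{\mathcal{M}}})_k > 0$. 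The linear piece equals $\mathbf{e}_j^T\tilde{\mathbf{X}}^{\dagger}_{\cdot\hat{\mathcal{M}}}(\bm{\mu}_y - \hat{\bm{\mu}}_y)$ and is $o_p(n^{-1/2})$ by Lemma \ref{lem:replace-lemma-7}.

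For the $V$-difference, I would combine three structural facts. First, since $\mathbf{A}_0\tilde{\bm{\eta}}_{\hat{\mathcal{M}}} = \mathbf{0}$, only the $\mathbf{A}_1$ block contributes to the optimization, and it has at most $|\hat{\mathcal{M}}| \le m$ rows by Assumption \ref{assump:model-size}. Second, for each such $k$, the per-constraint change is $(\mathbf{A}_1(\bm{\mu}_y - \hat{\bm{\mu}}_y))_k / (\mathbf{A}_1\tilde{\bm{\eta}}_{\hat{\mathcal{M}}})_k$, whose numerator equals $-s_k\,[\tilde{\mathbf{X}}^{\dagger}_{\cdot\hat{\mathcal{M}}}(\bm{\mu}_y - \hat{\bm{\mu}}_y)]_k$ and is uniformly $o_p(n^{-1/2})$ by Lemma \ref{lem:replace-lemma-7}. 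Third, the standard inequality $|\min f - \min g| \le \max_k |f_k - g_k|$ restricted to the two argmins reduces the task to controlling the ratio change at a handful of well-chosen indices, for which the denominator can be lower bounded via the spectral estimates of Lemma \ref{lem:eta}.

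The main obstacle is the denominator control in step three. A priori, entries of $\mathbf{A}_1\tilde{\bm{\eta}}_{\hat{\mathcal{M}}} = -\mathrm{diag}(\mathbf{s})(\tilde{\mathbf{X}}^T\tilde{\mathbf{X}})^{-1}\mathbf{e}_j$ could be anomalously small and would naively amplify the $o_p(n^{-1/2})$ numerator perturbation. The fix combines Lemma \ref{lem:kkt} with Assumption \ref{assump:smooth-pivot}: Lemma \ref{lem:kkt} shows that $b_k - (\mathbf{A}_1\mathbf{Y}')_k = |\hat{\beta}_k| \ge 1/(C\sqrt{n})$, so any $k$ with an unusually small denominator produces a large $R_k$ and cannot be the argmin, while Assumption \ref{assump:smooth-pivot} rules out a collapsing truncation interval and pins down the scale of the minimum relative to $\|\tilde{\bm{\eta}}_{\hat{\mathcal{M}}}\|$. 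Together with the spectral bounds of Lemma \ref{lem:eta}, this forces the effective denominator at the argmin to be $\Omega_p(1/n)$, and combining this with the appropriate normalization in the truncation formula yields $V(\hat{\bm{\mu}}_y) - V(\bm{\mu}_y) = o_p(n^{-1/2})$. The argument for the lower threshold $L$ is entirely symmetric.
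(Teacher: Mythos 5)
Your structural setup (decomposing $U$ into the linear piece $\tilde{\bm{\eta}}^T(\mathbf{Y}-z)$ plus the minimum $V(z)$ of the ratios over the $\mathbf{A}_1$ rows, and controlling the numerator perturbations via \Cref{lem:replace-lemma-7}) matches the paper, but your opening reduction loses the essential difficulty and the step meant to repair it does not go through. You reduce the claim, via Lipschitz continuity of $\Phi$ and $\sigma\|\tilde{\bm{\eta}}\| = \Theta_p(n^{-1/2})$, to showing $U(\mathbf{Y}-\hat{\bm{\mu}}_y) - U(\mathbf{Y}-\bm{\mu}_y) = o_p(n^{-1/2})$ in \emph{absolute} terms. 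That requires a lower bound on the denominators $(\mathbf{A}_1\tilde{\bm{\eta}}_{\hat{\mathcal{M}}})_k$ at the argmin indices, and no such bound is available. These denominators are (up to signs) entries of the $j$-th column of $(\tilde{\mathbf{X}}_{\cdot\hat{\mathcal{M}}}^T\tilde{\mathbf{X}}_{\cdot\hat{\mathcal{M}}})^{-1}$; \Cref{lem:eta} controls its eigenvalues and hence its diagonal, but an off-diagonal entry can be positive and arbitrarily small. Your escape — ``a small denominator makes $R_k$ large so it cannot be the argmin'' — fails when \emph{all} active denominators are small: then every $R_k$ is large, $U$ is large, and the argmin still sits over a tiny denominator, so the $o_p(n^{-1/2})$ numerator perturbation is amplified and the absolute change in $U$ need not be $o_p(n^{-1/2})$. \Cref{assump:smooth-pivot} does not rescue this: it is a one-sided \emph{lower} bound on $U-L$, which is perfectly consistent with $U$ (and its perturbation) diverging; it does not ``pin down the scale of the minimum.''

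The paper's proof is built precisely around this obstruction. Using \Cref{lem:kkt} (margin $\ge 1/(C\sqrt{n})$ in each numerator) and \Cref{lem:replace-lemma-7}, it establishes only a \emph{relative} perturbation bound on each ratio, propagates it through the minimum via the elementary inequality $|\min_k b_k - \min_k\tilde{b}_k| \le (\min_k b_k)\max_k|\tilde{b}_k/b_k - 1|$, and arrives at $|A_n - B_n| \le |A_n|\cdot o_p(1) + o_p(1)$ for the normalized arguments — a multiplicative, not additive, control. It then needs a separate argument (the ``probability lemma'': $a\,\phi(ca)$ is bounded on $[1,\infty)$) showing that $\Phi(A_n)-\Phi(B_n)\to 0$ even when $A_n$ diverges, because $\Phi$ is flat in the tails. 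Your proposal is missing both of these ingredients; to complete it you would either have to add an assumption bounding $U/(\sigma\|\tilde{\bm{\eta}}\|)$ from above (which the paper deliberately avoids), or switch to the relative-error-plus-tail-flatness argument.
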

\begin{proof}

  First, we prove an elementary inequality. Suppose $\{b_k\}$ and
  $\{\tilde{b}_k\}$ are two finite sequences of numbers, $b_k \ge 0$ and
  $|\tilde{b}_k - b_k| \le b_k$. Then
  \begin{equation} \label{eq:elementary}
  |\min_k b_k - \min_k \tilde{b}_k| \le (\min_k b_k) \cdot \max_k
  |(\tilde{b}_k/b_k) - 1|.
  \end{equation}
  To prove this, notice that
  \[
  \tilde{b}_k = b_k + b_k ((\tilde{b}_k/b_k) - 1) \ge b_k - b_k \max_k
  |(\tilde{b}_k/b_k) - 1| \ge (\min_k b_k) (1 - \max_k
  |(\tilde{b}_k/b_k) - 1|).
  \]
  Therefore $\min_k \tilde{b}_k - \min_k b_k \ge - (\min_k b_k) \cdot \max_k
  |(\tilde{b}_k/b_k) - 1|$. Conversely, $\min_k \tilde{b}_k - \min_k b_k
  \le \tilde{b}_{k^{*}} - b_{k^{*}} = b_{k^{*}}
  (\tilde{b}_{k^{*}}/b_{k^{*}} - 1) \le b_{k^{*}}
  \max_k|\tilde{b}_{k}/b_{k} - 1|$ where $k^{*} = \argmin_k b_k$.

  Next, we bound the difference between $U(\mathbf{Y} -
  \hat{\bm{\mu}}_y)$ and $U(\mathbf{Y} - \bm{\mu}_y)$. For notational simplicity,
  we suppress the parameters of the selected model
  $(\hat{\mathcal{M}},\hat{\mathbf{s}})$ in $U$ and $\bm{\eta}$.
  \[
  \begin{split}
    &\bigg|\frac{U(\mathbf{Y} - \bm{{\mu}}_y) - U(\mathbf{Y} - \bm{\hat{\mu}}_y)}{\sigma\|\tilde{\bm{\eta}}\|} \bigg| \\
    =&
    \frac{1}{\sigma\|\tilde{\bm{\eta}}\|}
    \bigg|\min_{k:(\mathbf{A}\tilde{\bm{\eta}})_k > 0}
    \frac{b_k - (\mathbf{A}(\mathbf{Y} - \bm{\mu}_y))_k
    }{(\mathbf{A}\tilde{\bm{\eta}})_k} - \min_{k:(\mathbf{A}\tilde{\bm{\eta}})_k > 0}
    \frac{b_k - (\mathbf{A}(\mathbf{Y} - \hat{\bm{\mu}}_y))_k }{(\mathbf{A}\tilde{\bm{\eta}})_k} + \tilde{\bm{\eta}}^T (\hat{\bm{\mu}}_y - {\bm{\mu}}_y) \bigg| \\
    \le&
    \frac{1}{\sigma\|\tilde{\bm{\eta}}\|}
    \bigg|\min_{k:(\mathbf{A}\tilde{\bm{\eta}})_k > 0}
    \frac{b_k - (\mathbf{A}(\mathbf{Y} - \bm{\mu}_y))_k
    }{(\mathbf{A}\tilde{\bm{\eta}})_k} - \min_{k:(\mathbf{A}\tilde{\bm{\eta}})_k > 0}
    \frac{b_k - (\mathbf{A}(\mathbf{Y} - \hat{\bm{\mu}}_y))_k }{(\mathbf{A}\tilde{\bm{\eta}})_k} \bigg| + o_p(1) \\
    \le&
    \frac{1}{\sigma\|\tilde{\bm{\eta}}\|}
    \bigg|\min_{k:(\mathbf{A}\tilde{\bm{\eta}})_k > 0}
    \frac{b_k - (\mathbf{A}(\mathbf{Y} - \hat{\bm{\mu}}_y))_k
    }{(\mathbf{A}\tilde{\bm{\eta}})_k}\bigg| \cdot \max_{k:(\mathbf{A}\tilde{\bm{\eta}})_k > 0} \bigg|\frac{b_k -
      (\mathbf{A}(\mathbf{Y} - {\bm{\mu}}_y))_k}{b_k -
      (\mathbf{A}(\mathbf{Y} - \hat{\bm{\mu}}_y))_k} - 1 \bigg| +
    o_p(1) \\
    =&
    \bigg|\frac{U(\mathbf{Y} - \hat{\bm{{\mu}}}_y)-\tilde{\bm{\eta}}^T(\mathbf{Y} - \hat{\bm{\mu}}_y)}{\sigma\|\tilde{\bm{\eta}}\|}\bigg| \cdot \max_{k:(\mathbf{A}\tilde{\bm{\eta}})_k > 0} \bigg|\frac{b_k -
      (\mathbf{A}(\mathbf{Y} - {\bm{\mu}}_y))_k}{b_k -
      (\mathbf{A}(\mathbf{Y} - \hat{\bm{\mu}}_y))_k} - 1 \bigg| +
    o_p(1) \\
    =&
    \bigg|\frac{U(\mathbf{Y} - \hat{\bm{{\mu}}}_y)-\big(\tilde{\bm{\beta}}_{\hat{\mathcal{M}}}\big)_j
      + \tilde{\bm{\eta}}^T \bm{\epsilon}}{\sigma\|\tilde{\bm{\eta}}\|}\bigg| \cdot \max_{k:(\mathbf{A}\tilde{\bm{\eta}})_k > 0} \bigg|\frac{b_k -
      (\mathbf{A}(\mathbf{Y} - {\bm{\mu}}_y))_k}{b_k -
      (\mathbf{A}(\mathbf{Y} - \hat{\bm{\mu}}_y))_k} - 1 \bigg| +
    o_p(1) \\
    =&
    \bigg|\frac{U(\mathbf{Y} - \hat{\bm{{\mu}}}_y)-\big(\tilde{\bm{\beta}}_{\hat{\mathcal{M}}}\big)_j
    }{\sigma\|\tilde{\bm{\eta}}\|} + O_p(1)\bigg| \cdot \max_{k:(\mathbf{A}\tilde{\bm{\eta}})_k > 0} \bigg|\frac{b_k -
      (\mathbf{A}(\mathbf{Y} - {\bm{\mu}}_y))_k}{b_k -
      (\mathbf{A}(\mathbf{Y} - \hat{\bm{\mu}}_y))_k} - 1 \bigg| + o_p(1) \\
  \end{split}
  \]
  The first inequality uses \Cref{lem:replace-lemma-7} and the second
  inequality uses \eqref{eq:elementary}.

  Using \Cref{lem:kkt,lem:replace-lemma-7}, it is easy to show that
  \[
  \max_{k:(\mathbf{A}\tilde{\bm{\eta}})_k > 0} \bigg|\frac{b_k -
    (\mathbf{A}(\mathbf{Y} - {\bm{\mu}}_y))_k}{b_k -
    (\mathbf{A}(\mathbf{Y} - \hat{\bm{\mu}}_y))_k} - 1 \bigg| = o_p(1).
  \]

  Therefore, using \Cref{lem:beta-tilde-star-random},
  \begin{equation} \label{eq:U-difference}
  \bigg|\frac{U(\mathbf{Y} -
    \hat{\bm{\mu}}_y)-\big(\tilde{\bm{\beta}}_{\hat{\mathcal{M}}}\big)_j}{\sigma\|\tilde{\bm{\eta}}\|}
  -  \frac{U(\mathbf{Y}-\bm{\mu}_y)-\big(\bm{\beta}^{*}_{\hat{\mathcal{M}}}\big)_j}{\sigma\|\tilde{\bm{\eta}}\|}
  \bigg| \le \bigg|\frac{U(\mathbf{Y} - \hat{\bm{{\mu}}}_y)-\big(\tilde{\bm{\beta}}_{\hat{\mathcal{M}}}\big)_j
  }{\sigma\|\tilde{\bm{\eta}}\|} \bigg| \cdot o_p(1) + o_p(1).
  \end{equation}

  Finally, we prove a probability lemma. Let $\{A_n\}$, $\{B_n\}$,
  $\{C_n\}$, $\{D_n\}$ be sequences of random variables such that
  $|A_n - B_n| \le |A_n| C_n + D_n$, $C_n \overset{p}{\to} 0$, $D_n
  \overset{p}{\to} 0$. Then $|\Phi(A_n) - \Phi(B_n)| \overset{p}{\to}
  0$. We prove this result for deterministic sequences (in probability
  convergence is changed to determinstic limit). We only need to
  prove the result for two infinite subsequences of $\{A_n\}$, $\{A_n:A_n\le
  1\}$ and $\{A_n:A_n>1\}$ (if any subsequence is finite then we can
  ignore it). Within the first subsequence, we have $|A_n - B_n| \to 0$ and hence
  $\Phi(A_n) - \Phi(B_n) \to 0$. Within the second subsequence, for
  large enough $n$ we   have $|A_n - B_n| \le A_n/2$, so $|\Phi(A_n) -
  \Phi(B_n)| \le \max(\phi(A_n),\phi(B_n)) |A_n - B_n| \le
  \phi(A_n/2) (|A_n| C_n + D_n) \to 0$, where we
  have used the fact that $\phi(ca) a$ is a bounded function of $a \in
  [1,\infty]$ for any constant $c > 0$.

  Using \eqref{eq:U-difference} and the result above, we have
  \[
  \begin{split}
    &\bigg|\Phi\left(\frac{U(\mathbf{Y} - \hat{\bm{\mu}}_y)-\big(\tilde{\bm{\beta}}_{\hat{\mathcal{M}}}\big)_j}{\sigma\|\tilde{\bm{\eta}}\|}\right)
    -
    \Phi\left(\frac{U(\mathbf{Y}-\bm{\mu}_y)-\big(\bm{\beta}^{*}_{\hat{\mathcal{M}}}\big)_j}{\sigma\|\tilde{\bm{\eta}}\|}\right)
    \bigg| \overset{p}{\to} 0 \\
    % \le&
    % \phi\left(\frac{U(\mathbf{Y} - \bm{{\mu}}_y)-\big(\bm{\beta}^{*}_{\hat{\mathcal{M}}}\big)_j}{\sigma\|\tilde{\bm{\eta}}\|}\right)
    % \cdot \bigg[
    % \bigg|\frac{U(\mathbf{Y} - \bm{{\mu}}_y)-\big(\bm{\beta}^{*}_{\hat{\mathcal{M}}}\big)_j}{\sigma\|\tilde{\bm{\eta}}\|}
    % \bigg| o_p(1) + o_p(1) \bigg] \\
    % = & o_p(1).
  \end{split}
  \]
  as desired.
\end{proof}

Finally we turn to the proof of \Cref{thm:asymptotic-validity}. By
\Cref{lem:lasso-selective-inference}, we have, conditioning on the
event $\big\{\hat{\mathcal{M}}_{\lambda}(\mathbf{Y} - \bm{\mu}_y) =
\mathcal{M}, \hat{\mathbf{s}}_{\lambda}(\mathbf{Y}-\bm{\mu}_y) =
\mathbf{s}\big\}$,
\begin{equation} \label{eq:lasso-pivot-new}
  F\Big(\big(\hat{\bm{\beta}}_{{\mathcal{M}}}(\mathbf{Y} - \bm{\mu}_y)\big)_j;
  \big(\tilde{\bm{\beta}}_{{\mathcal{M}}}\big)_j, \sigma^2 \tilde{\bm{\eta}}_{\mathcal{M}}^T
  \tilde{\bm{\eta}}_{\mathcal{M}},L\big(\mathbf{Y} - \bm{\mu}_y;{\mathcal{M}},{\mathbf{s}}\big),U\big(\mathbf{Y} - \bm{\mu}_y;{\mathcal{M}},{\mathbf{s}}\big)
  \Big) \sim \mathrm{Unif}(0,1),
\end{equation}
To prove \Cref{thm:asymptotic-validity}, we just need to replace
$\bm{\mu}_y$ by $\hat{\bm{\mu}}_y$ and $\tilde{\bm{\beta}}$ by
$\bm{\beta}^{*}$ in the above equation and prove convergence in
distribution. We begin by showing $\mathbf{P}\big(\hat{\mathcal{M}}_{\lambda}(\mathbf{Y} -
\bm{\mu}_y) = \hat{\mathcal{M}}_{\lambda}(\mathbf{Y} -
\hat{\bm{\mu}}_y), \hat{\mathbf{s}}_{\lambda}(\mathbf{Y} -
\bm{\mu}_y) = \hat{\mathbf{s}}_{\lambda}(\mathbf{Y} -
\hat{\bm{\mu}}_y)\big) \to 1$. Recall that the event
$\{\hat{\mathcal{M}}_{\lambda}(\mathbf{Y}) = \mathcal{M},
\hat{\mathbf{s}}_{\lambda}(\mathbf{Y}) = \mathbf{s}\}$ is
characterized by the KKT conditions $\big\{\mathbf{A}_1(\mathcal{M},\mathbf{s}) \bm{Y} \le
\mathbf{b}_1(\mathcal{M},\mathbf{s})\big\}$. By \Cref{lem:kkt}, with
probability tending to $1$ these inequalities are satisfied with a
margin at least $1/(C\sqrt{n})$ for $\mathbf{Y} - \hat{\mu}_y$. By
\Cref{lem:replace-lemma-7}, $\|\mathbf{A}_1(\bm{\mu}_y -
\hat{\bm{\mu}}_y)\|_{\infty} = o_p(n^{-1/2})$ and hence the KKT
conditions are also satisfied for $\mathbf{Y} - \bm{\mu}_y$
with probability tending to $1$. Next, let's write down the pivotal statistic in
\eqref{eq:asymptotic-validity}
\[
\begin{split}
&F\Big(\big(\hat{\bm{\beta}}_{{\hat{\mathcal{M}}}}\big)_j;
  \big({\bm{\beta}}^{*}_{{\hat{\mathcal{M}}}}\big)_j, \sigma^2 \tilde{\bm{\eta}}_{\hat{\mathcal{M}}}^T
  \tilde{\bm{\eta}}_{\hat{\mathcal{M}}},L\big(\mathbf{Y} - \hat{\bm{\mu}}_y;{\hat{\mathcal{M}}},{\mathbf{s}}\big),U\big(\mathbf{Y} - \hat{\bm{\mu}}_y;{\hat{\mathcal{M}}},{\mathbf{s}}\big)
  \Big) \\
&= \frac{\Phi\left(\frac{\big(\hat{\bm{\beta}}_{\hat{\mathcal{M}}}\big)_j-\big({\bm{\beta}}^{*}_{\hat{\mathcal{M}}}\big)_j}{\sigma\|\tilde{\bm{\eta}}\|}\right)
  - \Phi\left(\frac{L(\mathbf{Y} - \hat{\bm{{\mu}}}_y;\hat{\mathcal{M}},\hat{\mathbf{s}})-\big({\bm{\beta}}^{*}_{\hat{\mathcal{M}}}\big)_j}{\sigma\|\tilde{\bm{\eta}}\|}\right)}{  \Phi\left(\frac{U(\mathbf{Y} - \hat{\bm{{\mu}}}_y;\hat{\mathcal{M}},\hat{\mathbf{s}})-\big({\bm{\beta}}^{*}_{\hat{\mathcal{M}}}\big)_j}{\sigma\|\tilde{\bm{\eta}}\|}\right)
  -
  \Phi\left(\frac{L(\mathbf{Y} - \hat{\bm{{\mu}}}_y;\hat{\mathcal{M}},\hat{\mathbf{s}})-\big({\bm{\beta}}^{*}_{\hat{\mathcal{M}}}\big)_j}{\sigma\|\tilde{\bm{\eta}}\|}\right)
}.
\end{split}
\]
By \Cref{lem:denominator}, the denominator of the right hand side is
$\Theta_p(1)$. Therefore using
\Cref{lem:beta-tilde-star-random,lem:replace-lemma-7,lem:U-difference},
we can replace
$\bm{\mu}_y$ by $\hat{\bm{\mu}}_y$ and $\tilde{\bm{\beta}}$ by
$\bm{\beta}^{*}$ in the numerator of the right hand side and show the
difference is $o_p(1)$. Now using \Cref{lem:U-difference}, we can replace
$\bm{\mu}_y$ by $\hat{\bm{\mu}}_y$ and $\tilde{\bm{\beta}}$ by
$\bm{\beta}^{*}$ in the numerator and show the difference again is
$o_p(1)$. Therefore we have proved that
\[
\begin{split}
  &F\Big(\big(\hat{\bm{\beta}}_{{\hat{\mathcal{M}}}}\big)_j;
  \big(\tilde{\bm{\beta}}_{{\hat{\mathcal{M}}}}\big)_j, \sigma^2 \tilde{\bm{\eta}}_{\hat{\mathcal{M}}}^T
  \tilde{\bm{\eta}}_{\hat{\mathcal{M}}},L\big(\mathbf{Y} - \bm{\mu}_y;{\hat{\mathcal{M}}},{\hat{\mathbf{s}}}\big),U\big(\mathbf{Y} - \bm{\mu}_y;{\hat{\mathcal{M}}},{\hat{\mathbf{s}}}\big)
  \Big) \\
  &-   F\Big(\big(\hat{\bm{\beta}}_{\hat{\mathcal{M}}}\big)_j;
  \big(\bm{\beta}^{*}_{\hat{\mathcal{M}}}\big)_j, \sigma^2 \tilde{\bm{\eta}}_{\hat{\mathcal{M}}}^T
  \tilde{\bm{\eta}}_{\hat{\mathcal{M}}},L\big(\mathbf{Y} - \hat{\bm{\mu}}_y;\hat{\mathcal{M}},\hat{\mathbf{s}}\big),U\big(\mathbf{Y} - \hat{\bm{\mu}}_y;\hat{\mathcal{M}},\hat{\mathbf{s}}\big)
  \Big) = o_p(1).
\end{split}
\]
% Notice that this also holds conditioning on any $\mathcal{M}$ and
% $\mathbf{s}$ such that $\mathrm{P}(\hat{\mathcal{M}} =
% \mathcal{M},\hat{\mathbf{s}} = \mathbf{s}) > 0$ (because it hold
% marginally).
Combining this with \eqref{eq:lasso-pivot-new}, we have thus proved the main
Theorem.

% \subsection{Additional numerical results}
% \label{sec:addit-numer-results}

% \[
% \begin{split}
%  &F\Big(\big(\hat{\bm{\beta}}_{{\mathcal{M}}}\big)_j;
%   \big(\bm{\beta}^{*}_{{\mathcal{M}}}\big)_j, \sigma^2 \tilde{\bm{\eta}}_{\mathcal{M}}^T
%   \tilde{\bm{\eta}}_{\mathcal{M}},L\big(\mathbf{Y} - \hat{\bm{\mu}}_y\big),U\big(\mathbf{Y} - \hat{\bm{\mu}}_y\big)
%   \Big) \\
% = & \frac{\Phi\left(\frac{\big(\hat{\bm{\beta}}_{\mathcal{M}}\big)_j-\big(\bm{\beta}^{*}_{\hat{\mathcal{M}}}\big)_j}{\sigma\|\tilde{\bm{\eta}}\|}\right)
%   - \Phi\left(\frac{L(\mathbf{Y} - \bm{\hat{\mu}}_y;\hat{\mathcal{M}},\hat{\mathbf{s}})-\big(\bm{\beta}^{*}_{\hat{\mathcal{M}}}\big)_j}{\sigma\|\tilde{\bm{\eta}}\|}\right)}{  \Phi\left(\frac{U(\mathbf{Y} - \bm{\hat{\mu}}_y;\hat{\mathcal{M}},\hat{\mathbf{s}})-\big(\bm{\beta}^{*}_{\hat{\mathcal{M}}}\big)_j}{\sigma\|\tilde{\bm{\eta}}\|}\right)
%   -
%   \Phi\left(\frac{L(\mathbf{Y} - \bm{\hat{\mu}}_y;\hat{\mathcal{M}},\hat{\mathbf{s}})-\big(\bm{\beta}^{*}_{\hat{\mathcal{M}}}\big)_j}{\sigma\|\tilde{\bm{\eta}}\|}\right)
% }, \\
% \end{split}
% \]

%%% Local Variables:
%%% mode: latex
%%% TeX-master: t
%%% End:

\end{document}